\let\varepsilonilon=\varepsilon 
\newcommand{\RR}{{\mathbb R}}
\newcommand{\CC}{{\mathbb C}}
\newcommand{\eps}{\varepsilon}
\newtheorem{theo}{Theorem}
\newtheorem{prop}{Proposition}[section]	
\newtheorem{defi}[prop]{Definition}
\newtheorem{lemm}[prop]{Lemma}
\newtheorem{corr}[prop]{Corollary}
\newtheorem{rem}{Remark}
\numberwithin{equation}{section}
\DeclareMathOperator{\Spec}{Spec}
\DeclareMathOperator{\tr}{Tr}
\def\indic{\operatorname{1\hskip-2.75pt\relax l}}
\newcommand{\eqal}[1]{\begin{equation}
\begin{aligned}
#1
\end{aligned}
\end{equation}}
\newcommand{\pr}[3]{H(#1|#2)_{#3}}
\newcommand{\hr}[2]{H(#1)_{#2}}
\newcommand{\ket}[1]{\vert #1 \rangle}
\newcommand{\kb}[2]{\left| #1 \vphantom{#2} \right>\left< #2 \vphantom{#1} \right|} 
\newcommand{\proj}[1]{\kb{#1}{#1}} 
\newcommand{\Tr}{\mathrm{Tr}}
\newcommand{\ha}{\hat{H}} 
\newcommand{\hb}{\hat{H}_+} 
\newcommand{\td}{\varepsilon} 
\newcommand{\cH}{\mathcal{H}}
\newcommand{\cA}{\mathbb{N}_0}
\newcommand{\con}[1]{\mathbb{E}\left[f(#1)\right]}
\newcommand{\cont}[1]{\mathbb{E}\left[\tilde{f}(#1)\right]}
\begin{document}

\title{From Classical to Quantum: Uniform Continuity bounds on entropies in Infinite Dimensions}

\author{Simon Becker,~Nilanjana~Datta~and~Michael~G~Jabbour
\thanks{S. Becker is in the Mathematics Department at ETH Zurich (e-mail: simon.becker@math.ethz.ch).

N. Datta is in the Department of Applied Mathematics and Theoretical Physics, Centre for Mathematical Sciences, University of Cambridge, Cambridge CB3 0WA, United Kingdom (e-mail: n.datta@statslab.cam.ac.uk).

M. G. Jabbour is in the Department of Physics, Technical University of Denmark, 2800 Kongens Lyngby, Denmark (e-mail: mgija@dtu.dk).}}

\maketitle

\begin{abstract}
	We prove a variety of improved uniform continuity bounds for entropies of both classical random variables on an infinite state space and of quantum states of infinite-dimensional systems. We obtain the first tight continuity estimate on the Shannon entropy of random variables with a countably infinite alphabet. The proof relies on a new mean-constrained Fano-type inequality. We then employ this classical result to derive a tight energy-constrained continuity bound for the von Neumann entropy.
	To deal with more general entropies in infinite dimensions, \textit{e.g.}~$\alpha$-R\'enyi and $\alpha$-Tsallis entropies, we develop a novel approximation scheme based on operator Hölder continuity estimates.
	Finally, we settle an open problem raised by Shirokov~\cite{S21,S21b} regarding the characterisation of states with finite entropy.
\end{abstract}

\IEEEpeerreviewmaketitle

\section{Introduction}

\IEEEPARstart{I}{t} has been known, at least since the thorough study of entropies in infinite dimensions in \cite{W78}, that the von Neumann entropy is a discontinuous function of density operators with respect to trace distance in any infinite-dimensional Hilbert space. In the same article, it has been observed that continuity can be restored by imposing an additional energy constraint on the density operators. Since then continuity bounds under energy constraints for entropies in infinite dimensions have been widely studied and a particularly comprehensive and practicable list of continuity bounds has been obtained by Winter \cite{W15}. Most noteworthy for our purposes is his Lemma $15$ in which he shows that for density operators satisfying a uniform energy constraint with respect to a Hamiltonian satisfying the so-called \emph{Gibbs hypothesis} (cf.~Def.~\ref{defi:Gibbs}), the von Neumann entropy becomes a continuous function of the density operator. Similar bounds for general $\alpha$-R\'enyi entropies or $\alpha$-Tsallis entropies, in the regime $\alpha\in (0,1)$, are missing and existing continuity bounds are limited to the finite-dimensional case \cite{K,HD}. We also provide estimates in the technically simpler regime for $\alpha>1.$ 

While the derivation of continuity bounds for classical entropies of random variables with infinite state space is interesting in itself, our ultimate goal is to provide new continuity bounds for the entropies of states of infinite-dimensional quantum systems. This would prove particularly useful in the context of continuous-variable (CV) quantum systems 
\textit{e.g.} collections of electromagnetic modes travelling along an optical fibre \cite{Serafini,Weedbrook2011}. The natural Hamiltonian of these systems is the so-called boson {\em{number operator}}. Such systems are of immense technological and experimental relevance since promising proposed protocols for quantum communication and computation rely on them. Consequently they have been the subject of extensive research in recent years. Continuity bounds for quantum entropies of states of CV systems which satisfy an energy constraint with respect to the number operator are of particular importance since they would lead to bounds on optimal communication rates in protocols which employ them.~\\

\noindent \textbf{Shannon and von Neumann entropy.} 
Our first result is concerned with providing a tight version of Winter's bound in \cite{W15} on the difference of von Neumann entropies for two states, when the Hamiltonian imposing an energy constraint on the input states is the number operator. The bound obtained by Winter is asymptotically tight,
see also \cite{Shirokov1,Shirokov2,SBND}. 

In contrast, for any given energy threshold $E$, our bound (cf. Theorem~\ref{theo:ContinuityQuantum}) is tight for all values of the trace distances ($\eps$) such that $\eps \in [0, E/(E+1)]$
. The proof of this new estimate builds upon a new Fano inequality for classical random variables on the natural numbers.
Let $X,Y$ be random variables with finite state space. Fano's inequality relates the conditional Shannon entropy $H(X|Y)$ and the error probability $\mathbb{P}(X \neq Y)$, and is one of the most elementary and yet 
important examples of entropic inequalities which are of fundamental importance in information theory.
If $X$ and $Y$ take values in a finite alphabet $\mathcal{A}$, and $\mathbb{P}(X \neq Y)= \varepsilon$, then Fano's inequality states that
\begin{equation}
    H(X|Y) \leq \varepsilon \log (|\mathcal{A}|- 1) + h(\varepsilon),
\end{equation}
where $h(\varepsilon) \coloneqq -\varepsilon \log \varepsilon - (1-\varepsilon) \log (1-\varepsilon)$ is the binary entropy. Throughout the manuscript $\log$ denotes the natural logarithm.

However, if the alphabet $\mathcal{A}$ is countably infinite, then the above inequality does not hold. In fact, it is even possible
for $H(X|Y)$ to remain non-zero in the limit $\varepsilon \to 0$, see \cite[Ex.2.49]{yeung2008information}. This is a consequence of the discontinuity of the Shannon entropy
for such alphabets. Hence, it is interesting to derive generalized forms of Fano's inequality in this case, 
under suitable constraints, which ensure that $H(X|Y)$ tends  to zero as $\varepsilon \to 0$. One such generalization was obtained by Ho and Verdu~\cite{HoVerdu2010}
in which the marginal $p_X$ of the joint distribution $p_{XY}$ was fixed. See also~\cite{Sakai} for further generalizations.
In this paper we consider a different generalization of Fano's inequality for a countably infinite alphabet, namely, one in which the means of $X$ and $Y$ are 
constrained to be below a prescribed threshold value. Hence, we refer to this inequality as a {\em{mean-constrained Fano's inequality}}.

We then employ this inequality to derive a tight uniform continuity bound for the Shannon entropy of random variables with a countably infinite alphabet by using the notion of maximal coupling, thus extending the work of Sason in~\cite{Sason2013}. Next we use this classical continuity bound to obtain a tight uniform continuity bound on the von Neumann entropy of states of an infinite-dimensional quantum system satisfying an energy constraint, in the case in which the Hamiltonian is the number operator.~\\

\begin{rem}
    After the publication of this article, Maksim Shirokov informed us that in Theorems~\ref{theo:FanoG},~\ref{theo:ContinuityG}, and~\ref{theo:ContinuityQuantumG}, the quantity $E$ needs to be chosen above a sufficiently small non-zero threshold. In a subsequent joint article \cite{BDJS_24}, we obtained sharp estimates for the full energy range and arbitrary trace distances between states, with analogous results in the classical setting, extending the estimates obtained in this work.
\end{rem}

\noindent \textbf{R\'enyi and Tsallis entropies.} We then turn to $\alpha$-R\'enyi- and $\alpha$-Tsallis entropies, for which in infinite dimensional quantum systems\footnote{For finite-dimensional systems, a tight uniform continuity bound for $\alpha$-R\'enyi entropies for $\alpha \in (0,1)$ was proved by Audenaert in~\cite{Audenaert}. } no general continuity bounds have been established so far when $\alpha \in (0,1).$ We also discuss the case that $\alpha>1$ where dimension-independent estimates have only been established for the Tsallis entropy \cite{tsallis,Audenaert,HD}. The corresponding bounds for the $\alpha$-R\'enyi entropy \cite{Audenaert,Chen_2016,HD} are dimension-dependent and therefore do not apply to infinite dimensional quantum systems. 
We would like to emphasize that the study of continuity bounds for $\alpha$-R\'enyi- and $\alpha$-Tsallis entropies is rather different from the von Neumann setting. For example, while the von Neumann entropy is a continuous map on the set of states with uniformly bounded energy with respect to the number operator, this fails to be true for general $\alpha$-R\'enyi and $\alpha$-Tsallis entropies. To illustrate this, consider a state $\rho$ with eigenvalues 
\[ \Spec(\rho)=\left\{ \frac{1}{(i+1)^{1/\alpha}} \frac{1}{\zeta(1/\alpha)} ; i \in \mathbb N_0\right\},\]
where $\zeta $ is the Riemann zeta function.
For $\alpha <1/2$, this state has bounded energy with respect to the number operator $\hat{N}$, as $\tr(\rho \hat{N}) <\infty$ for $\alpha <1/2.$ However, for this range of $\alpha$, $
\tr(\rho^{\alpha})=\infty$ which implies that neither the $\alpha$-R\'enyi entropy nor the $\alpha$-Tsallis entropy exists. Therefore, simple energy constraints by the number operator are in general insufficient to obtain continuity bounds for such entropies and we must use a different approach, as the one we pursue for the von Neumann entropy. Using recent results on operator H\"older continuous functions, we are able to obtain continuity estimates for $\alpha$-R\'enyi and Tsallis entropies. In fact, we identify technical spectral conditions on the Hamiltonian under which an energy constraint by the Hamiltonian gives rise to continuity bounds for any $\alpha \in (0,1)$ in Lemma \ref{lemm:moment_bounds} in the appendix. 

In this article, we obtain such bounds, after first deriving them for discrete and continuous random variables in Section \ref{sec:classicalent}. We do this by outlining a simple procedure in Theorem~\ref{theo:approx_theo}, based on the proof of the gentle measurement lemma \cite[Lemma $9$]{W99}, to obtain continuity bounds for H\"older continuous functions, but also for functions $f$ with different regularity, of density operators in infinite-dimensional Hilbert spaces under more refined energy constraints on the state. For instance, $\alpha$-R\'enyi  and $\alpha$-Tsallis entropies, for $\alpha \in (0,1)$, depend on functions $f_{\alpha}(x)=x^{\alpha}$ of the density operator. For states $\rho,\sigma$, our procedure allows us to obtain bounds on the trace distance $\Vert f_{\alpha}(\rho)-f_{\alpha}(\sigma)\Vert_1$ which easily leads to continuity bounds for entropies. In a nutshell, we identify constraints, cf. Theorem \ref{theo:approx_theo}, under which we can approximate the operator $f(\rho)$ by a finite rank operator. We then utilize a continuity result on the level of the finite-rank operator to derive a continuity estimate for the function of the density operator.~\\

\noindent \textbf{FA-property.} In a recent series of papers \cite{S21,S21b}, Shirokov identified a property that he coined the {\emph{Finite-dimensional Approximation (FA) property}} on density operators. He showed that the set of density operators satisfying FA contains almost all states of finite entropy but left open the following question: {\em{Does any state with finite entropy necessarily satisfy the FA-property?}} We give a negative answer to this question. The strength of the newly introduced FA-property is due to various approximation, continuity, and stability estimates obtained for states satisfying that property in the papers \cite{S21,S21b}. The FA property for a state is equivalent, see \cite[Theo.$1$]{S21} to the existence of a positive semi-definite Hamiltonian $\ha$ with discrete spectrum such that the state has finite energy $\tr(\rho \hat H)<\infty$. In addition, one requires $e^{-\beta \hat H}$ to be a trace-class operator with controlled limiting behaviour $\lim_{\beta \downarrow 0}\left(\tr(e^{-\beta \hat H})\right)^{\beta}=1.$ Using the work by Wehrl \cite{W78}, one could already conclude from this that any such state has finite entropy, but the converse implication was left as an open problem. We address this is open problem in Theorem \ref{theo:FAprop}.~\\

\noindent \textbf{A summary of our main results and layout of the paper.}
Note that through all this paper we consider random variables with infinite state spaces and quantum states of infinite-dimensional systems. 
\begin{itemize}
    \item{\emph{Shannon entropies}:} In Section \ref{sec:Shannon} we consider classical random variables with state space $\mathbb N_0$ and state a tight mean-constrained Fano's inequality in Theorem~\ref{theo:Fanob}. Employing this, we obtain a tight mean-constrained continuity bound for the Shannon entropy of such random variables. This is stated in Theorem~\ref{theo:Continuityb}. 
    \item{\emph{von Neumann entropies}:} In Section \ref{sec:ContinuityQuantum}, we obtain a tight energy-constrained continuity bound for von Neumann entropies of states of infinite-dimensional quantum systems, when the Hamiltonian is the number operator. This is stated in Theorem~\ref{theo:ContinuityQuantum}.
    \item{\emph{Classical $\alpha$-R\'enyi and $\alpha$-Tsallis entropies}:} In Section \ref{sec:classicalent} we derive continuity estimates for the classical  $\alpha$-R\'enyi and $\alpha$-Tsallis entropies for random variables with discrete and continuous state spaces in Corollaries \ref{corr:contdisc} and \ref{corr:contcont}, respectively, without any restrictions on the state space.
    \item{\emph{Quantum $\alpha$-R\'enyi and $\alpha$-Tsallis entropies:}} In Section \ref{sec:QEntropies} we then introduce a general approximation scheme for functions of quantum states in Theorem~\ref{theo:approx_theo} that allows us to obtain continuity estimates for the $\alpha$-R\'enyi and $\alpha$-Tsallis entropies with $\alpha \in (0,1)$ in Corollary \ref{corr:quantcont}. Analogous continuity bounds for the range $\alpha>1$ are given in Proposition~\ref{prop:alphag1}.
    \item{\emph{FA-property}:} The final section is on the FA property. In Theorem~\ref{theo:FAprop} we answer a question raised by Shirokov in \cite{S21,S21b} by showing that there exist states of finite entropy that do not satisfy the FA-property.  
\end{itemize}

\section{Mathematical preliminaries}
\noindent \textbf{Notation}
The countably infinite set of non-negative integers is denoted by $\mathbb N_0$ and the set of strictly positive ones by $\mathbb N$.
 We consider random variables $X,Y$ on some probability space $(\Omega,\Sigma, \mathbb P)$ and, if they are integrable, denote their expectation by $\mathbb E(X).$
For $X,Y$ taking values in a countably infinite state space (also referred to as alphabet) $Z \coloneqq \bigcup_{i \in \mathbb N} \{ z_i \}$ and positive weights $w=(w_i)$, the total variation distance is given by by
\begin{equation}
\begin{aligned}
\Vert X-Y\Vert_{\operatorname{TV}(w)}  & \coloneqq \operatorname{TV}_{(w)}(X,Y) \\
& \coloneqq  \frac{1}{2} \sum_{i \in \mathbb N}w_i \vert \mathbb P(X=z_i) - \mathbb P(Y=z_i)\vert.\label{TV}
\end{aligned}
\end{equation}
The spaces of $p$-summable and $p$-integrable functions with weights $w=(w_i)$ or $w(x)$ are denoted by $\ell^p(w)$ and $L^p(w)$ as usual. In the unweighted case, we just omit the argument $(w).$ 
We denote by $\indic_A$ the indicator function of a set $A.$

We denote by $\mathcal H$ a separable infinite-dimensional Hilbert space. The $p^{th}$ Schatten class on $\mathcal H$ is denoted by $\mathcal S_p$ with norm $\Vert \cdot \Vert_p$, and the operator norm is denoted by $\Vert \cdot \Vert$. In particular, $\mathcal{T}(\mathcal H) \coloneqq \mathcal S_1$ is the Banach space of trace class operators. A state (or density operator) $\rho$ is a positive trace-class operator of unit trace on $\mathcal H$. The trace is denoted by $\tr.$ The spectrum of a linear operator $T$ is denoted by $\Spec(T).$
Let $\rho,\sigma$ be states, the fidelity between them is defined as 
$F(\rho,\sigma) = \Vert \sqrt{\rho} \sqrt{\sigma}\Vert_1$.
The Fuchs-van de Graaf inequality then states that 

\begin{equation}
\label{eq:FvdG}
    1-F(\rho,\sigma) \le \frac{1}{2}\Vert \rho-\sigma\Vert_1 \le \sqrt{1-F^2(\rho,\sigma)}.
\end{equation}

Let $\ha$ be an unbounded positive semi-definite operator and $\rho$ a state. Let $\indic_{[0,n]}(\ha)$ be the spectral projection of $\ha$ onto energies of at most $n$, we then define
\[ \tr(\ha\rho) \coloneqq  \sup_{n \in \mathbb N} \tr\left(\ha \indic_{[0,n]}(\ha) \rho\right) \in [0,\infty].\]
A single mode of a continuous variable quantum system can be described by bosonic annihilation and creation operators, $\hat{a}$ and $\hat{a}^{\dagger}$, respectively. They satisfy the canonical commutation relation $\left[ \hat{a}, \hat{a}^{\dagger} \right] = 1$. The so-called bosonic number operator is then defined as $\hat{N} = \hat{a}^{\dagger} \hat{a}$. In this paper, we focus on infinite-dimensional quantum systems governed by the Hamiltonian $\ha = \hat{N}$.

The entropies considered in this paper are defined in Section \ref{subsec:entropies} below. In addition to them, the binary entropy is defined as $h(\varepsilon) \coloneqq -\varepsilon \log \varepsilon - (1-\varepsilon) \log (1-\varepsilon).$ We also use the functions $f_1(x)=-x\log(x)$ and $f_{\alpha}(x)=x^{\alpha}$ for $\alpha \in (0,1).$

We denote by $C^{\alpha}(I)$ the space of $\alpha$-H\"older continuous functions on $I$. We denote by $\Lambda_{\omega}$ the space of functions continuous with respect to the modulus of continuity $\omega.$ Definitions can be found in Section \ref{sec:QCM}. The integrated modulus of continuity $\omega^*$ is defined in Section \ref{subsec:2.3}.

We write $f=\mathcal O(g)$ to indicate that there is $C>0$ such that $\vert f(x)\vert \le C \vert g(x) \vert$ for all $x$ in the common domain of functions $f$ and $g$. We write $f=o(g)$ as $x$ to $a$ if  $\lim_{x \to a} \left|\frac{f(x)}{g(x)}\right| = 0.$ The error function is denoted by $\operatorname{erf}(z)=\frac{2}{\sqrt{\pi}}\int_0^z e^{-t^2} \ dt.$ The cardinality of a set $A$ is denoted by $\vert A \vert.$

Finally, we recall the concept of {\em{asymptotic tightness:}} As explained, for example in~\cite{S21b}, a continuity bound $\sup_{x,y \in S_a} |f(x) - f(y)|\le C_a(x,y)$ depending on a parameter $a$ (with $S_a$ being a set), is called asymptotically tight for large $a$ if 
$$ \limsup_{a \to \infty} \sup_{x,y \in S_a} \frac{|f(x) - f(y)|}{C_a(x,y)}=1.$$

\subsection{Entropies}
\label{subsec:entropies}
Let $(\Omega,\Sigma, \mathbb P)$ be a probability space. Let $X,Y: \Omega\rightarrow Z$ be discrete random variables, with $Z=\bigcup_{i \in \mathbb N_0} \{z_i\}$ and probabilities $p_X(i) \coloneqq \mathbb P(X=z_i)$ for $i \in \mathbb N_0.$ 
\begin{defi}[Entropies: Discrete random variables]
\label{defi:disc_ent}
The Shannon entropy of $X$ is then defined as 
\[H(X) = -\sum_{i=0}^{\infty} p_X(i) \log(p_X(i)),\]
and for $\alpha\in (0,1) \cup (1,\infty)$, we introduce the $\alpha$-Tsallis entropy
\[T_{\alpha}(X) =\frac{\sum_{i=0}^{\infty} p_X(i)^{\alpha}-1}{1-\alpha}, \]
and $\alpha$-R\'enyi entropy
\[R_{\alpha}(X) =\frac{\log\Big(\sum_{i=0}^{\infty} p_X(i)^{\alpha}\Big)}{1-\alpha}. \]
\end{defi}
Let $p_{XY}(i,j) \coloneqq \mathbb P(X=z_i,Y=z_j)$ be the joint distribution, then the conditional entropy of $X$ given $Y$ is defined as
\[ H(X\vert Y)_p = -\sum_{(i,j) \in \mathbb N_0^2} p_{XY}(i,j) \log(p_{XY}(i,j)/p_Y(i)).\]

\begin{defi}[Quantum entropies]
\label{defi:q_ent}
The von Neumann entropy of a state $\rho$ of an infinite-dimensional quantum system with Hilbert space $\mathcal{H}$ is defined as 
\[S(\rho) = -\tr(\rho\log(\rho)),\]
and for $\alpha\in (0,1)$, we introduce the quantum $\alpha$-Tsallis entropy
\[T_{\alpha}(\rho) =\frac{\tr(\rho^{\alpha})-1}{1-\alpha}, \]
and the quantum $\alpha$-R\'enyi entropy
\[R_{\alpha}(\rho) =\frac{\log\tr\Big(\rho^{\alpha}\Big)}{1-\alpha}. \]
\end{defi}

\subsection{Quantitative continuity measures for functions} 
\label{sec:QCM} A general quantitative measure of continuity for functions is the so-called \emph{modulus of continuity} $\omega:[0,\infty] \to [0,\infty]$. 
We say that a function $f:(X,\vert \cdot \vert_X) \to (Z,\vert \cdot \vert_Z)$, where $X,Z$ are subsets of a normed space, admits $\omega$ as a modulus of continuity if $\vert f\vert_{\Lambda_{\omega}} \coloneqq \sup_{x\neq y} \frac{\vert f(x)-f(y)\vert_Z}{\omega(\vert x-y \vert_X)}$ is finite, in which case 
\[ \vert f(x)-f(y)\vert_Z \le \vert f\vert_{\Lambda_{\omega}} \omega (\vert x-y\vert_X) \text{ for all }x,y \in X.\]
The function $\omega$ is assumed to be monotonically-increasing, positive-definite, and subadditive. The vector space of such functions $f$ for which $\vert f \vert_{\Lambda_{\omega}}$ is finite is denoted by $\Lambda_{\omega}.$
The space of functions $C^{\alpha} \coloneqq \Lambda_{\omega_{\alpha}}$, with $\alpha \in (0,1)$ and modulus of continuity $\omega_{\alpha}(t) =t^{\alpha}$ is called the space of \emph{$\alpha$-H\"older continuous functions}.
We also introduce the space of functions $\Lambda_{\operatorname{AL}}$ that are characterized by a modulus of continuity $$\omega_{\operatorname{AL}}(t) = \begin{cases} -t\log(t) &\text{ for }t \in [0,1/e] \text{ and } \\
1/e &\text{ otherwise. }\end{cases}.$$ 
This is the class of local \emph{almost Lipschitz functions}. Here, we employ a cut-off at $t=1/e$, which is the $t$ at which $-t\log(t)$ attains its maximum, as the maximum distance between discrete probability distributions in total variation distance and states in trace norm is always bounded by two.
Many functions related to entropies fall in some of these two spaces $C^{\alpha}$ or $\Lambda_{\operatorname{AL}}$. 

The function $f_1(x)= -x\log(x)$, related to the Shannon entropy is almost Lipschitz and the functions associated with $\alpha$-R\'enyi entropies $f_{\alpha}(x)=x^{\alpha} \in C^{\gamma}([0,\tau])$ for $\gamma \le \alpha<1, \tau>0$ are H\"older continuous. 
Indeed, that $\omega_{\operatorname{AL}}$ is a modulus of continuity for $f_1$ follows from \cite[Theo. 17.3.3]{CT06}
\begin{equation}
\label{eq:almostlip}
\vert f_1 (x) - f_1 (y) \vert \le f_1(\vert x-y\vert) \text{ for all }x,y \in [0,1/e].
\end{equation}

For functions $f_{\alpha}$ we find that since $\vert x^{\alpha}-y^{\alpha}\vert \le \vert x-y\vert^{\alpha}$ then $\vert f_{\alpha}\vert_{\Lambda_{\omega_{\alpha}}} = 1.$
Since $f_1$ is smooth on $[1/e,\infty)$ this clearly implies that $f_1$ is almost Lipschitz and therefore in particular H\"older continuous. That almost Lipschitz functions are always H\"older continuous follows from the limit $\lim_{t \to 0}\omega_{\text{AL}}(t)/t^{
\alpha}=0.$

\subsection{Quantitative continuity measures for functions of operators} 
\label{subsec:2.3}
The analysis of continuity estimates for functions of self-adjoint operators is more subtle as the regularity of functions is usually not preserved at the operator level. By this we mean that for instance Lipschitz functions $f$ are in general not operator Lipschitz, \textit{i.e.} functions $f:\mathbb R \rightarrow \mathbb R$ for which there is $C>0$ such that 
\[ \vert f(x)-f(y) \vert \le C \vert x-y\vert \text{ for all }x,y \in \mathbb R \] do not satisfy $\Vert f(A)-f(B) \Vert \le C' \Vert A-B\Vert$ for bounded self-adjoint $A,B$ for any $C'>0.$ An example of a function that is Lipschitz but not operator Lipschitz is the function $f(x)=\vert x \vert.$

However, the study of continuity estimates, for functions in $\Lambda_{\omega}$, can be transformed to estimate functions of bounded self-adjoint operators. This has been established in a series of papers starting from \cite{AP,AP1}. The figure of merit for estimates on the operator level is then the \emph{integrated modulus of continuity} $\omega^*(t)=t \int_t^{\infty} \frac{\omega(x)}{x^2} \ dx,$ with $t \ge 0$ which for the cases we considered before reads
\begin{equation}
\omega^*_{\alpha}(t) = \frac{t^{\alpha}}{1-\alpha}
\end{equation}
and
\begin{equation}
\begin{split}
\omega^*_{\operatorname{AL}}(t) =\begin{cases} &\Big(e^2-\frac{1}{2}\Big)t  + \frac{\log(t)^2t}{2}, \text{ for } t \in [0,1/e] \text{ and }\\
&e, \text{ otherwise}.\end{cases}
\end{split}
\end{equation}
Details on the specifics of the continuity bounds obtained in \cite{AP,AP1} are provided in the beginning of Section \ref{sec:QEntropies}.

\subsection{Analytical background}
\begin{prop}[Courant-Fisher]
\label{prop:CFT}
 Let $\hat H$ be a self-adjoint operator that is bounded from below with purely discrete spectrum.
Let $E_1\le E_2\le E_3\le\cdots$ be the eigenvalues of $\hat H$; then 
$$E_n=\min_{\psi_1,\ldots,\psi_{n}}\max\{\langle\psi,\hat H\psi\rangle:\psi\in\operatorname{lin}\{\psi_1,\ldots,\psi_{n}\}, \, \| \psi \| = 1\},$$
where $\operatorname{lin}$ denotes the linear hull. 
In particular, this implies that for $\pi_1,..,\pi_n$ being the first $n$ eigen-projections of $\hat H$, counting multiplicity, and $\tilde \pi_1,..,\tilde \pi_n$ any other mutually orthogonal one-dimensional projections, then we have for any $\lambda_1\ge...\ge\lambda_n \ge 0$ that 
\[ \tr\Big(\hat H\sum_{i=1}^n \lambda_i \pi_i\Big) \le \tr\Big(\hat H\sum_{i=1}^n \lambda_i \tilde{\pi_i}\Big).\]
\end{prop}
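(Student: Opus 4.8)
The statement has two parts: the variational (min-max) characterization of eigenvalues, and the trace inequality that follows from it. Let me plan how to prove each.

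For the min-max formula, I'd use the standard approach. Let me denote by $\psi_1, \psi_2, \ldots$ the orthonormal eigenvectors with $\hat{H}\psi_k = E_k \psi_k$.

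**Upper bound direction:** For the specific choice $\psi_1, \ldots, \psi_n$ equal to the first $n$ eigenvectors, any unit vector $\psi \in \operatorname{lin}\{\psi_1, \ldots, \psi_n\}$ satisfies $\langle \psi, \hat{H}\psi\rangle = \sum_{k=1}^n |c_k|^2 E_k \le E_n \sum|c_k|^2 = E_n$. So the inner max is $\le E_n$, giving the outer min $\le E_n$.

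**Lower bound direction:** For *any* choice of $\psi_1, \ldots, \psi_n$, I need to show the inner max is $\ge E_n$. The trick is dimension counting: the span $V = \operatorname{lin}\{\psi_1, \ldots, \psi_n\}$ has dimension $\le n$, while the spectral subspace $W = \operatorname{Ran}\,\indic_{[0, E_n]}(\hat{H})$... actually I want the subspace spanned by eigenvectors with eigenvalue $\ge E_n$. Let $W = \operatorname{Ran}\,\indic_{[E_n, \infty)}(\hat{H})$, which has codimension $\le n-1$ (since only $E_1, \ldots, E_{n-1}$ are strictly below $E_n$). Then $\dim V + \dim W$... by a dimension argument, $V \cap W \ne \{0\}$, and any unit vector there gives $\langle\psi, \hat{H}\psi\rangle \ge E_n$, so the inner max is $\ge E_n$. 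Taking the min preserves this.

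The main subtlety here is the infinite-dimensional setting: $\hat{H}$ is unbounded with discrete spectrum, so I must be careful that the $\psi_k$ lie in the domain and that the spectral calculus is valid. Since the spectrum is purely discrete and bounded below, the eigenvectors form an orthonormal basis and spectral projections are well-defined; I'd work on the domain $\operatorname{Dom}(\hat{H})$ and note $\langle\psi, \hat{H}\psi\rangle = +\infty$ is allowed for $\psi \notin \operatorname{Dom}$, which only helps the max.

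**For the trace inequality:** This is the key application. I want to show that putting weights $\lambda_1 \ge \cdots \ge \lambda_n \ge 0$ on the *first* $n$ eigenprojections minimizes $\tr(\hat{H} \sum \lambda_i \tilde\pi_i)$ over orthogonal families $\tilde\pi_i$. I'd use Abel summation. Write $\sum_{i=1}^n \lambda_i \tilde\pi_i$ and use $\lambda_i = \sum_{j=i}^{n}(\lambda_j - \lambda_{j+1}) + \lambda_{n+1}$ with $\lambda_{n+1} := 0$; equivalently, express the sum as $\sum_{k=1}^n (\lambda_k - \lambda_{k+1}) P_k$ where $P_k = \sum_{i=1}^k \tilde\pi_i$ is a rank-$k$ projection and $\lambda_k - \lambda_{k+1} \ge 0$. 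Then $\tr(\hat{H}\sum\lambda_i\tilde\pi_i) = \sum_k (\lambda_k - \lambda_{k+1}) \tr(\hat{H} P_k)$. So it suffices to show, for each $k$, that $\tr(\hat{H} P_k) \ge \sum_{i=1}^k E_i$, with equality when $P_k$ projects onto the first $k$ eigenvectors. This rank-$k$ statement is exactly the trace version of Courant–Fischer (Ky Fan's maximum principle, lower form), which follows from the min-max formula by induction on $k$, or directly: among rank-$k$ projections, $\tr(\hat{H}P_k)$ is minimized by the bottom eigenspace.

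The main obstacle I anticipate is the rank-$k$ trace bound $\tr(\hat{H}P_k) \ge \sum_{i=1}^k E_i$ in infinite dimensions — making the dimension-counting argument rigorous when $\hat{H}$ is unbounded and ensuring convergence of traces. I would handle this by choosing an orthonormal basis $e_1, \ldots, e_k$ of $\operatorname{Ran}(P_k)$ and writing $\tr(\hat{H}P_k) = \sum_{m=1}^k \langle e_m, \hat{H} e_m\rangle$, then applying the min-max characterization inductively together with the dimension argument above; the positivity $\lambda_i \ge 0$ and monotonicity guarantee all the Abel-summation coefficients are nonnegative so no cancellation spoils the inequality.
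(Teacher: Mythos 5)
The paper gives no proof of Proposition~\ref{prop:CFT} at all: it is quoted as classical analytical background, and its only role is to be applied (in the proof of Theorem~\ref{theo:ContinuityQuantum} and in Lemma~\ref{lemm:moment_bounds}) with rank-one projections. So there is nothing to compare against line by line; what you propose is the standard textbook route, and it is sound. The upper bound by testing on the span of the first $n$ eigenvectors, the lower bound by the intersection argument with $W=\operatorname{Ran}\indic_{[E_n,\infty)}(\hat H)$ (codimension $\le n-1$ because the spectrum is purely discrete), and the reduction of the trace inequality by Abel summation, $\sum_{i=1}^n\lambda_i\tilde\pi_i=\sum_{k=1}^n(\lambda_k-\lambda_{k+1})P_k$ with $\lambda_{n+1}:=0$ and $P_k=\sum_{i\le k}\tilde\pi_i$, to the rank-$k$ bound $\tr(\hat H P_k)\ge\sum_{i=1}^k E_i$ are all correct; the latter bound follows by your induction (split off a unit vector of $\operatorname{Ran}(P_k)\cap\operatorname{Ran}\indic_{[E_k,\infty)}(\hat H)$, pass to its orthogonal complement inside $\operatorname{Ran}(P_k)$). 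Note that the paper itself hints at this ingredient: the footnote in the proof of Theorem~\ref{theo:ContinuityQuantum} says the passive-state energy comparison ``can also be proved using Ky Fan's Maximum Principle,'' which is exactly the rank-$k$ statement you reduce to. Your handling of the unbounded setting (vectors outside the form domain contribute $+\infty$, which only helps) is also fine.

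Two conventions must be made explicit, since as literally stated both parts of the proposition are false without them, and your argument silently uses them. First, the minimum in the min-max formula must run over \emph{linearly independent} tuples $\psi_1,\ldots,\psi_n$ (equivalently, over $n$-dimensional subspaces): if $\psi_1=\cdots=\psi_n$ is the ground state, the inner maximum equals $E_1<E_n$. Your dimension-counting step needs $\dim V=n$, so state this convention rather than only recording $\dim V\le n$. Second, your Abel-summation step uses that each $P_k$ is a rank-$k$ projection, i.e.\ that the $\tilde\pi_i$ are rank one. This is not merely cosmetic: since $\hat H$ is only bounded below, the asserted inequality can fail for higher-rank $\tilde\pi_i$ (take $n=1$, $\lambda_1=1$, $\tilde\pi_1$ the projection onto the first two eigenvectors, and $E_2<0$; then the right-hand side is $E_1+E_2<E_1$). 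Rank one is clearly the intended reading (``counting multiplicity,'' and it is how the proposition is used in the paper), but your proof should say so. With these two conventions fixed, your plan is complete.
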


We also recall the following simple fact:
\begin{lemm}
\label{lemm:auxlemm}
Let $f:(0,1) \rightarrow [0,\infty)$ be a measurable function such that $\int_0^1 \frac{f(t)}{t} dt < \infty$. Then there is a sequence of $t_n \downarrow 0$ such that $f(t_n) \downarrow 0.$ 
\end{lemm}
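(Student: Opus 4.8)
The plan is to reduce everything to the single assertion that $\liminf_{t\to 0^+} f(t)=0$, from which the monotone sequence follows by elementary manipulations. First I would prove this liminf claim by contradiction. Suppose instead that $\liminf_{t\to 0^+} f(t)=c>0$. Then there is $\delta\in(0,1)$ with $f(t)\ge c/2$ for every $t\in(0,\delta)$, and since $f\ge 0$ we may discard the tail on $[\delta,1]$ to obtain
$$\int_0^1 \frac{f(t)}{t}\,dt \ \ge\ \int_0^\delta \frac{f(t)}{t}\,dt \ \ge\ \frac{c}{2}\int_0^\delta \frac{dt}{t}\ =\ +\infty,$$
which contradicts the hypothesis $\int_0^1 \frac{f(t)}{t}\,dt<\infty$. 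Hence $\liminf_{t\to 0^+} f(t)=0$. The key point here is simply the comparison of $f(t)/t$ with the logarithmically divergent integrand $1/t$ at the origin.

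Next I would construct a strictly decreasing sequence $s_n\downarrow 0$ with $f(s_n)\to 0$. Proceeding by induction: having chosen $s_1>\cdots>s_n$, the fact that $\liminf_{t\to 0^+}f(t)=0$ guarantees a point $s_{n+1}\in\bigl(0,\min(s_n,1/(n+1))\bigr)$ with $f(s_{n+1})<1/(n+1)$. By construction $s_n$ is strictly decreasing with $s_n<1/n$, so $s_n\downarrow 0$, and $0\le f(s_n)<1/n$ forces $f(s_n)\to 0$.

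Finally I would upgrade the convergence $f(s_n)\to 0$ to monotonicity by passing to a subsequence. Every sequence of reals contains a monotone subsequence (the peak-point argument), so $(f(s_n))_n$ admits a monotone subsequence $(f(s_{n_k}))_k$, which still converges to $0$. If it is non-increasing we are done; if it is non-decreasing, then each of its terms lies in $[0,0]$ (being $\ge 0$ yet bounded above by its limit $0$), so the subsequence is identically $0$ and is in particular non-increasing. Setting $t_k:=s_{n_k}$ yields a strictly decreasing null sequence $t_k\downarrow 0$ with $f(t_k)\downarrow 0$, as required. The argument is entirely elementary; the only subtlety is the treatment of the degenerate non-decreasing case in this last step, which is handled by the nonnegativity of $f$.
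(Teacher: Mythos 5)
Your proof is correct and takes essentially the same approach as the paper's: the heart of both arguments is the contradiction obtained by noting that if $f$ were bounded below by some $\epsilon>0$ near $0$, then $\int_0^\delta \frac{f(t)}{t}\,dt \ge \epsilon\int_0^\delta \frac{dt}{t} = \infty$. The paper states this contradiction in two lines and leaves implicit the step you carry out explicitly — extracting from $\liminf_{t\to 0^+}f(t)=0$ a sequence $t_n\downarrow 0$ with $f(t_n)\downarrow 0$ — so your version simply fills in details the paper glosses over.
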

\begin{proof}
If such a sequence does not exist, then $f(t) \ge \epsilon>0$ for $t\in (0,\delta)$ and $\delta>0$. Thus, $\int_0^1 \frac{f(t)}{t} \ dt \ge \int_0^{\delta} \frac{\epsilon}{t} \ dt = \infty$
which contradicts our assumption.
\end{proof}

\section{Main results}

\subsection{Shannon entropies}
\label{sec:Shannon}
\subsubsection{A mean-constrained Fano's inequality}

Let $(\Omega, \Sigma,\mathbb P)$ be a probability space. Let $X,Y : \Omega \rightarrow \mathbb N_0$ be a pair of random variables.
Let $\mathcal{P}$ denote the set of joint probability distributions $p_{XY}$ on $\mathbb N_0^2$. 
Since the alphabet is infinite, we need to impose a constraint on the random variables in order for the entropies of the distributions in $\mathcal{P}$ to be finite. 
We choose a constraint on the means of the marginals, \textit{i.e}, for $p_{XY} \in \mathcal{P}$,
\begin{equation} \label{eq:defMomentConstraint}
	\mathbb{E}(X) = \sum_{n=0}^{\infty} n p_X(n) \leq E, \quad 	\mathbb{E}(Y) = \sum_{m=0}^{\infty} m p_Y(m) \leq E, 
\end{equation}
for some finite $E > 0$. 

\begin{theo}[Mean-constrained Fano's inequality on $\mathbb N_0$] \label{theo:Fanob}
Let $X$ and $Y$ be a pair of random variables taking values in $\mathbb N_0$, with joint probability distribution $p_{XY}$, satisfying the 
constraint ${\mathbb{E}}(X) \le E$ for some $0<E< \infty$.
Then for all $\varepsilon \in [0,E/(E+1)]$ the following inequality holds:
\begin{equation}\label{F-theob}
\pr{X}{Y}{p} \leq h(\varepsilon) + E h({\varepsilon}/E),
\end{equation}
where $\varepsilon  \coloneqq  \mathbb{P}(X \ne Y)$. Furthermore, for any given $E$, the inequality is tight for all $\varepsilon \in [0,E/(E+1)]$.
\end{theo}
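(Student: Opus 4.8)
The plan is to prove the upper bound by a maximum-entropy argument and to obtain tightness by exhibiting the extremizer explicitly.

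First I would decompose $\pr{X}{Y}{p}$ using the error indicator $U=\indic_{\{X\neq Y\}}$. Since $U$ is a deterministic function of $(X,Y)$ and $X=Y$ on $\{U=0\}$, the chain rule gives, with $\varepsilon_y=\mathbb{P}(X\neq y\mid Y=y)$ and $S_y$ the entropy of the conditional law of $X$ given $\{Y=y,\,X\neq y\}$,
\[ \pr{X}{Y}{p}=\sum_y p_Y(y)\,h(\varepsilon_y)+\sum_y p_Y(y)\,\varepsilon_y\,S_y,\qquad \sum_y p_Y(y)\,\varepsilon_y=\varepsilon. \]
For the first sum, concavity of $h$ and Jensen's inequality give $\sum_y p_Y(y)h(\varepsilon_y)\le h(\varepsilon)$, with equality when all $\varepsilon_y$ coincide (e.g. $Y$ deterministic). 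For the $S_y$ terms I would invoke the maximum-entropy property of the geometric law: the conditional law of $X$ lives on $\mathbb N_0\setminus\{y\}$, and its pushforward under the order-isomorphism $\mathbb N_0\setminus\{y\}\to\mathbb N_0$ preserves entropy while not increasing the mean, so $S_y$ is controlled by the maximal entropy $g(m)=(m+1)\log(m+1)-m\log m$ of a law on $\mathbb N_0$ of prescribed mean. The crucial gain — the shift distinguishing the tight bound $E\,h(\varepsilon/E)=\varepsilon\,g(E/\varepsilon-1)$ from the naive $\varepsilon\,g(E/\varepsilon)$ — comes from the exclusion $X\neq Y$, which in the extreme case $y=0$ forces the error mass onto $\{1,2,\dots\}$ and lowers the relevant mean by exactly one.

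The delicate step, which I expect to be the main obstacle, is to combine the two sums under the single budget $\mathbb{E}[X]=\sum_y p_Y(y)\big[(1-\varepsilon_y)y+\varepsilon_y\nu_y\big]\le E$ (where $\nu_y=\mathbb{E}[X\mid Y=y,X\neq y]$) together with $\sum_y p_Y(y)\varepsilon_y=\varepsilon$. One cannot bound the two families of terms separately: replacing each $S_y$ by the mean-constrained geometric entropy and applying Jensen reduces the target to $\mathbb{P}(X<Y)\le\mathbb{E}\!\left[X\,\indic_{\{X=Y\}}\right]$, which is false in general (the right side may vanish while the left is positive). The point is that precisely when errors point "downward" ($X<Y$), so that little budget is spent, the inequality $\sum_y p_Y(y)h(\varepsilon_y)\le h(\varepsilon)$ is strict and compensates. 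Hence one must maximize jointly, either by a Lagrange/KKT analysis of the concave functional $\pr{X}{Y}{p}$ over the convex feasible set, or by a rearrangement that transfers all mass of $Y$ to $0$ (sending correct outcomes to $X=0$ and error outcomes into $\{1,2,\dots\}$ via the order-isomorphism), using concavity of $h$, concavity of entropy under mixing, and a careful control of the induced mean; the success of this rearrangement hinges on exactly the accounting above, which is why it is subtle. Either route reduces the problem to the one-dimensional case $Y\equiv 0$.

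Finally, for tightness one takes $Y\equiv 0$, so that $\pr{X}{Y}{p}=H(X)$ and $\varepsilon=\mathbb{P}(X\neq 0)$, and maximizes $H(X)$ over laws on $\mathbb N_0$ with $\mathbb{P}(X=0)=1-\varepsilon$ and $\mathbb{E}[X]\le E$. The maximizer puts mass $1-\varepsilon$ at $0$ and spreads the rest as a geometric law on $\{1,2,\dots\}$ of mean $E/\varepsilon$; the identity $\varepsilon\,g(E/\varepsilon-1)=E\,h(\varepsilon/E)$ then yields exactly $h(\varepsilon)+E\,h(\varepsilon/E)$. This extremizer exists iff its mode probability $1-\varepsilon$ is at least the mode probability $1/(E+1)$ of the mean-$E$ geometric on $\mathbb N_0$, i.e. iff $\varepsilon\le E/(E+1)$, which is precisely the stated range of tightness.
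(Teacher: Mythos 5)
Your setup is sound and your diagnosis is sharp: the Fano decomposition via the error indicator, Jensen for the $\sum_y p_Y(y)h(\varepsilon_y)$ term, the geometric maximum-entropy bound for the $S_y$ terms, the observation that the ``$-1$'' shift in the mean is what produces $E\,h(\varepsilon/E)$, and the tightness example (which is exactly the paper's) are all correct; so is your reduction of the naive term-by-term bound to the false inequality $\mathbb{P}(X<Y)\le\mathbb{E}\bigl[X\,\indic_{\{X=Y\}}\bigr]$. But the proposal has a genuine gap at precisely the point you yourself flag as ``the main obstacle'': you assert that either a Lagrange/KKT analysis or a rearrangement ``reduces the problem to the one-dimensional case $Y\equiv 0$,'' and you carry out neither. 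The KKT route is far from routine here --- the optimization is over an infinite-dimensional set of joint laws, existence of a maximizer is not automatic (entropy is not weakly continuous and $\ell^1$-balls are not weakly compact), and one must still show the optimum has deterministic $Y$, which is essentially the rearrangement in disguise. The rearrangement route is the actual content of the theorem: one needs an explicit mass transfer that simultaneously (i) does not decrease the conditional entropy $H(X|Y)$, (ii) does not increase $\mathbb{E}(X)$, and (iii) does not increase the error parameter, and none of these is exhibited or verified in your sketch. As written, the upper bound \eqref{F-theob} is not proved.

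For comparison, the paper fills exactly this gap with an elementary swap that modifies $X$ while leaving the $Y$-marginal untouched (rather than transferring the mass of $Y$ to $0$, as you propose). Let $Z=\{m: p_{XY}(m,m)\ge p_{XY}(0,m)\}$ and define $p_{X'Y}$ by exchanging the entries $p_{XY}(m,m)\leftrightarrow p_{XY}(0,m)$ for $m\in Z$. Since this permutes entries within each column of the joint pmf, $H(X'Y)=H(XY)$ and $p_Y$ is unchanged, hence $H(X'|Y)=H(X|Y)$; the defining inequality of $Z$ gives $\mathbb{E}(X')\le\mathbb{E}(X)\le E$; and after the swap $\varepsilon'\coloneqq 1-p_{X'}(0)\le\mathbb{P}(X\ne Y)=\varepsilon$. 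Now ``conditioning decreases entropy'' gives $H(X|Y)=H(X'|Y)\le H(X')$, and your one-dimensional argument applies verbatim to $X'$: writing $H(X')=h(\varepsilon')+\varepsilon'H(R)$ with $R$ the unit-shifted conditional law on $\{1,2,\dots\}$, one has $\mathbb{E}(R)\le E/\varepsilon'-1$, so $H(R)\le \frac{E}{\varepsilon'}h(\varepsilon'/E)$, and monotonicity of $\varepsilon'\mapsto h(\varepsilon')+E h(\varepsilon'/E)$ on $[0,E/(E+1)]$ finishes. In short, your outline identifies the right one-dimensional endgame and the right obstruction, but the swap construction --- the step that legitimizes dropping the conditioning --- is the missing piece, and it is where all the work of the theorem lies.
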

\begin{proof}
Let us define the set $Z \coloneqq \{ m \in \mathbb N_0 : p_{XY}(m,m) \geq p_{XY}(0,m) \}$ and the new random variable $X'$ such that the joint probability distribution $p_{X'Y}$ is defined as follows: for all $m \in \mathbb N_0$,
\begin{equation*}
	p_{X'Y}(0,m)  \coloneqq  \begin{cases}
		p_{XY}(m,m), & \forall \, m \in Z, \\
		p_{XY}(0,m), & \mathrm{else},
	\end{cases}
\end{equation*}
for all $m \in \mathbb N$,
\begin{equation*}
	p_{X'Y}(m,m)  \coloneqq  \begin{cases}
		p_{XY}(0,m), & \forall \, m \in Z, \\
		p_{XY}(m,m), & \mathrm{else},
	\end{cases}
\end{equation*}
and $p_{X'Y}(n,m)  \coloneqq  p_{XY}(n,m)$, otherwise.
First note that $\mathbb E(X')\le \mathbb E(X) \le E$ since
\eqal{ \label{upxpb}
    \mathbb{E}(X') & = \sum_{n \in \mathbb N_0} n p_{X'}(n)  = \sum_{n \in \mathbb N_0} n \sum_{m \in \mathbb N_0} p_{X'Y}(n,m) \\
    & = \sum_{n \in Z} n \, p_{X'Y}(n,n) + \sum_{n \in \mathbb N_0 \setminus Z} n \, p_{X'Y}(n,n) \\
    & \quad \quad + \sum_{n \in \mathbb N_0} n \sum_{\substack{m \in \mathbb N_0 \\ m \ne n}} p_{X'Y}(n,m) \\
    & = \sum_{n \in Z} n \, p_{XY}(0,n) + \sum_{n \in \mathbb N_0 \setminus Z} n \, p_{XY}(n,n) \\
    & \quad \quad + \sum_{n \in \mathbb N_0} n \sum_{\substack{m \in \mathbb N_0 \\ m \ne n}} p_{XY}(n,m) \\
    & \leq \sum_{n \in Z} n \, p_{XY}(n,n) + \sum_{n \in \mathbb N_0 \setminus Z} n \, p_{XY}(n,n) \\
    & \quad \quad + \sum_{n \in \mathbb N_0} n \sum_{\substack{m \in \mathbb N_0 \\ m \ne n}} p_{XY}(n,m)\\
    &=\mathbb{E}(X)  \leq E,
}
where the first inequality follows from the definition of the set $Z$, while the second follows from the constraint on the mean: ${\mathbb{E}}(X) \le E$.
Secondly, note that $H(X'|Y) = H(X|Y)$, since $H(X'Y) = H(XY)$, which can be easily checked. 
 
Using the fact that conditioning decreases entropy, we end up with
\begin{equation} \label{eqfb}
     H(X|Y) = H(X'|Y) \leq H(X').
\end{equation}
Hence, to complete the proof, it suffices to find an upper bound on $H(X')$.
Define $\varepsilon' \coloneqq 1 - p_{X'}(0)$ and note that
\eqal{
    \varepsilon' & = 1 - \sum_{m \in \mathbb N_0} p_{X'Y}(0,m) \\
    & = 1 - \sum_{m \in Z} p_{X'Y}(0,m)- \sum_{m \in \mathbb N_0 \setminus Z} p_{X'Y}(0,m) \\
    & = 1 - \sum_{m \in Z} p_{XY}(m,m) - \sum_{m \in \mathbb N_0 \setminus Z} p_{XY}(0,m).
}
Since $p_{XY}(m,m) \geq p_{XY}(0,m)$ if and only if $m \in Z$, we have that $p_{XY}(m,m) < p_{XY}(0,m)$ for all $m \in \mathbb N_0 \setminus Z$, so that
\eqal{
\varepsilon' & \le 1 - \sum_{m \in Z} p_{XY}(m,m) - \sum_{m \in \mathbb N_0 \setminus Z} p_{XY}(m,m) \\
& = 1 - \sum_{m \in \mathbb N_0} p_{XY}(m,m) = \mathbb{P}(X\ne Y) = \varepsilon,
}
which in turn implies that $\sum_{n=1}^{\infty} p_{X'}(n) = \varepsilon' \le \varepsilon$.
Hence,
\eqal{ \label{eqlb}
	H(X') & = -(1-\varepsilon') \log(1-\varepsilon') - \sum_{n=1}^{\infty} p_{X'}(n) \log p_{X'}(n) \\
	& = h(\varepsilon') + \varepsilon' \log \varepsilon' -  \sum_{n=1}^{\infty} p_{X'}(n) \log p_{X'}(n) \\
	& = h(\varepsilon') - \sum_{n=1}^{\infty} p_{X'}(n) \log \frac{p_{X'}(n)}{\varepsilon'} \\
	& = h(\varepsilon') - \varepsilon' \sum_{n=0}^{\infty} r(n) \log r(n)= h(\varepsilon') + \varepsilon' H(R)
}
where $R$ is a random variable taking values in $\mathbb N_0$ with distribution
\[
r(n) \equiv 	\mathbb P(R=n)  \coloneqq  \frac{p_{X'}(n+1)}{\varepsilon'}, \quad \forall n \in \mathbb N_0.\]
	To find an upper bound on $H(R)$, we estimate ${\mathbb{E}}(R)$ using \eqref{upxpb}
\eqal{ \label{upb}
  \mathbb{E}(R)
  &= \frac{1}{\varepsilon'}\sum_{n = 1}^{\infty} (n-1) p_{X'}(n) = \frac{1}{\varepsilon'} \mathbb{E}(X') - 1  \leq \frac{E}{\varepsilon'} - 1.
}
 It is known that the geometric distribution achieves maximum entropy among all distributions of a given mean on $\mathbb N_0$. This allows us to upper bound $H(R)$ with the entropy of an appropriate geometric random variable. Let $Z$ denote a geometric random variable with parameter $p \in (0,1)$, that is $\mathbb P(Z=k)=(1-p)^k p$ for $k \in \mathbb N_0$. Its mean and its Shannon entropy are respectively given by 
\begin{align}
{\mathbb{E}}(Z)= \frac{1-p}{p}, \quad {\hbox{and}} \quad H(Z) = \frac{h(p)}{p},
\end{align}
and the entropy is a decreasing function of the parameter $p$.
By setting ${\mathbb{E}}(Z) = E/\varepsilon' - 1$ (which is the upper bound on ${\mathbb{E}}(R)$) we obtain
\begin{equation}\label{upfb}
    p = \frac{1}{{\mathbb{E}}(Z) + 1}= \frac{\varepsilon'}{E}
    \text{ and hence }
    H(R) \leq H(Z) = \frac{E}{\varepsilon'} h(\varepsilon'/E).
\end{equation}

From \eqref{eqfb}, \eqref{eqlb} and \eqref{upfb}, it follows that
\begin{equation}\label{F-theob2}
\pr{X}{Y}{p} \leq h(\varepsilon') + E h({\varepsilon'}/E),
\end{equation}
with $\varepsilon' \leq \varepsilon$. 
By studying its derivative, it is easy to see that the right hand side of \eqref{F-theob2} is an increasing function of $\varepsilon'$ for all $\varepsilon' \in [0, E/(E+1)]$. As a result, we end up with
\begin{equation}
    \pr{X}{Y}{p} \leq h(\varepsilon) + E h({\varepsilon}/E),
\end{equation}
for all $\varepsilon \in [0, E/(E+1)]$, which proves \eqref{F-theob}.

In order to see that the above inequality is tight for $\varepsilon \in [0, E/(E+1)]$, consider the random variables $(X^*,Y^*)$ characterized by the joint probability distribution $p_{X^*Y^*}$ which is defined as follows:
\begin{equation} \label{eq:probaTightb}
    p_{X^*Y^*}(n,m) \coloneqq \begin{cases}
    1 - \varepsilonilon & \quad \mathrm{if} \, n = 0, m = 0, \\
    \varepsilonilon \, w(n-1) & \quad \forall \, n \in \mathbb N, \, \mathrm{for} \, m = 0, \\
    0, & \quad \mathrm{else},
    \end{cases}
\end{equation}
where $\{w(n)\}_{n \in \mathbb N_0}$ is the probability distribution of the geometric random variable $W$ with mean ${\mathbb{E}}(W) = E/\varepsilon - 1$. First note that
\eqal{ \label{eq:probaTightb2}
    \mathbb{E}(X^*) & = \sum_{n \in \mathbb N_0} n \sum_{m \in \mathbb N_0} p_{X^*Y^*}(n,m)  = \sum_{n \in \mathbb N_0} n p_{X^*Y^*}(n,0) \\
    & = \varepsilon \sum_{n \in \mathbb N_0} (n+1) w(n)  = \varepsilon \left( \mathbb{E}(W) + 1 \right)  = E.
}
Secondly, note that
\[    \mathbb{P}(X^* \ne Y^*)  = 1 - \mathbb{P}(X^* = Y^*)  = 1 - \sum_{n \in \mathbb N_0} p_{X^*Y^*}(n,n)  = \varepsilon.\]
Finally, since $H(Y^*) =0$,
\eqal{ \label{eq:probaTightb3}
    H(X^*|Y^*)_p & = H(X^*Y^*)  \\
    & = - (1-\varepsilon) \log (1-\varepsilon) - \sum_{n \in \mathbb N_0} \varepsilon w(n) \log \left( \varepsilon w(n) \right) \\
    & = h(\varepsilon) - \varepsilon \sum_{n \in \mathbb N_0} w(n) \log w(n) \\
    & = h(\varepsilon) + \varepsilon H(W)  = h(\varepsilon) + E h(\varepsilon/E).
}
This proves the theorem.
\end{proof}

Consider a function $f : \cA \rightarrow \mathbb{R}_+$ that satisfies the following properties:
\begin{equation} \label{eq:Boltzmann}
    \begin{aligned}
        (i) & \, f(0) = 0, \\
        (ii) & \, f \textrm{ is a non-decreasing function}, \\
        (iii) & \, \textrm{for any } \lambda<0, \quad e^{-\lambda_0} \coloneqq \sum_{x \in \cA} e^{\lambda f(x)} < \infty,
    \end{aligned}
\end{equation}
with $\lambda_0 \in \mathbb{R}$, so that $\{ e^{\lambda_0 + \lambda f(x)} \}_{x \in \cA}$ is a valid probability distribution with finite Shannon entropy. Then for every $E > f(0)$ we consider the set of all random variables with probability distributions $\{ p(x) \}_{x \in \cA}$ satisfying the constraint $\sum_{x \in \cA}  f(x) p(x) \le E$. It is known that the maximal Shannon entropy is achieved (uniquely) by the random variable $W_E$ taking values in $\cA$ with probability distribution $\{w_{E}(x)\}_{x \in \cA}$ given by
\begin{equation} \label{eq:defW1}
    w_E(x) \coloneqq e^{\lambda_0(E) + \lambda(E) f(x)}, \quad \forall x \in \cA.
\end{equation}
Here, $\lambda_0(E)$ and $\lambda(E)$ are chosen such that
\begin{equation} \label{eq:condGibbsClassical2}
    \sum_{x \in \cA} w_{E}(x) = 1, \qquad \con{W_E} = E.
\end{equation}
We then refer to $W_E$ as the random variable that achieves the maximum Shannon entropy for $\con{W_E} = E$. Similarly, define the function $\tilde{f} : \cA \rightarrow \mathbb{R}_+$ as
\begin{equation} \label{eq:defTildef}
    \tilde{f}(x) = f(x+1).
\end{equation}
In that case, $\tilde{f}$ also satisfies properties $(ii)$ and $(iii)$ of~\eqref{eq:Boltzmann} with $f$ being replaced by $\tilde{f}$. This then guarantees the existence of the random variable $\tilde{W}_E$ that achieves the maximum Shannon entropy for $\cont{\tilde{W}_E} = E$. With these considerations, Theorem \ref{theo:Fanob} can be generalized by considering a constraint of the form $\con{X} \leq E$ instead of a mean constraint.
\begin{theo}[Fano's inequality for countably infinite alphabet with a general constraint] \label{theo:FanoG}
Let $X$ and $Y$ be a pair of random variables taking values in $\cA$, having a joint distribution $p_{XY}$, and satisfying the constraint $\con{X} \le E$ for some $E \in [f(1)\varepsilon,\infty)$, where $\eps := \mathbb{P}(X \ne Y)$, and the function $f$ satisfies the properties of~\eqref{eq:Boltzmann}.
Then the following inequality holds:
\begin{equation}\label{F-theoG}
\pr{X}{Y}{p} \leq h(\eps) + \eps H(\tilde{W}_{E/\eps}),
\end{equation}
for $\eps \in \mathcal{E}(E)$, where $\mathcal{E}(E) \subseteq [0,1]$ contains the values of $\eps$ for which the right-hand side of \eqref{F-theoG} is a non-decreasing function of $\eps$,
and $\tilde{W}_{E/\eps}$ is the random variable that achieves the maximum Shannon entropy for $\cont{\tilde{W}_{E/\eps}} = E/\eps$, with $\tilde{f}$ being defined in \eqref{eq:defTildef}.
Furthermore, the inequality is tight for $\eps \in \mathcal{E}(E)$.
\end{theo}
Since the proof of Theorem \ref{theo:FanoG} closely follows the lines of that of Theorem \ref{theo:Fanob}, we omit its proof.
Note that the right-hand side of \eqref{F-theoG} can be shown to be a concave function of $\eps$ that reaches $0$ when $\eps=0$ and is non-negative for all $\eps \in [0,1]$. Consequently, there always exists some $\eps^* \in (0,1]$ such that $\mathcal{E}(E) = [0,\eps^*]$.

\subsubsection{A mean-constrained continuity bound for the entropy of random variables with a countably infinite alphabet}

In \cite{Sason2013}, Sason exploited the concept of maximal coupling of random variables in order to rederive the standard continuity bound for the Shannon entropy (\textit{i.e.},~for random variables with a finite alphabet) from Fano's inequality. In this section, we extend his proof to obtain a continuity bound for the Shannon entropy of random variables with a countably infinite alphabet, under mean constraints. In order to do so, we make use of the mean-constrained  Fano inequality given in Theorem~\ref{theo:Fanob}, as well as the concept of maximum coupling.

Recall that a \emph{coupling} of a pair of random variables $(X,Y)$ is a pair of random variables $(\hat{X},\hat{Y})$ with the same marginal probability distributions as of $(X,Y)$. 

\begin{defi}[Maximal coupling]For a pair of random variables $(X,Y)$, a coupling $(\hat{X},\hat{Y})$ is called a \emph{maximal coupling} if $\mathbb{P}(\hat{X}=\hat{Y})$ attains its maximal value among all the couplings of $(X,Y)$. 
\end{defi}

\begin{theo}[Mean-constrained Shannon entropy continuity bound] \label{theo:Continuityb}
Let $X$ and $Y$ be a pair of random variables taking values in $\mathbb N_0$, with respective probability distributions $p_X$ and $p_Y$, and satisfying the constraints ${\mathbb{E}}(X) \le E$ and  ${\mathbb{E}}(Y) \le E$ for some $0<E <\infty$.
Then for all $\eps \in [0,E/(E+1)]$ the following inequality holds:
\begin{equation} \label{eq:theo:Continuityb}
    \left| \hr{X}{p} - \hr{Y}{p} \right| \leq h(\eps) + E h\left( \eps/E \right),
\end{equation}
where $\eps \coloneqq \Vert X-Y\Vert_{\operatorname{TV}}$. Furthermore, for any given $0<E<\infty$, the inequality is tight for all $\eps \in [0,E/(E+1)]$.
\end{theo}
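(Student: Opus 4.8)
The plan is to follow Sason's maximal-coupling argument from \cite{Sason2013}, replacing the finite-alphabet Fano inequality by the mean-constrained version in Theorem~\ref{theo:Fanob}. First I would invoke a maximal coupling $(\hat X,\hat Y)$ of $(X,Y)$: by definition it has the same marginals as $(X,Y)$, so that $H(\hat X)=H(X)$, $H(\hat Y)=H(Y)$, $\mathbb E(\hat X)=\mathbb E(X)\le E$ and $\mathbb E(\hat Y)=\mathbb E(Y)\le E$; and, by the coupling characterization of total variation distance, it satisfies $\mathbb P(\hat X\ne\hat Y)=\mathrm{TV}(X,Y)=\eps$. The mean constraints also guarantee that $H(X)$ and $H(Y)$ are finite, since the geometric distribution maximizes entropy at fixed mean, so all entropies appearing below are well defined; I would record this finiteness explicitly before manipulating them.

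The core step is an entropy decomposition. From the chain rule $H(\hat X,\hat Y)=H(\hat X)+H(\hat Y\mid\hat X)=H(\hat Y)+H(\hat X\mid\hat Y)$ one obtains
\[
H(X)-H(Y)=H(\hat X)-H(\hat Y)=H(\hat X\mid\hat Y)-H(\hat Y\mid\hat X).
\]
Since both conditional entropies are nonnegative, this gives $|H(X)-H(Y)|\le\max\{H(\hat X\mid\hat Y),H(\hat Y\mid\hat X)\}$. Next I would apply Theorem~\ref{theo:Fanob} twice. For the first term, $\hat X,\hat Y$ take values in $\mathbb N_0$, $\mathbb E(\hat X)\le E$, and $\mathbb P(\hat X\ne\hat Y)=\eps\in[0,E/(E+1)]$, so the theorem yields $H(\hat X\mid\hat Y)\le h(\eps)+Eh(\eps/E)$. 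For the second term I apply the theorem with the roles of the two variables interchanged, using $\mathbb E(\hat Y)\le E$ and $\mathbb P(\hat Y\ne\hat X)=\eps$, to obtain $H(\hat Y\mid\hat X)\le h(\eps)+Eh(\eps/E)$. Taking the maximum gives the claimed inequality.

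For tightness I would reuse the extremal pair from the proof of Theorem~\ref{theo:Fanob}: let $Y\equiv 0$ (so $H(Y)=0$ and $\mathbb E(Y)=0\le E$) and let $X$ have $p_X(0)=1-\eps$ and $p_X(n+1)=\eps\,w(n)$, where $w$ is the geometric distribution of mean $\mathbb E(W)=E/\eps-1$. A direct computation gives $\mathbb E(X)=E$ and $\mathrm{TV}(X,Y)=\tfrac12\big(\eps+\eps\big)=\eps$, and, exactly as in \eqref{eq:probaTightb3}, $H(X)=h(\eps)+Eh(\eps/E)$; hence $|H(X)-H(Y)|=h(\eps)+Eh(\eps/E)$, matching the bound for every $\eps\in[0,E/(E+1)]$.

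The main obstacle I anticipate is not the algebra but the justification of the two structural facts about maximal couplings in the countably-infinite setting: that a maximal coupling exists and attains $\mathbb P(\hat X\ne\hat Y)=\mathrm{TV}(X,Y)$, and that the chain-rule splitting remains valid when a priori some of the entropies could be infinite. The mean constraints are precisely what rescue the latter, by forcing $H(X),H(Y)<\infty$, and one must also check that $\eps=\mathrm{TV}(X,Y)$ indeed lies in the admissible range $[0,E/(E+1)]$ before invoking Theorem~\ref{theo:Fanob} in each of the two orderings.
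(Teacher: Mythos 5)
Your proposal is correct and follows essentially the same route as the paper's own proof: a maximal coupling of $(X,Y)$, the chain-rule identity $|H(\hat X)-H(\hat Y)|=|H(\hat X\mid\hat Y)-H(\hat Y\mid\hat X)|\le\max\{H(\hat X\mid\hat Y),H(\hat Y\mid\hat X)\}$, the mean-constrained Fano inequality of Theorem~\ref{theo:Fanob} applied in both orderings, and the identical extremal pair (geometric mixture versus the point mass at $0$) for tightness. Your explicit remarks on finiteness of the entropies and on the validity of $\mathbb P(\hat X\ne\hat Y)=\mathrm{TV}(X,Y)$ over a countably infinite alphabet are sound refinements of points the paper treats more briefly, but they do not change the argument.
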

 \begin{proof}
	Again, we take inspiration from the proof of Theorem~3 in \cite{Sason2013}. Let $(\hat{X},\hat{Y})$ be a maximal coupling of $(X,Y)$, and $p_{XY}$ be the corresponding joint probability distribution. Since $\hr{X}{p} = \hr{\hat{X}}{p}$ and $\hr{Y}{p} = \hr{\hat{Y}}{p}$, we have, for all $\eps \in [0,E/(E+1)]$,
\begin{equation*}
    \begin{split}
		\left| \hr{X}{p} - \hr{Y}{p} \right| & = \left| \hr{\hat{X}}{p} - \hr{\hat{Y}}{p} \right| \\
		& = \left| \pr{\hat{X}}{\hat{Y}}{p} - \pr{\hat{Y}}{\hat{X}}{p} \right| \\
		& \leq \max \left\lbrace \pr{\hat{X}}{\hat{Y}}{p}, \pr{\hat{Y}}{\hat{X}}{p} \right\rbrace \\
		& \stackrel{(1)}{\leq} h(\varepsilon') + E h\left( \varepsilon'/E \right) \\
		& \stackrel{(2)}{=} h(\eps) + E h\left( \eps/E \right)
    \end{split}
\end{equation*}
where $\varepsilon'  \coloneqq  \mathbb{P}(\hat{X} \ne \hat{Y})$, and $(1)$ follows from Theorem~\ref{theo:Fanob}, while $(2)$ is a consequence of the fact that if $(\hat{X},\hat{Y})$ is a maximal coupling of $(X,Y)$ then
\[		\mathbb{P}(\hat{X} \neq \hat{Y}) = \Vert X-Y\Vert_{\operatorname{TV}}.\]
A proof of this for random variables with finite state space has been given in \cite{Sason2013} and the proof for infinite state alphabets is a straightforward adaptation of the argument. 

In order to see that the above inequality is tight for $\eps \in [0, E/(E+1)]$, consider the random variables $X^*$ and $Y^*$ characterized by the probability distributions $p_{X^*}$ which is defined as follows:
\begin{equation} \label{eq:probaTightb4}
    p_{X^*}(n) \coloneqq \begin{cases}
    1 - \eps & \quad \mathrm{if} \, n = 0, \\
    \eps \, w(n-1) & \quad \mathrm{else},
    \end{cases}
\end{equation}
where $\{w(n)\}_{n \in \mathbb N_0}$ is the probability distribution of the geometric random variable $W$ characterized by a mean-value ${\mathbb{E}}(W) = E/\eps - 1$, and $p_{Y^*}$ which is defined as follows:
\[  p_{Y^*}(m) \coloneqq \begin{cases}
    1 & \quad \mathrm{if} \, m = 0, \\
    0 & \quad \mathrm{else}.
    \end{cases}\]
Note that $p_{X^*}$ and $p_{Y^*}$ correspond to the two marginals of $p_{X^*Y^*}$ defined in \eqref{eq:probaTightb}. From this and \eqref{eq:probaTightb2}, we know that $\mathbb{E}(X^*) = E$. From \eqref{eq:probaTightb3}, we have that $H(X^*) = h(\eps) + E h(\eps/E)$. Obviously, we also have that $\mathbb{E}(Y^*) = 0 < E$ and $H(Y^*) = 0$. Finally, it is easy to see that $\Vert X^*-Y^*\Vert_{\operatorname{TV}} = \eps$. This proves the theorem.
\end{proof}

If a function $f : \cA \rightarrow \mathbb{R}_+$ satisfies the properties of~\eqref{eq:Boltzmann}, one has the following generalization of Theorem \ref{theo:Continuityb} for a constraint of the form $\con{X} \leq E$. Since its proof is analogous to that of Theorem~\ref{theo:Continuityb}, we omit the proof.
\begin{theo}[Continuity bound for the Shannon entropy of random variables under a general constraint] \label{theo:ContinuityG}
Let $X$ and $Y$ be a pair of random variables taking values in $\cA$, with respective probability distributions $p_X$ and $p_Y$, and satisfying the constraints $\con{X} \le E$ and $\con{Y} \le E$ for some $E \in [f(1)\varepsilon,\infty)$, where  $\eps := \Vert X-Y\Vert_{\operatorname{TV}}$, and the function $f$ satisfies the properties of~\eqref{eq:Boltzmann}.
Then the following inequality holds:
\begin{equation} \label{eq:theo:ContinuityG}
    \left| \hr{X}{p} - \hr{Y}{p} \right| \leq h(\eps) + \eps H(\tilde{W}_{E/\eps}),
\end{equation}
for $\eps \in \mathcal{E}(E)$, where $\mathcal{E}(E) \subseteq [0,1]$ contains the values of $\eps$ for which the right-hand side of \eqref{eq:theo:ContinuityG} is a non-decreasing function of $\eps$,
and $\tilde{W}_{E/\eps}$ is the random variable that achieves the maximum Shannon entropy for $\cont{\tilde{W}_{E/\eps}} = E/\eps$, with $\tilde{f}$ being defined in \eqref{eq:defTildef}.
\end{theo}

\subsection{von Neumann entropies} \label{sec:ContinuityQuantum}

We now lift the classical entropy continuity estimate in Theorem~\ref{theo:Continuityb} to general density operators, i.e. positive trace-class operators on a separable, infinite-dimensional Hilbert space with unit trace. The assumption that the state space of the classical random variables is $\mathbb N_0$ enforces us now to take as the Hamiltonian the single-mode number operator, $\hat{N}=a^\dagger a$.
\begin{theo}[von Neumann entropy continuity bound] \label{theo:ContinuityQuantum}
Let $\ha$ denote the number operator, and let $\rho$ and $\sigma$ be two quantum states on a separable, infinite-dimensional Hilbert space $\cH$, satisfying the energy constraints $\tr(\ha\rho), \tr(\ha\sigma) \leq E$, for some $0<E<\infty$, such that
\begin{equation}
    \frac{1}{2} ||\rho-\sigma||_1 \leq \td.
\end{equation}
Then for all $\eps \in [0, E/(1+E)]$ the following inequality holds:
\begin{equation} \label{eq:theo:ContinuityQuantum}
	|S(\rho) - S(\sigma) |\leq h(\td) + E h\left( \td/E \right).
\end{equation}
Furthermore, for any given $0<E<\infty$, the inequality is tight for all $\eps \in [0,E/(E+1)]$.
\end{theo}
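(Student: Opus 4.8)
The plan is to reduce the quantum statement to the classical continuity bound of Theorem~\ref{theo:Continuityb} by passing to the eigenvalue sequences of $\rho$ and $\sigma$. Since $\rho$ and $\sigma$ are states, they are positive trace-class operators, hence compact with purely discrete spectrum; I would list their eigenvalues in decreasing order as $\lambda_0(\rho)\ge\lambda_1(\rho)\ge\cdots\ge0$ and $\lambda_0(\sigma)\ge\lambda_1(\sigma)\ge\cdots\ge0$, each summing to one, and define classical random variables $X,Y$ on $\mathbb N_0$ by $p_X(n)=\lambda_n(\rho)$ and $p_Y(n)=\lambda_n(\sigma)$. By the definition of the von Neumann entropy, $H(X)=S(\rho)$ and $H(Y)=S(\sigma)$, so that $|S(\rho)-S(\sigma)|=|H(X)-H(Y)|$, and it remains only to verify that $X$ and $Y$ meet the hypotheses of Theorem~\ref{theo:Continuityb}.

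There are two operator-theoretic inputs. First, for the mean constraint, I would invoke Proposition~\ref{prop:CFT} with $\hat H=\hat N$: taking $\lambda_i=\lambda_i(\rho)$ and $\tilde\pi_i$ the eigenprojections of $\rho$, its inequality gives $\sum_n n\,\lambda_n(\rho)=\tr\bigl(\hat N\sum_n\lambda_n(\rho)\pi_n\bigr)\le\tr(\hat N\rho)\le E$, i.e. $\mathbb E(X)\le E$, because the left-hand configuration places the largest eigenvalues of $\rho$ on the lowest eigenstates of $\hat N$ and hence minimizes the energy over all states with the given spectrum; the same argument yields $\mathbb E(Y)\le E$. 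Second, for the total variation distance, I would use the Mirsky/Lidskii inequality $\sum_n|\lambda_n(\rho)-\lambda_n(\sigma)|\le\|\rho-\sigma\|_1$ for self-adjoint trace-class operators, which gives $\mathrm{TV}(X,Y)=\tfrac12\sum_n|\lambda_n(\rho)-\lambda_n(\sigma)|\le\tfrac12\|\rho-\sigma\|_1\le\td$. With $\varepsilon'\coloneqq\mathrm{TV}(X,Y)\le\td\le E/(E+1)$, Theorem~\ref{theo:Continuityb} yields $|H(X)-H(Y)|\le h(\varepsilon')+Eh(\varepsilon'/E)$, and since the right-hand side is increasing in its argument on $[0,E/(E+1)]$ (as shown in the proof of Theorem~\ref{theo:Fanob}) one may replace $\varepsilon'$ by $\td$, proving \eqref{eq:theo:ContinuityQuantum}.

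For tightness I would lift the classical extremizers. Take $\rho^*$ and $\sigma^*$ diagonal in the number basis $\{\ket{n}\}$ with spectra equal to the distributions $p_{X^*}$ and $p_{Y^*}$ from \eqref{eq:probaTightb4}, i.e. $\rho^*=(1-\td)\proj{0}+\td\sum_{n\ge1}w(n-1)\proj{n}$ and $\sigma^*=\proj{0}$, where $w$ is the geometric distribution with mean $E/\td-1$. Because both operators are diagonal in the number basis they commute, so $\tfrac12\|\rho^*-\sigma^*\|_1=\mathrm{TV}(X^*,Y^*)=\td$, while $\tr(\hat N\rho^*)=\mathbb E(X^*)=E$ and $\tr(\hat N\sigma^*)=0$; moreover $S(\rho^*)=H(X^*)=h(\td)+Eh(\td/E)$ and $S(\sigma^*)=0$ by \eqref{eq:probaTightb3}, so the bound is attained.

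The only genuine obstacle is the careful justification of the two spectral inequalities in the infinite-dimensional, unbounded setting: Proposition~\ref{prop:CFT} must be applied with the unbounded operator $\hat N$ and the definition $\tr(\hat N\rho)=\sup_n\tr(\hat N\indic_{[0,n]}(\hat N)\rho)$, so I would argue by truncation and monotone convergence to make the energy comparison rigorous, and the Mirsky/Lidskii inequality must be used in its trace-class form. Everything else is a direct transcription of the classical argument.
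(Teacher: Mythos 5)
Your proposal is correct and follows essentially the same route as the paper's proof: rearranging the eigenvalues in decreasing order (the paper packages this as passing to the passive states $\rho^{\downarrow},\sigma^{\downarrow}$), using Proposition~\ref{prop:CFT} for the energy comparison and Mirsky's inequality for the trace-distance comparison, then invoking Theorem~\ref{theo:Continuityb} together with monotonicity of $\varepsilon \mapsto h(\varepsilon)+Eh(\varepsilon/E)$ on $[0,E/(E+1)]$, and finally lifting the classical extremizers \eqref{eq:probaTightb4} to diagonal states for tightness. Your added remark about truncation and monotone convergence for the unbounded operator $\hat N$ is a point the paper leaves implicit, but it does not change the argument.
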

Before providing the proof of the above theorem, let us recall the known uniform energy-constrained continuity bound for the von Neumann entropy of infinite-dimensional quantum states obtained in \cite{Winter2016} by Winter. In his paper, Winter considers a Hamiltonian $\ha$ which has a discrete spectrum, is bounded from below\footnote{In fact, he fixes the ground state energy to be zero.} and satisfies the following so-called Gibbs Hypothesis.
\begin{defi}\cite{Winter2016}\label{defi:Gibbs}.  A Hamiltonian $\ha$ is said to satisfy the Gibbs Hypothesis if for any $\beta>0$, $Z(\beta) \coloneqq \tr(e^{-\beta \ha}) < \infty$, so that $e^{-\beta \ha}/Z(\beta)$ is a valid quantum state with finite entropy.
\end{defi}
\begin{figure*}
\begin{center}
\includegraphics[width=7.5cm,height=4.5cm]{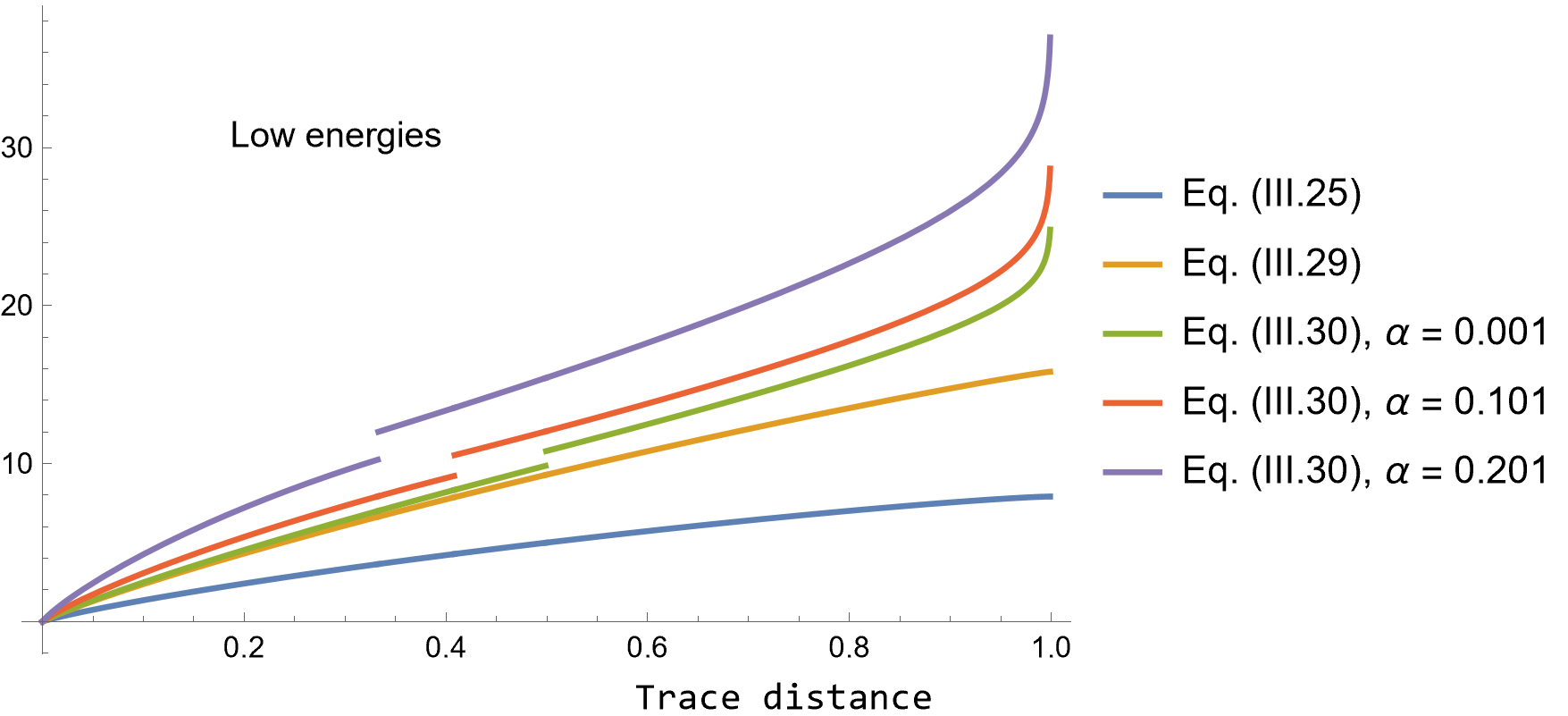} \hspace{2cm} \includegraphics[width=7.5cm,height=4.5cm]{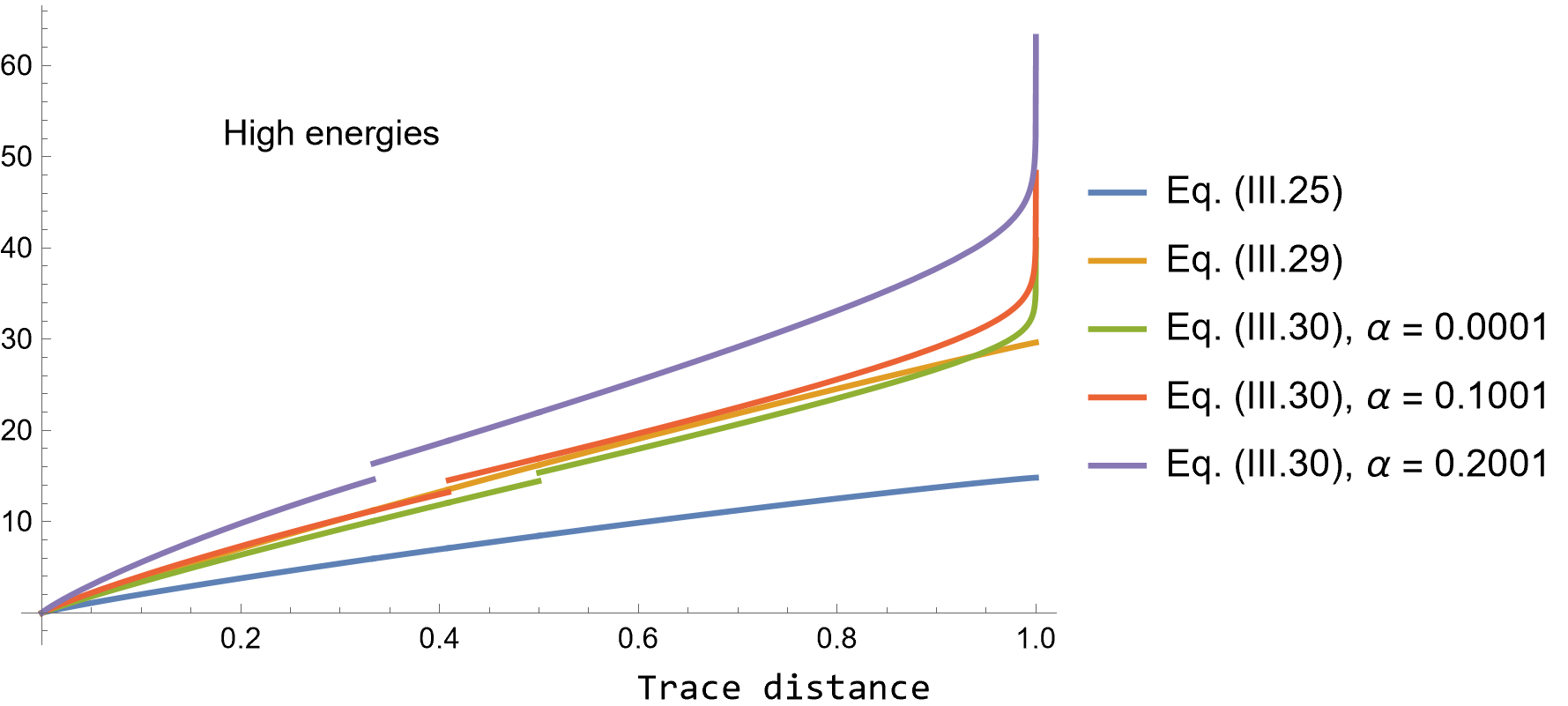}
\end{center}
\caption{We compare our tight bound \eqref{eq:theo:ContinuityQuantum} to the bound~\eqref{eq:BoundWinter3} by Winter for general Hamitonians specialized to the single-mode number operator and also to~\eqref{eq:BoundWinter2} for different values of $\alpha.$ Due to the piecewise definition of~\eqref{eq:BoundWinter2} the latter curves also show a jump discontinuity. We see that for a wide range of $\alpha$ and low energies, \eqref{eq:BoundWinter3} outperforms \eqref{eq:BoundWinter2}, but for high energies, there exist values of $\alpha$ for which \eqref{eq:BoundWinter2} outperforms~\eqref{eq:BoundWinter3}.\label{fig:fig2}}
\end{figure*}
If $\ha$ satisfies the Gibbs hypothesis, then for every $E > \inf \Spec(\ha)$, among all states satisfying the energy constraint $\tr(\ha \rho) \le E$, the maximal entropy is achieved (uniquely) by the Gibbs state:
\begin{equation}
\label{eq:Gibbshyp}
    \gamma(E) = \frac{ e^{- \beta(E) \ha}}{Z(\beta(E))},
\end{equation}
where the parameter $\beta(E)$ is decreasing with $E$ and is determined by the equality
\begin{equation} \label{eq:WinterBeta}
    \tr(e^{- \beta \ha}(\ha-E)) = 0.
\end{equation}
For such a Hamiltonian, Winter proved the following energy-constrained continuity bound~\cite{Winter2016}: for any two states $\rho$ and $\sigma$ on a separable, infinite-dimensional Hilbert space with $\tr(\rho \hat H), \tr(\sigma \hat H) \leq E$ and $\frac{1}{2} || \rho-\sigma||_1 \leq \td \leq 1$,
\begin{equation} \label{eq:BoundWinter}
    |S(\rho)-S(\sigma)| \leq h(\td) + 2 \td S \left( \gamma(E/\td) \right).
\end{equation}
In the case in which the Hamiltonian is the number operator $\hat{N}$ corresponding to a single mode, the bound \eqref{eq:BoundWinter} reduces to
\begin{equation} \label{eq:BoundWinter3}
    |S(\rho)-S(\sigma)| \leq h(\eps) + 2 \left(E + \eps \right) \, h\left( \frac{\eps}{E+\eps} \right).
\end{equation}
Winter also gets an additional bound for the single mode number operator $\hat{N}$ (see Lemma~18 of~\cite{Winter2016}): 
\eqal{ \label{eq:BoundWinter2}
    & |S(\rho)-S(\sigma)| \\
    & \quad \leq \eps \left(\frac{1+\alpha}{1-\alpha} + 2\alpha \right) \left[ \log(E+1) + \log \frac{\eps}{\alpha(1-\eps)}\right] \\
   & \quad \quad + 3 \left( \frac{1+\alpha}{1-\alpha} + 2\alpha\right) \tilde{h}\left(\frac{1+\alpha}{1-\alpha} \eps \right)=:K(\varepsilon,\alpha,E),
}
where $\alpha \in (0, 1/2)$ and $\tilde{h}(x) = h(x) $ for $x \le 1/2$ and $\tilde{h}(x) = 1 $ for $x \ge 1/2$.
As we explain below and as has been already observed in \cite{W15}, the bound \eqref{eq:BoundWinter2} is asymptotically tight for a suitable joint limit of $\alpha \to 0$ and $E \to \infty$ for every $\varepsilon>0$ fixed. In addition, it is unknown so far how to optimize the choice of $\alpha$ in \eqref{eq:BoundWinter2} and for many choices of $\alpha$, the bound \eqref{eq:BoundWinter} seems to be better than the bound~\eqref{eq:BoundWinter2}; see Fig. \ref{fig:fig1}.

\begin{figure*}
\begin{center}
\includegraphics[width=7.5cm,height=5.5cm]{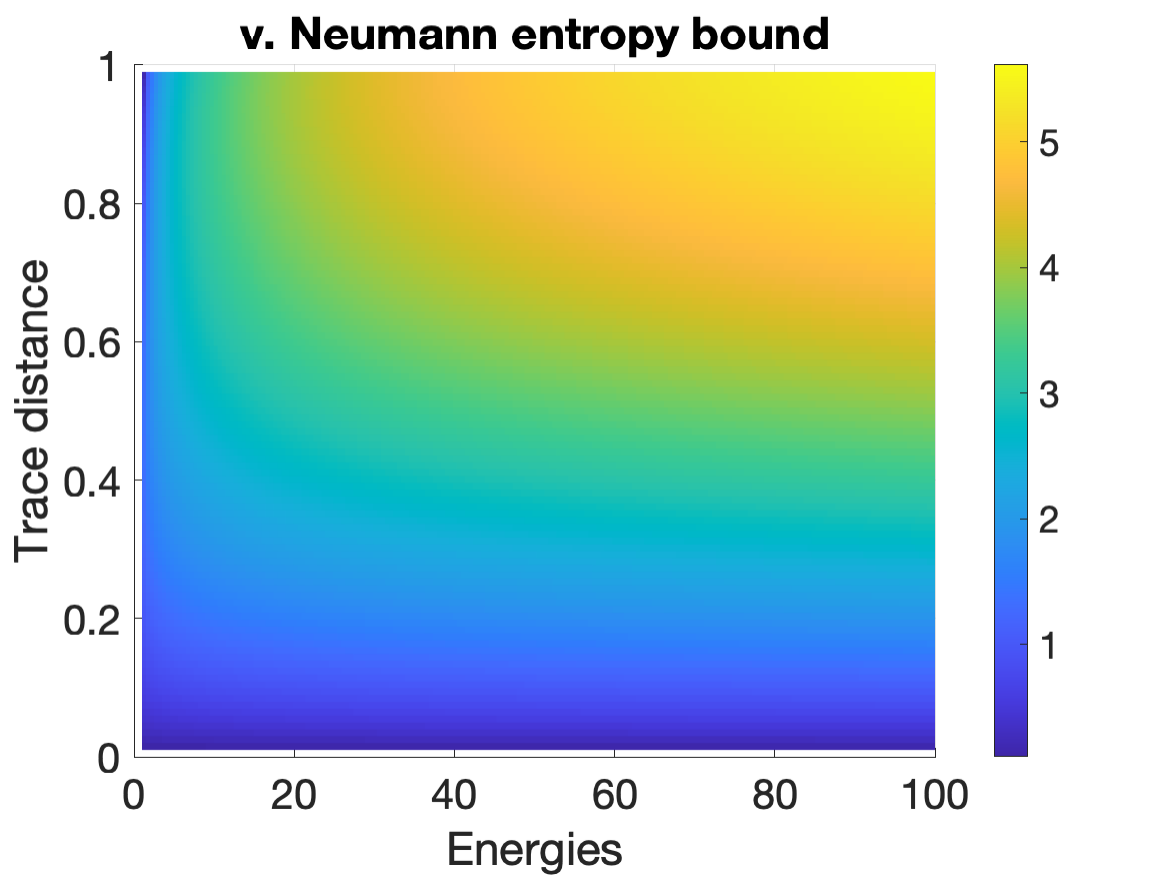}\quad \includegraphics[width=7.5cm,height=5.5cm]{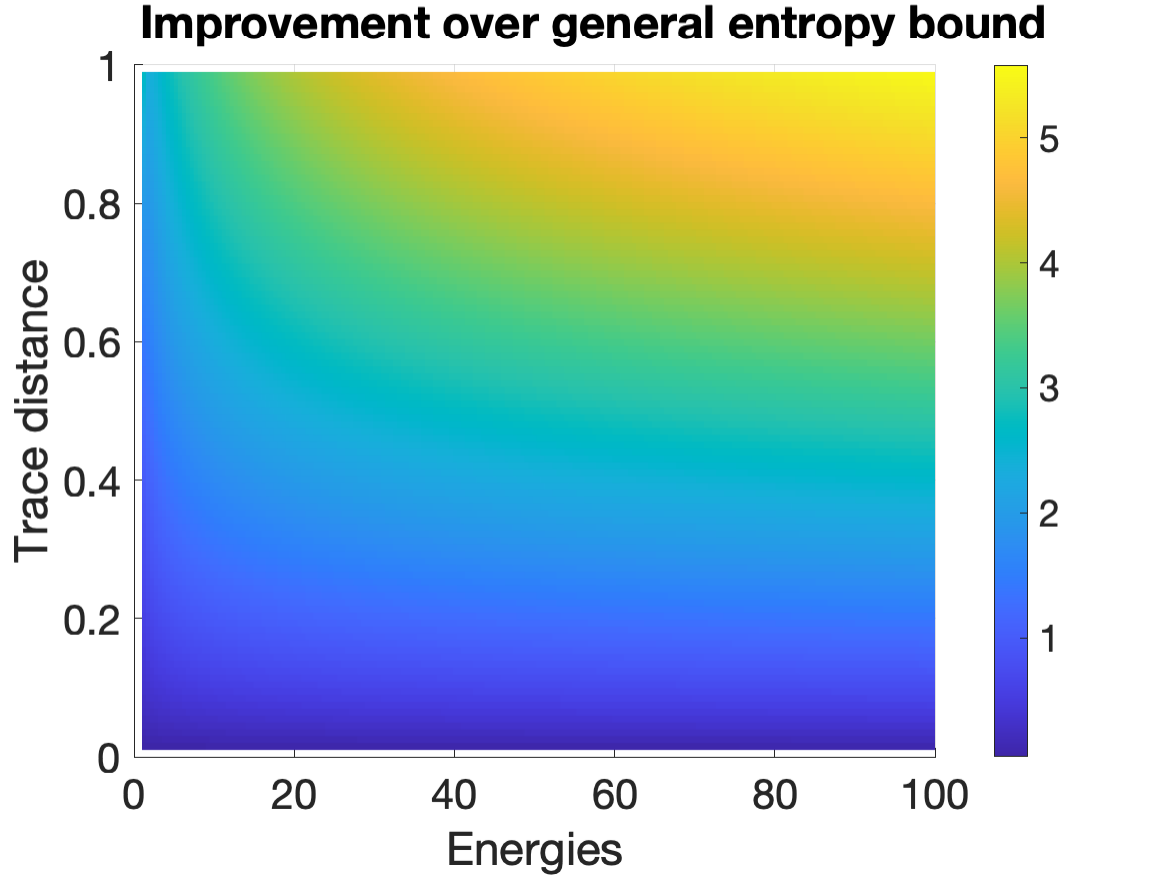} \\[15pt] \includegraphics[width=7.5cm,height=5.5cm]{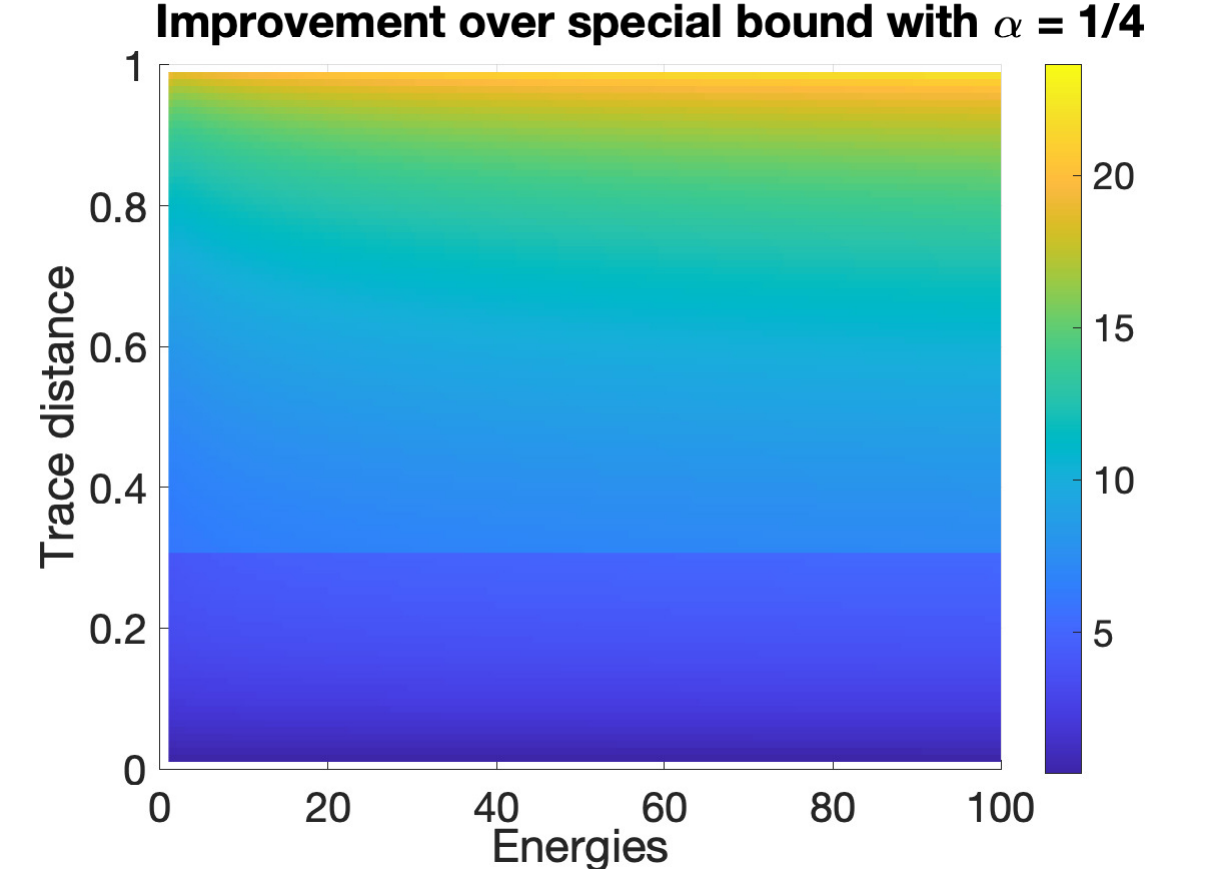}\quad \includegraphics[width=7.5cm,height=5.5cm]{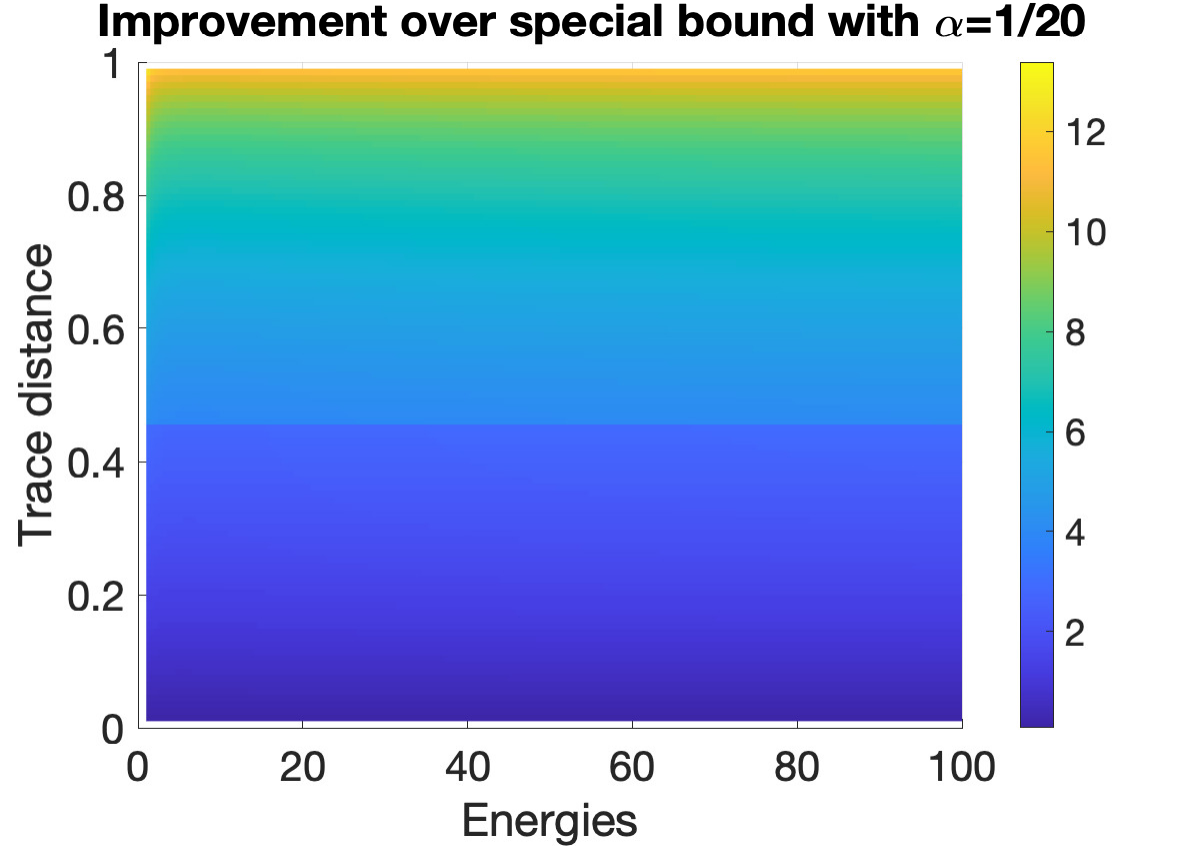}
\end{center}
\caption{The upper left figure illustrates the right-hand side of our tight bound \eqref{eq:theo:ContinuityQuantum}. The upper right figure illustrates the difference of the bound \eqref{eq:BoundWinter} obtained by Winter \cite{W15} to our bound on the von Neumann entropy.
The lower two figures compare the bound \eqref{eq:BoundWinter2} found by Winter, for fixed $\alpha$, to our bound. The improvement is in all cases particularly  significant for high energies and large trace distances.\label{fig:fig1}}
\end{figure*}

In contrast, our bound given in Theorem~\ref{theo:ContinuityQuantum} is tight for all values of the trace distance $\eps \in [0, E/(E+1)]$) for any given $0<E<\infty$.
\begin{rem}[Asymptotic tightness]
We now address the issue of asymptotic tightness: the difference of entropies for states $\rho,\sigma$ with eigenvalues according to the probability distributions optimizing the estimate in the proof of Theorem \ref{theo:Continuityb} is given by \eqref{eq:probaTightb3} 

\[ \vert S(\rho)-S(\sigma)\vert = h(\varepsilon) + E h(\varepsilon/E).\]

We then study the behaviour of \eqref{eq:BoundWinter2} for a one-parameter family $E(n)=e^{n}, \alpha(n) =n^{-1}$ for any fixed $\varepsilon>0$. Thus, using Taylor expansion
\begin{equation}
    \begin{split}
        & E(n) h(\varepsilon/E(n)) \\
        & = -\varepsilon \log(\varepsilon e^{-n}) - e^n(1-\varepsilon e^{-n})\log(1-\varepsilon e^{-n}) \\
        & = -\varepsilon\log(\varepsilon)+n\varepsilon - (e^n-\varepsilon)\log(1-\varepsilon e^{-n}) \\
        & = n\varepsilon + o(n) \text{ as } n \to \infty.
    \end{split}
\end{equation} 
In particular, since $h(\varepsilon)$ is just a constant, we have 
\begin{equation}
\label{eq:entdiff} \vert S(\rho)-S(\sigma)\vert = n\varepsilon + o(n)\text{ as } n \to \infty.
\end{equation}
Now observe that 
\[\eps \left(\frac{1+\frac{1}{n}}{1-\frac{1}{n}} + \frac{2}{n}\right) \log(e^n+1) = \eps n + o(n)\text{ as } n \to \infty.\]
Since this is the leading order term in the right-hand side of \eqref{eq:BoundWinter2}, we thus also have that  
\begin{equation}
\label{eq:Kcap}
K(\varepsilon,\alpha(n),E(n)) =\varepsilon n+o(n)\text{ as } n \to \infty.
\end{equation}
Hence, by combining \eqref{eq:entdiff} with \eqref{eq:Kcap} 
\[\lim_{n \to \infty} \frac{\vert S(\rho)-S(\sigma)\vert}{K(\varepsilon,\alpha(n),E(n))} = 1\]
which shows asymptotic tightness.
\end{rem}
We now provide a proof of Theorem~\ref{theo:ContinuityQuantum}.
\begin{proof}
Let the spectral decompositions of $\rho$ and $\sigma$ be given by
\begin{equation}
    \rho = \sum_{n=0}^{\infty} r(n) \proj{\phi_n}, \qquad \sigma = \sum_{n=0}^{\infty} s(n) \proj{\psi_n}.
\end{equation}
Consider the passive states
\begin{equation}
    \rho^{\downarrow} = \sum_{n=0}^{\infty} r^{\downarrow}(n) \proj{n}, \qquad \sigma^{\downarrow} = \sum_{n=0}^{\infty} s^{\downarrow}(n) \proj{n},
\end{equation}
where the states $\ket{n}$ for $n = 0, 1, \cdots$ are eigenstates of the Hamiltonian $\ha$, $\{r^{\downarrow}(n)\}_{n \in \mathbb N_0}$ represents the distribution containing the non-zero elements of $\{r(n)\}_{n \in \mathbb N_0}$ arranged in non-increasing order, \textit{i.e.}, $r^{\downarrow}(n) \geq r^{\downarrow}(n+1)$ for all $n \in \mathbb N_0$, and similarly for $\{s^{\downarrow}(n)\}_{n \in \mathbb N_0}$.
We obviously have that $S(\rho^{\downarrow}) = S(\rho)$ and $S(\sigma^{\downarrow}) = S(\sigma)$, so that
\begin{equation}
	|S(\rho) - S(\sigma)| = |S(\rho^{\downarrow}) - S(\sigma^{\downarrow})|.
\end{equation}
Furthermore, from the Courant-Fischer theorem in Proposition~\ref{prop:CFT}, we have $\tr(\ha \rho^{\downarrow}) \leq \tr(\ha \rho) \leq E$, $\tr(\ha \sigma^{\downarrow}) \leq \tr(\ha \sigma) \leq E$ \footnote{This can also be proved using Ky Fan's Maximum Principle~\cite[Lemma IV.9]{palma}} and
\begin{equation}
    \eps' \coloneqq \frac{1}{2} ||\rho^{\downarrow}-\sigma^{\downarrow}||_1 \leq \frac{1}{2} ||\rho-\sigma||_1 \leq \td.
\end{equation}
The above inequality also follows from Mirsky's inequality~\cite{Mirsky}, see also \cite[(1.22)]{simon2005trace} for a version in infinite dimensions. Let $X$ and $Y$ denote random variables on $\mathbb N_0$, with probability distributions $r \equiv \{r^{\downarrow}(n)\}_{n \in \mathbb N_0}$ and $s \equiv \{s^{\downarrow}(n)\}_{n \in \mathbb N_0}$, respectively. Then ${\mathbb{E}}(X) = \Tr(\ha \rho^{\downarrow}) \leq E$, ${\mathbb{E}}(Y) = \Tr(\ha \sigma^{\downarrow}) \leq E$, $H(X) = S(\rho^{\downarrow})$, $H(Y) = S(\sigma^{\downarrow})$ and $\Vert X-Y\Vert_{\operatorname{TV}} = \eps' \leq \eps$.
Using Theorem~\ref{theo:Continuityb}, we have
\eqal{
	|S(\rho) - S(\sigma)| & = |H(X) - H(Y)|  \leq h\left( \eps' \right) + E h\left( \eps'/E \right).
}
As mentioned before, it is easy to see by analyzing its derivative that the right-hand side of the last inequality in the above equation is an increasing function of $\eps'$ for all $\eps' \in [0, E/(E+1)]$. As a result, we end up with
\begin{equation}
    |S(\sigma) - S(\rho)| \leq h\left( \td \right) + E h\left( \td/E \right),
\end{equation}
for all $\eps \in [0, E/(E+1)]$.

In order to see that the above inequality is tight for $\eps \in [0, E/(E+1)]$, consider the quantum states $\rho^* \coloneqq \sum_{n=0}^{\infty} p_{X^*}(n) \proj{n}$ and $\sigma^* = \proj{0}$ where $p_{X^*}$ is the probability distribution defined in \eqref{eq:probaTightb4}. From this, we have that $S(\rho) = h\left( \td \right) + E h\left( \td/E \right)$ and $\tr(\ha\rho) \leq E$. Obviously, we also have that $S(\sigma) = 0$ and $\tr(\ha\sigma) = 0 < E$. Finally, it is trivial to see that $\frac{1}{2} ||\rho^*-\sigma^*||_1 = \td$. This proves the theorem.
\end{proof}

Consider a Hamiltonian $\ha$ on a separable infinite-dimensional Hilbert space $\cH$, with ground state energy $0$, that satisfies the Gibbs Hypothesis. As a consequence, it has a discrete spectrum of finite multiplicity and can be represented as \cite{Shirokov2016}
\begin{equation} \label{eq:specDecompH}
    \ha = \sum_{n=0}^{\infty} f(n) \proj{e_n},
\end{equation}
where $\{ \ket{e_n} \}_{n \in \mathbb{N}_0}$ is an orthonormal basis and the eigenvalues $f(n)$ for $n \in \mathbb{N}_0$ are such that $f(n)\leq f(n+1)$. Define the function $\tilde{f} : \cA \rightarrow \mathbb{R}_+$ as $\tilde{f}(x) = f(x+1)$ and the Hamiltonian $\hb$ on $\cH$ as
\begin{equation}
    \hb = \sum_{n=0}^{\infty} \tilde{f}(n) \proj{e_n}.
\end{equation}
The following generalization of Theorem \ref{theo:ContinuityQuantum} can be shown using a similar proof.
\begin{theo}[Continuity bound for the von Neumann entropy of energy-constrained states for general Hamiltonians] \label{theo:ContinuityQuantumG}
Let the Hamiltonian $\ha$ on $\cH$, with ground state energy $0$, satisfy the Gibbs Hypothesis. Let $\rho$ and $\sigma$ be two quantum states on $\cH$, satisfying the energy constraints $\tr(\ha\rho), \tr(\ha\sigma) \leq E$, for some $E \in [f(1)\varepsilon,\infty)$, such that
\begin{equation}
    \frac{1}{2} ||\rho-\sigma||_1 \leq \td,
\end{equation}
with $\td \in [0,1]$.
Then the following inequality holds:
\begin{equation} \label{eq:theo:ContinuityQuantumG}
	|S(\rho) - S(\sigma) |\leq h(\eps) + \eps S(\tilde{\gamma}(E/\eps)),
\end{equation}
for $\eps \in \mathcal{E}(E)$, where $\mathcal{E}(E) \subseteq [0,1]$ contains the values of $\eps$ for which the right-hand side of \eqref{eq:theo:ContinuityQuantumG} is a non-decreasing function of $\eps$, and $\tilde{\gamma}(E/\eps)$ denotes the Gibbs state of energy $E/\eps$ corresponding to the Hamiltonian $\hb$.
Furthermore, the inequality is tight for $\eps \in \mathcal{E}(E)$.
\end{theo}
Note that the right-hand side of \eqref{eq:theo:ContinuityQuantumG} is the same function as the right-hand side of \eqref{F-theoG}. Consequently, there always exists some $\eps^* \in (0,1]$ such that $\mathcal{E}(E) = [0,\eps^*]$ in the above theorem.
In order to better understand the size and scaling of the interval $\mathcal{E}(E),$ it is at least necessary to understand the behaviour of the entropy of the Gibbs state $\tilde{\gamma}(E/\eps)$ for small values of $\eps.$ This question has been addressed in \cite[Theorem $3$]{SBND}. This result states that for any Hamiltonian satisfying a certain spectral condition that is virtually met for all common Hamiltonians satisfying the Gibbs hypothesis, there exists a parameter $\eta$ such that 

\begin{equation}
\label{eq:bound_5} 
S(   \tilde{\gamma}(E)) = \eta \log(E)(1+o(1)) \text{ as } E \to \infty.
\end{equation}
Since for $\eps \in [0,1/2]$, $\eps \mapsto h(\eps)$ is monotone, the right hand side of \eqref{eq:theo:ContinuityQuantumG} is increasing, at least for $\eps \in [0,1/2]$ if $[0,1/2] \ni \eps \mapsto \varepsilon S(\tilde{\gamma}(E/\eps))$ increases.

Using the asymptotic result \eqref{eq:bound_5}, we see that for either large energies or $\eta$ sufficiently small, the monotonicity of the right-hand side of \eqref{eq:theo:ContinuityQuantumG} is expected to hold on a large interval $\mathcal E(E)$ of admissible $\eps.$ For Schr\"odinger operators with bounded potential on bounded domains $\Omega \subset \mathbb R^n$, it was shown in \cite{SBND} that $\eta = n/2$ for example. Therefore, one might expect. in addition, that the interval does not increase for Hamiltonians on multi-dimensional domains.

In principle, one could hope to turn \eqref{eq:bound_5} into a quantitative estimate that provides quantitative estimates on the size of $\mathcal E(E).$ In practice, it seems more effective to directly analyze concrete Hamiltonians and exploit specific features about their spectra, such as their high-energy limits, to understand the size of $\mathcal E(E).$


\subsection{Classical $\alpha$-R\'enyi and $\alpha$-Tsallis entropies}
\label{sec:classicalent}
Let $(\Omega, \Sigma,\mathbb P)$ be a probability space. We start by considering two random variables $X,Y: \Omega \rightarrow Z$ where we shall assume that $Z$ is either a discrete countably infinite set or a measurable subset $D\subset \RR^d.$ In the latter case, we assume that $X$ and $Y$ possess probability densities $\mu_X,\mu_Y,$ respectively.

\subsubsection{Discrete random variables} 
Let $Z \coloneqq \bigcup_{i \in \mathbb N} \{ z_i \}$ and $p_X(z_i) \coloneqq \mathbb P(X=z_i).$ We are interested in studying for $\alpha$-H\"older continuous functions $f:[0,1] \rightarrow \RR$ the quantities with $f(p_X) \coloneqq (f(p_X(z_1)),f(p_X(z_2)),f(p_X(z_3)),...) \in \ell^1(\mathbb N)$ with
\begin{equation}\label{Tf}
(Tf)(X) \coloneqq  \sum_{i \in \mathbb N} f(p_X(z_i)),\end{equation}
where $H(X)=Tf_1(X)$ is the \emph{Shannon entropy} of the random variable $X$, $R_{\alpha}(X)=\frac{\log(Tf_{\alpha}(X))}{1-\alpha}$ its $\alpha$-\emph{R\'enyi entropy}, and $T_{\alpha}(X)=\frac{Tf_{\alpha}(X)-1}{1-\alpha}$ its $\alpha$-\emph{Tsallis entropy}.

\begin{prop}
\label{prop:simple}
Let $f \in C^{\alpha}([0,1])$,  $\alpha \in (0,1)$, and $X,Y$ random variables with discrete countably infinite state space $Z$. Let $w=(w_i)$ be a sequence of positive weights such that $\left(w_i^{-\frac{\beta}{1-\alpha}}\right) \in \ell^1(\mathbb N)$ for some $\beta$ such that $0 < \beta < \alpha$ and $(w_i p_X(z_i)),(w_i p_Y(z_i)) \in \ell^1(\mathbb N)$. Then we have the following continuity bound: 
\eqal{ \label{eq32}
& \Vert f(p_X)-f(p_Y) \Vert_{\ell^1} \\
& \le 2^{\alpha} \vert f \vert_{\Lambda_{\omega_{\alpha}}} \Vert X-Y\Vert^{\beta}_{\operatorname{TV}(w)} \Vert X-Y\Vert_{\operatorname{TV}}^{\alpha-\beta} \Vert (w_i^{-\frac{\beta}{1-\alpha}})\Vert^{1-\alpha}_{\ell^1}, 
}
where by the triangle inequality $\vert Tf(X)-Tf(Y) \vert \le \Vert f(p_X)-f(p_Y) \Vert_{\ell^1}.$
\end{prop}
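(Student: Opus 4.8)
The plan is to reduce the claim to a single application of a three-factor Hölder inequality, after first exploiting the Hölder continuity of $f$ pointwise. Write $a_i \coloneqq p_X(z_i)$, $b_i \coloneqq p_Y(z_i)$ and $d_i \coloneqq |a_i - b_i|$, so that $\Vert f(p_X) - f(p_Y)\Vert_{\ell^1} = \sum_{i \in \mathbb N} |f(a_i) - f(b_i)|$. Since $f \in C^{\alpha}([0,1]) = \Lambda_{\omega_{\alpha}}$ with $\omega_{\alpha}(t) = t^{\alpha}$, the very definition of the seminorm $\vert f \vert_{\Lambda_{\omega_\alpha}}$ gives $|f(a_i) - f(b_i)| \le \vert f \vert_{\Lambda_{\omega_\alpha}}\, d_i^{\alpha}$ for each $i$. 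Summing over $i$ therefore leaves the purely probabilistic task of bounding $\sum_i d_i^{\alpha}$ by the product of the two total-variation quantities and the weight factor.

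For this I would apply Hölder's inequality with the three exponents $p_1 = 1/\beta$, $p_2 = 1/(\alpha - \beta)$ and $p_3 = 1/(1-\alpha)$, which are admissible precisely because $0 < \beta < \alpha < 1$ forces each of $\beta,\ \alpha-\beta,\ 1-\alpha$ to lie in $(0,1)$ and to sum to $1$, so that $1/p_1 + 1/p_2 + 1/p_3 = 1$. The key algebraic observation is that the summand factors as
\begin{equation*}
d_i^{\alpha} = (w_i d_i)^{\beta} \cdot d_i^{\alpha - \beta} \cdot w_i^{-\beta},
\end{equation*}
since the powers of $w_i$ cancel and the powers of $d_i$ add up to $\alpha$. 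Taking the three factors to be $(w_i d_i)^{\beta}$, $d_i^{\alpha-\beta}$ and $w_i^{-\beta}$ and raising them to $p_1,p_2,p_3$ respectively produces the three sums $\sum_i w_i d_i$, $\sum_i d_i$ and $\sum_i w_i^{-\beta/(1-\alpha)}$, each raised to the corresponding reciprocal exponent. This yields
\begin{equation*}
\sum_{i} d_i^{\alpha} \le \Big(\sum_i w_i d_i\Big)^{\beta} \Big(\sum_i d_i\Big)^{\alpha-\beta}\Big(\sum_i w_i^{-\beta/(1-\alpha)}\Big)^{1-\alpha}.
\end{equation*}

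It then remains to rewrite the first two sums using $\sum_i w_i d_i = 2\Vert X-Y\Vert_{\operatorname{TV}(w)}$ and $\sum_i d_i = 2\Vert X-Y\Vert_{\operatorname{TV}}$; the resulting powers of two combine to $2^{\beta + (\alpha-\beta)} = 2^{\alpha}$, producing exactly the claimed prefactor once the $\vert f \vert_{\Lambda_{\omega_\alpha}}$ factor is reinstated. The hypotheses guarantee finiteness of the right-hand side: $(w_i^{-\beta/(1-\alpha)}) \in \ell^1(\mathbb N)$ controls the third factor directly, $(w_i a_i),(w_i b_i) \in \ell^1(\mathbb N)$ gives $\sum_i w_i d_i \le \sum_i w_i(a_i + b_i) < \infty$, and $\sum_i d_i \le 2$ holds automatically. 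I do not anticipate a genuine obstacle here; the only points requiring care are the bookkeeping of exponents in the three-fold Hölder step (checking both the cancellation of the weight powers and the identity $1/p_1 + 1/p_2 + 1/p_3 = 1$) and the implicit requirement $\beta > 0$, needed both for $p_1 < \infty$ and for the summability condition on $(w_i^{-\beta/(1-\alpha)})$ to be a nontrivial constraint. The final inequality $\vert Tf(X) - Tf(Y) \vert \le \Vert f(p_X)-f(p_Y)\Vert_{\ell^1}$ is then immediate from the triangle inequality applied termwise in \eqref{Tf}.
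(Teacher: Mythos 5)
Your proof is correct and is essentially the same argument as the paper's: both exploit the pointwise H\"older continuity of $f$ and then realize the factorization $d_i^{\alpha} = (w_i d_i)^{\beta}\, d_i^{\alpha-\beta}\, w_i^{-\beta}$, the only difference being that you apply a single three-exponent H\"older inequality (with exponents $1/\beta$, $1/(\alpha-\beta)$, $1/(1-\alpha)$) where the paper iterates the two-factor H\"older inequality twice (first with $1/\alpha$, $1/(1-\alpha)$, then with $\alpha/\beta$, $\alpha/(\alpha-\beta)$), which yields the identical bound including the $2^{\alpha}$ prefactor from converting $\ell^1$ distances to total-variation distances.
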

\noindent Note that the conditions $\left(w_i^{-\frac{\beta}{1-\alpha}}\right) \in \ell^1(\mathbb N)$ for some $\beta < \alpha$ and $(w_i p_X(z_i)),(w_i p_Y(z_i)) \in \ell^1(\mathbb N)$ replace the moment constraint in Eq.~\ref{eq:theo:Continuityb}.
\begin{proof}
Using H\"older continuity of $f$, we find

\[ \Vert f(p_X)-f(p_Y) \Vert_{\ell^1} \le \vert f \vert_{\Lambda_{\omega_{\alpha}}}  \sum_{i \in \mathbb N}  w_i^{-\beta} w_i^{\beta} \vert p_X(z_i)-p_Y(z_i) \vert^{\alpha} .\]
Choosing $p=1/\alpha$ and its H\"older conjugate $q=1/(1-\alpha)$, we have by H\"older's inequality
\eqal{
    & \Vert f(p_X)-f(p_Y) \Vert_{\ell^1} \\
    & \le \vert f \vert_{\Lambda_{\omega_{\alpha}}} \left( \sum_{i \in \mathbb N} w_i^{\frac{\beta}{\alpha}} \vert p_X(z_i)-p_Y(z_i) \vert \right)^{\alpha} \left( \sum_{i \in {\mathbb N}} w_i^{-\frac{\beta}{1-\alpha}}\right)^{1-\alpha}.
    \label{329}
}
Applying H\"older's inequality to the second summand on the right hand side of the above line with the choice $p=\alpha/\beta$ and $q = \alpha/(\alpha - \beta)$, we obtain
\eqal{
    \sum_{i \in {\mathbb N}} w_i^{\beta/\alpha} \vert p_X(z_i)-p_Y(z_i) \vert & \le \left(\sum_{i \in {\mathbb N}} w_i \vert p_X(z_i)-p_Y(z_i) \vert\right)^{\frac{\beta}{\alpha}} \\
    & \times \left(\sum_{i \in {\mathbb N}} \vert p_X(z_i)-p_Y(z_i) \vert\right)^{\frac{\alpha - \beta}{\alpha}},
}
which together with (\ref{329}) yields

\eqal{
& \Vert f(p_X)-f(p_Y) \Vert_{\ell^1} \\
& \le \vert f \vert_{\Lambda_{\omega_{\alpha}}}  \Vert (w_i^{-\frac{\beta}{1-\alpha}})\Vert^{1-\alpha}_{\ell^1} \times \left\lVert \vert (p_X(z_i)-p_Y(z_i)) \right\rVert_{\ell^1(w_i)}^{\beta} \\
& \quad \quad \times \left\lVert \vert (p_X(z_i)-p_Y(z_i)) \right\rVert_{\ell^1}^{\alpha-\beta}.
}
Expressing the $\ell^1$ distances in terms of the total variation distances using (\ref{TV}) then yields the claim.
\end{proof}
\begin{rem}
Let $Z \subset \mathbb Z$ and $w_i \coloneqq \max\{\vert z_i\vert,1\}$ then the condition $(w_i p_X(z_i)),(w_i p_Y(z_i)) \in \ell^1(\mathbb N),$ is equivalent to the existence of a first moment. If $Z=\mathbb N$, then $ \Vert (w_i^{-\frac{\beta}{1-\alpha}})\Vert^{1-\alpha}_{\ell^1} = \zeta(\tfrac{\beta}{1-\alpha})^{1-\alpha}$ where $\zeta$ is the Riemann zeta function and we may choose $\beta\in (1-\alpha,\alpha).$ In particular, the constraints restrict us to choosing $\alpha \in (1/2,1).$
\end{rem}
As an immediate corollary from Proposition \ref{prop:simple}, we then find:
\begin{corr}
\label{corr:contdisc}
Let $\alpha \in (0,1)$ and $X,Y$ be random variables with discrete countably infinite state space $Z$. Let $(w_i)$ be a sequence of positive weights such that $\left( w_i^{-\frac{\beta}{1-\alpha}} \right) \in \ell^1(\mathbb N)$ for some $\beta < \alpha$ and $(w_i p_X(z_i)),(w_i p_Y(z_i)) \in \ell^1(\mathbb N)$. Then we have the following continuity bounds: 
The $\alpha$-Tsallis entropy satisfies 
\eqal{
& \vert T_{\alpha}(X)-T_{\alpha}(Y) \vert \\
& \le \frac{2^{\alpha}   }{1-\alpha}\Vert X-Y\Vert^{\beta}_{\operatorname{TV}((w_i))} \Vert X-Y\Vert_{\operatorname{TV}}^{\alpha-\beta} \Vert (w_i^{-\frac{\beta}{1-\alpha}})\Vert^{1-\alpha}_{\ell^1}.
}
The $\alpha$-R\'enyi entropy satisfies 
\eqal{
& \vert R_{\alpha}(X)-R_{\alpha}(Y) \vert \\
& \le \frac{2^{\alpha} } {1-\alpha}\Vert X-Y\Vert^{\beta}_{\operatorname{TV}((w_i))} \Vert X-Y\Vert_{\operatorname{TV}}^{\alpha-\beta} \Vert (w_i^{-\frac{\beta}{1-\alpha}})\Vert^{1-\alpha}_{\ell^1}.
}
\end{corr}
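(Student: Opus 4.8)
The plan is to reduce both bounds to a single application of Proposition~\ref{prop:simple} with the choice $f = f_{\alpha}$, where $f_{\alpha}(x)=x^{\alpha}$. Recall from Section~\ref{sec:QCM} that $f_{\alpha}\in C^{\alpha}([0,1])$ with $\vert f_{\alpha}\vert_{\Lambda_{\omega_{\alpha}}}=1$, and that by definition $Tf_{\alpha}(X)=\sum_{i\in\mathbb N}p_X(z_i)^{\alpha}$ is exactly the quantity appearing in both $T_{\alpha}$ and $R_{\alpha}$. Thus the entire task is to convert the $\ell^1$-estimate on $f_{\alpha}(p_X)-f_{\alpha}(p_Y)$ furnished by Proposition~\ref{prop:simple} into estimates on the two entropy differences. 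Since the hypotheses of the corollary are identical to those of Proposition~\ref{prop:simple}, the master inequality $\vert Tf_{\alpha}(X)-Tf_{\alpha}(Y)\vert \le \Vert f_{\alpha}(p_X)-f_{\alpha}(p_Y)\Vert_{\ell^1}\le 2^{\alpha}\Vert X-Y\Vert^{\beta}_{\operatorname{TV}(w)}\Vert X-Y\Vert^{\alpha-\beta}_{\operatorname{TV}}\Vert(w_i^{-\beta/(1-\alpha)})\Vert^{1-\alpha}_{\ell^1}$ is available for free, the seminorm factor having collapsed to $1$.

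For the Tsallis entropy I would simply read off the definition $T_{\alpha}(X)-T_{\alpha}(Y)=\tfrac{1}{1-\alpha}\bigl(Tf_{\alpha}(X)-Tf_{\alpha}(Y)\bigr)$, take absolute values, and invoke the master inequality above. The prefactor $\tfrac{1}{1-\alpha}$ multiplies the constant $2^{\alpha}$ to give precisely $\tfrac{2^{\alpha}}{1-\alpha}$, which is the stated bound; no further argument is needed.

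For the Rényi entropy the only additional ingredient is the handling of the logarithm. Writing $S_X\coloneqq Tf_{\alpha}(X)$ and $S_Y\coloneqq Tf_{\alpha}(Y)$, I would first observe that $S_X,S_Y\ge 1$: since $\alpha\in(0,1)$ and each $p_X(z_i)\in[0,1]$ we have $p_X(z_i)^{\alpha}\ge p_X(z_i)$, so $S_X\ge\sum_{i}p_X(z_i)=1$, and likewise for $Y$. On $[1,\infty)$ the logarithm is $1$-Lipschitz, its derivative $1/t$ being at most $1$; hence by the mean value theorem $\vert\log S_X-\log S_Y\vert\le\vert S_X-S_Y\vert$. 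It then follows that $\vert R_{\alpha}(X)-R_{\alpha}(Y)\vert=\tfrac{1}{1-\alpha}\vert\log S_X-\log S_Y\vert\le\tfrac{1}{1-\alpha}\vert Tf_{\alpha}(X)-Tf_{\alpha}(Y)\vert$, and the same application of Proposition~\ref{prop:simple} yields the identical bound.

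I do not anticipate a genuine obstacle, as this is an immediate consequence of Proposition~\ref{prop:simple}. The single point requiring care is the Rényi case, where one must verify that the argument of the logarithm remains bounded below by $1$ so that the $1$-Lipschitz estimate for $\log$ is legitimate; this normalization is what prevents the logarithm from inflating the bound. Everything else is a direct substitution of $f_{\alpha}$, together with $\vert f_{\alpha}\vert_{\Lambda_{\omega_{\alpha}}}=1$, into the inequality of Proposition~\ref{prop:simple}.
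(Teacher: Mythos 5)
Your proposal is correct and follows essentially the same route as the paper's own proof: apply Proposition~\ref{prop:simple} with $f=f_{\alpha}$ (using $\vert f_{\alpha}\vert_{\Lambda_{\omega_{\alpha}}}=1$), read off the Tsallis bound from the definition, and handle the R\'enyi case via $Tf_{\alpha}(X)\ge 1$ together with the $1$-Lipschitz bound $\vert\log x-\log y\vert\le\vert x-y\vert$ on $[1,\infty)$. The only difference is that you explicitly justify $Tf_{\alpha}(X)\ge 1$ (via $p^{\alpha}\ge p$ for $p\in[0,1]$), which the paper merely asserts.
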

\begin{proof}
Recall that $(Tf_{\alpha})(X) \ge 1,$ which allows us to use that $\vert \log(x)-\log(y) \vert \le \vert x-y\vert$ for $x,y \ge 1,$ for the R\'enyi entropy. The result then immediately follows from the triangle inequality and Proposition \ref{prop:simple}.
\end{proof}

\begin{figure*}
\begin{center}
\includegraphics[width=7.5cm]{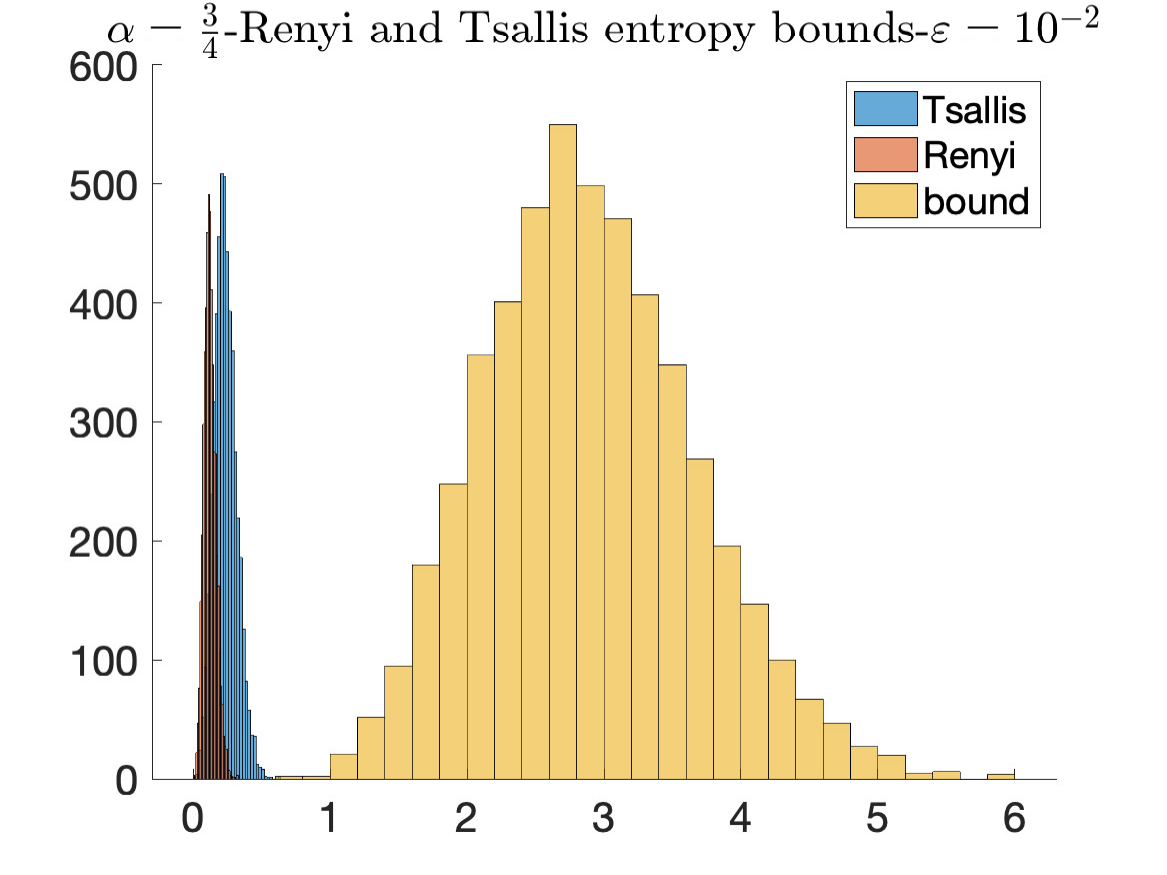}\quad \includegraphics[width=7.5cm]{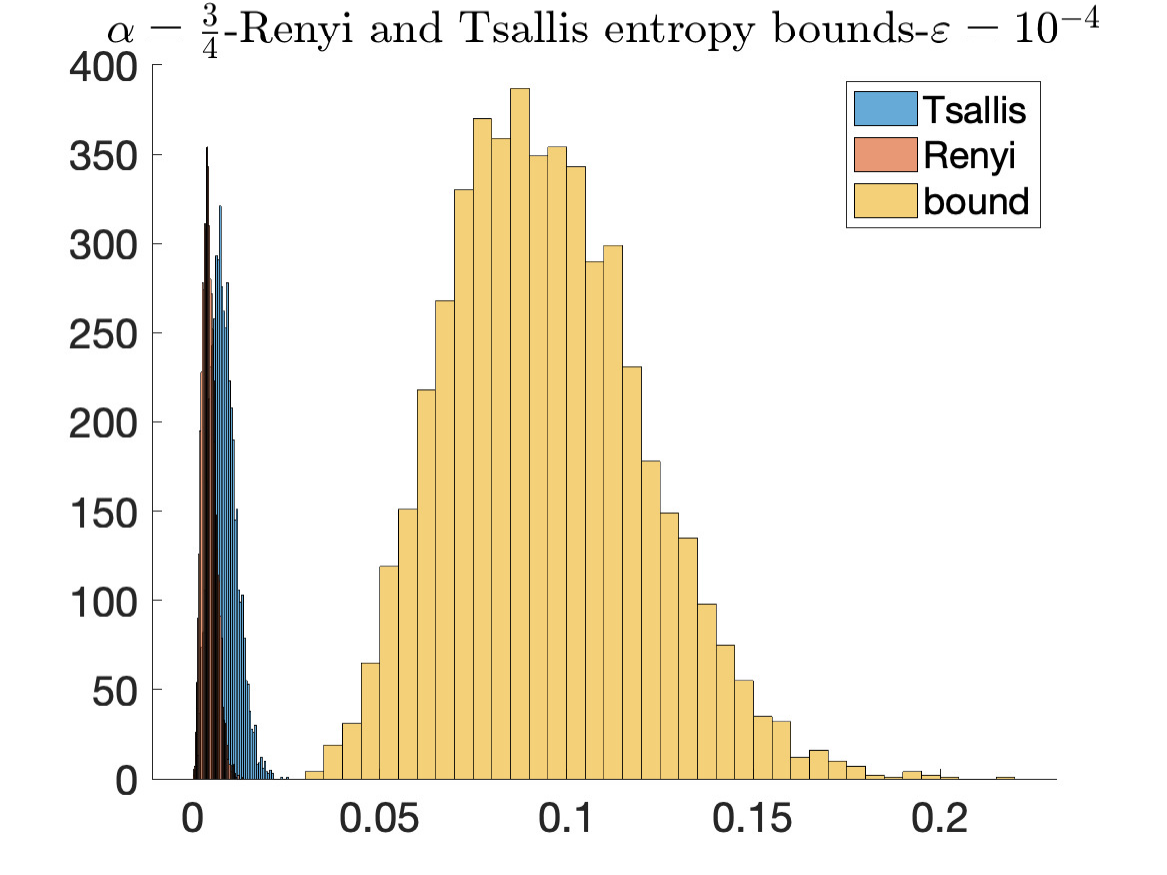}\\[5pt]
\includegraphics[width=7.5cm]{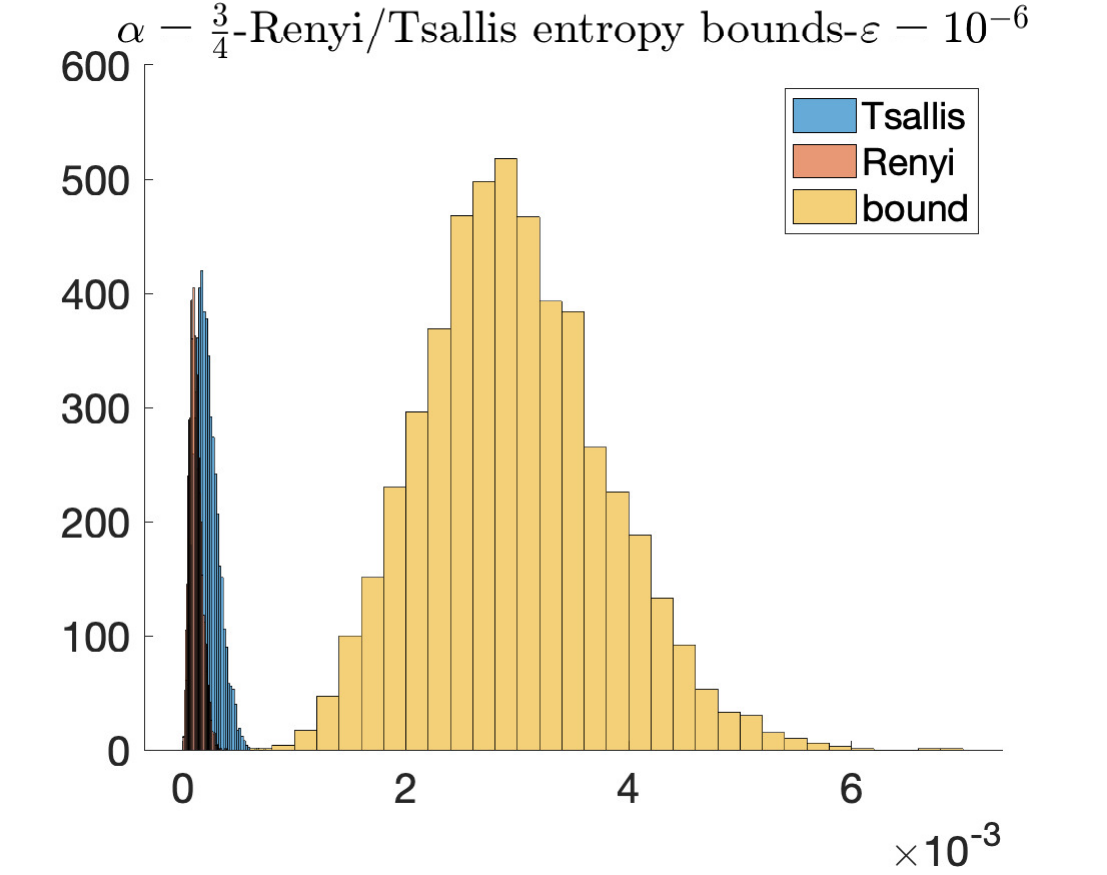}\quad \includegraphics[width=7.5cm]{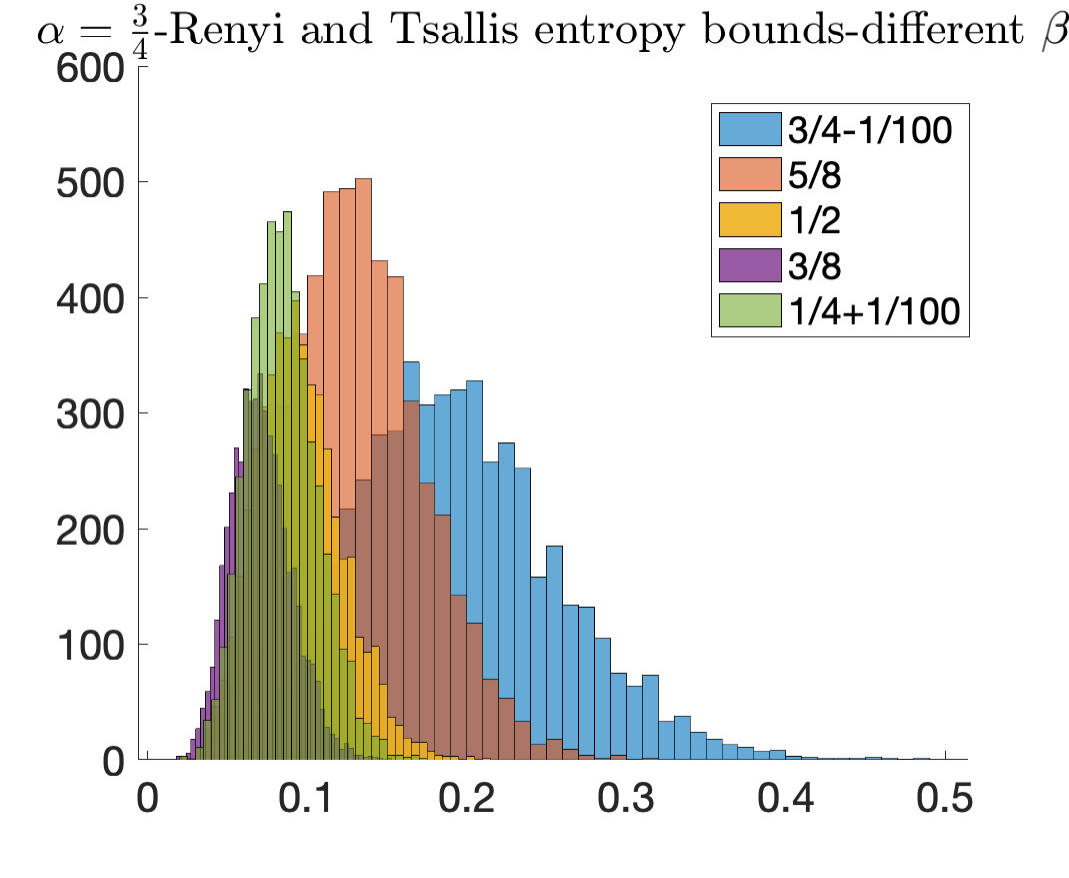}
\end{center}
\caption{These figures illustrate our findings of Corollary \ref{corr:contdisc}. We compute for 5000 random distributions $p$ on $\{1,2,...,1000\}$ the $\alpha$-Tsallis and $\alpha$-R\'enyi entropy by perturbing the distribution by $\varepsilon q$ where $q$ is another random vector. In the fourth plot we compute the bounds for different values of $\beta$ and observe that low values of $\beta$ yield better bounds. We choose the weights $w_i=i.$ In this histogram, the $x$-axis depicts the absolute value of the difference of the $\alpha$-Tsallis and $\alpha$-R\'enyi entropy for the realizations of our 5000 sample distributions and also the value of our bound for these realizations. The $y$-axis shows the number of times this value on the $x$-axis was achieved among the 5000 realizations.}
\end{figure*}

\subsubsection{Continuous random variables}
In a recent paper \cite{POP}, a continuity bound for the differential entropy has been obtained, but analogous bounds for the $\alpha$-R\'enyi and $\alpha$-Tsallis entropies are missing. Therefore, we now turn to the study of continuous random variables $X,Y$ with densities $\mu_X,\mu_Y$ on some domain $D$ and derive continuity bounds for the $\alpha$-R\'enyi and Tsallis entropies. This leads us then to the study of $\alpha$-H\"older continuous functions $f:[0,\max\{\Vert \mu_X \Vert_{\infty},\Vert \mu_Y\Vert_{\infty}\}] \rightarrow \RR$ defining the quantities
\[ (Tf)(X) \coloneqq \int_D f(\mu_X(x))\ dx.\]
Analogously, to the discrete case, $H(X)=Tf_1(X)$ is the differential \emph{Shannon entropy} of the random variable $X$, $R_{\alpha}(X)=\frac{\log(Tf_{\alpha}(X))}{1-\alpha}$ its differential $\alpha$-\emph{R\'enyi entropy}, and $T_{\alpha}(X)=\frac{Tf_{\alpha}(X)-1}{1-\alpha}$ its differential  $\alpha$-\emph{Tsallis entropy}.
A straightforward adaptation of Proposition \ref{prop:simple} yields
\begin{prop}
\label{prop:simple2}
Let $D$ be some domain and $f \in C^{\alpha}([0,\tau])$, $\alpha \in (0,1)$, with $\tau = \operatorname{max}\{\Vert \mu_X \Vert_{L^{\infty}},\Vert \mu_Y \Vert_{L^{\infty}}\} \in (0,\infty]$ where $\mu_X,\mu_Y$ are probability densities on $D$ associated with random variables $X,Y$ respectively. Let $w$ be a positive weight function such that $w^{-\frac{\beta}{1-\alpha}} \in L^1(D)$ for some $\beta < \alpha$ and $w\mu_X, w\mu_Y  \in L^1(D)$. Then we have the following continuity bound:
\eqal{
& \Vert f(\mu_X)-f(\mu_Y) \Vert_{L^1(D)} \\
& \le \vert f \vert_{\Lambda_{\omega_{\alpha}}} \Vert \mu_X-\mu_Y \Vert^{\beta}_{L^1(w)} \Vert \mu_X-\mu_Y \Vert^{\alpha-\beta}_{L^1} \Vert w^{-\frac{\beta}{1-\alpha}}\Vert^{1-\alpha}_{L^1}.
}
\end{prop}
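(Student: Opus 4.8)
The plan is to mimic the proof of Proposition~\ref{prop:simple} verbatim, replacing sums over the countable index set by integrals over the domain $D$ and the counting measure by Lebesgue measure, since the entire argument there was a double application of H\"older's inequality together with the $\alpha$-H\"older bound on $f$. First I would invoke the $\alpha$-H\"older continuity of $f$ on $[0,\tau]$, which is legitimate because both densities take values in $[0,\tau]$ by the definition of $\tau$, to write
\begin{equation}
\Vert f(\mu_X)-f(\mu_Y)\Vert_{L^1(D)} \le \vert f\vert_{\Lambda_{\omega_{\alpha}}} \int_D \vert \mu_X(x)-\mu_Y(x)\vert^{\alpha}\,\dd x = \vert f\vert_{\Lambda_{\omega_{\alpha}}} \int_D w^{-\beta} w^{\beta}\vert \mu_X(x)-\mu_Y(x)\vert^{\alpha}\,\dd x.
\end{equation}

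The core of the argument is then two successive applications of H\"older's inequality, exactly as in Proposition~\ref{prop:simple}. Applying H\"older with the conjugate exponents $p=1/\alpha$ and $q=1/(1-\alpha)$ peels off the factor $\Vert w^{-\beta/(1-\alpha)}\Vert^{1-\alpha}_{L^1}$ and leaves $\big(\int_D w^{\beta/\alpha}\vert \mu_X-\mu_Y\vert\,\dd x\big)^{\alpha}$. A second application of H\"older to this remaining integral, now with exponents $p=\alpha/\beta$ and $q=\alpha/(\alpha-\beta)$ (both admissible precisely because $0<\beta<\alpha$), splits the weight so that the integrand $w^{\beta/\alpha}\vert\mu_X-\mu_Y\vert$ factors into $w\vert\mu_X-\mu_Y\vert$ raised to $\beta/\alpha$ and $\vert\mu_X-\mu_Y\vert$ raised to $(\alpha-\beta)/\alpha$. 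Re-assembling the exponents and rewriting the three resulting integrals as the weighted and unweighted $L^1$-norms $\Vert\mu_X-\mu_Y\Vert_{L^1(w)}$ and $\Vert\mu_X-\mu_Y\Vert_{L^1}$ yields the stated bound.

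The integrability hypotheses are exactly what makes each step finite and are the continuous analogues of the $\ell^1$ conditions in Proposition~\ref{prop:simple}: $w^{-\beta/(1-\alpha)}\in L^1(D)$ guarantees the first extracted factor is finite, while $w\mu_X, w\mu_Y\in L^1(D)$ together with $\mu_X,\mu_Y\in L^1(D)$ (automatic since they are probability densities) ensure $\mu_X-\mu_Y\in L^1(w)\cap L^1(D)$, so the weighted and unweighted distances appearing on the right-hand side are finite. I would verify that the two pairs of exponents are genuine H\"older conjugates summing to $1$ in reciprocal, which is immediate, and that the measurability of all integrands is inherited from that of the densities.

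I do not anticipate a serious obstacle, since no convergence subtlety arises beyond the finiteness already encoded in the hypotheses; the only point requiring a word of care is that $\tau$ may equal $+\infty$, in which case $C^{\alpha}([0,\tau])$ should be read as $\alpha$-H\"older continuity on $[0,\infty)$ and the pointwise estimate $\vert f(\mu_X(x))-f(\mu_Y(x))\vert \le \vert f\vert_{\Lambda_{\omega_{\alpha}}}\vert\mu_X(x)-\mu_Y(x)\vert^{\alpha}$ still holds for a.e.\ $x$. Given how mechanically the discrete proof transfers, I expect the argument to be a "straightforward adaptation" as the statement already advertises, so rather than reproducing every line I would simply indicate the two H\"older steps and point to Proposition~\ref{prop:simple} for the identical bookkeeping.
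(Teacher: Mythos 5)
Your proposal is correct and is exactly the paper's approach: the paper gives no separate proof for this proposition, stating only that it follows by "a straightforward adaptation of Proposition \ref{prop:simple}", which is precisely the two-step H\"older argument (with exponents $1/\alpha,\,1/(1-\alpha)$ and then $\alpha/\beta,\,\alpha/(\alpha-\beta)$) that you carry out with sums replaced by integrals. Your additional remarks on the finiteness of each factor and on the case $\tau=+\infty$ are sound and only add care beyond what the paper records.
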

As in the discrete case, we can therefore conclude the following:
\begin{corr}
\label{corr:contcont}
Let $D$ be a domain, $\alpha \in (0,1)$, and $\mu_X,\mu_Y$ be probability densities on $D$ associated with random variables $X,Y$ respectively. Let $w$ be a positive weight function such that $w^{-\frac{\beta}{1-\alpha}} \in L^1(D)$ for some $\beta < \alpha$ and $w\mu_X, w\mu_Y  \in L^1(D).$ In addition, let $(Tf_{\alpha})(X),(Tf_{\alpha})(Y) \ge \delta^{-1},$
for some $\delta>0$. Then we have the following continuity bounds: 
The $\alpha$-Tsallis entropy satisfies 
\eqal{
& \vert T_{\alpha}(X)-T_{\alpha}(Y) \vert \\
& \le \frac{1 }{1-\alpha}\Vert X-Y\Vert^{\beta}_{\operatorname{TV}((w_i))} \Vert X-Y\Vert_{\operatorname{TV}}^{\alpha-\beta} \Vert (w_i^{-\frac{\beta}{1-\alpha}})\Vert^{1-\alpha}_{\ell^1}.
}
The $\alpha$-R\'enyi entropy satisfies 
\eqal{
& \vert R_{\alpha}(X)-R_{\alpha}(Y) \vert \\
& \le \frac{\delta } {1-\alpha}\Vert X-Y\Vert^{\beta}_{\operatorname{TV}((w_i))} \Vert X-Y\Vert_{\operatorname{TV}}^{\alpha-\beta} \Vert (w_i^{-\frac{\beta}{1-\alpha}})\Vert^{1-\alpha}_{\ell^1}.
}
\end{corr}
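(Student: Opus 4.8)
The plan is to reproduce, in the continuous setting, the template already used for the discrete Corollary~\ref{corr:contdisc}: reduce each entropy difference to the single quantity $\Vert f_{\alpha}(\mu_X)-f_{\alpha}(\mu_Y)\Vert_{L^1(D)}$, which is controlled by Proposition~\ref{prop:simple2}, recalling that $\vert f_{\alpha}\vert_{\Lambda_{\omega_{\alpha}}}=1$. First I would dispose of the Tsallis entropy. Since $T_{\alpha}(X)=\frac{(Tf_{\alpha})(X)-1}{1-\alpha}$ is an affine function of $(Tf_{\alpha})(X)=\int_D f_{\alpha}(\mu_X(x))\,dx$, the triangle inequality for the integral gives
\[
\vert T_{\alpha}(X)-T_{\alpha}(Y)\vert=\frac{\vert (Tf_{\alpha})(X)-(Tf_{\alpha})(Y)\vert}{1-\alpha}\le \frac{1}{1-\alpha}\,\Vert f_{\alpha}(\mu_X)-f_{\alpha}(\mu_Y)\Vert_{L^1(D)},
\]
and feeding the right-hand side into Proposition~\ref{prop:simple2} produces the stated Tsallis bound with prefactor $\tfrac{1}{1-\alpha}$.

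For the R\'enyi entropy, $R_{\alpha}(X)=\frac{\log((Tf_{\alpha})(X))}{1-\alpha}$, so the only step that genuinely departs from the discrete case is the control of $\vert \log((Tf_{\alpha})(X))-\log((Tf_{\alpha})(Y))\vert$. In the discrete setting one uses $(Tf_{\alpha})\ge 1$ together with the global estimate $\vert \log x-\log y\vert\le \vert x-y\vert$ valid on $[1,\infty)$. In the continuous case $(Tf_{\alpha})(X)=\int_D \mu_X^{\alpha}\,dx$ need no longer exceed $1$, which is exactly why the hypothesis $(Tf_{\alpha})(X),(Tf_{\alpha})(Y)\ge \delta^{-1}$ is imposed. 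On the interval $[\delta^{-1},\infty)$ the derivative of $\log$ is bounded by $\delta$, so $\vert \log x-\log y\vert=\bigl\vert\int_y^x t^{-1}\,dt\bigr\vert\le \delta\,\vert x-y\vert$ for all $x,y\ge \delta^{-1}$. Combining this with the triangle inequality exactly as above yields
\[
\vert R_{\alpha}(X)-R_{\alpha}(Y)\vert\le \frac{\delta}{1-\alpha}\,\Vert f_{\alpha}(\mu_X)-f_{\alpha}(\mu_Y)\Vert_{L^1(D)},
\]
and a final invocation of Proposition~\ref{prop:simple2} gives the stated R\'enyi bound, the extra factor $\delta$ being precisely what distinguishes its prefactor from the Tsallis one.

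The hard part is entirely in this logarithmic estimate: the Lipschitz constant of $\log$ blows up as the argument approaches the origin, so without the lower bound $(Tf_{\alpha})\ge\delta^{-1}$ the R\'enyi difference cannot be controlled at all. This is the one structural obstruction that is absent in the discrete case, where normalization forces $(Tf_{\alpha})\ge 1$ automatically. In practice one verifies that an admissible $\delta$ exists via the H\"older-type lower bound recorded in the footnote, namely $\Vert \mu_X\Vert_{\infty}^{\alpha-1}\le (Tf_{\alpha})(X)$ (so that $\delta=\max\{\Vert\mu_X\Vert_{\infty},\Vert\mu_Y\Vert_{\infty}\}^{1-\alpha}$ suffices whenever the densities are bounded), but this verification is external to the proof itself and need not enter the argument.
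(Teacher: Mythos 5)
Your proposal is correct and follows essentially the same route as the paper's own proof: reduce both entropy differences to $\Vert f_{\alpha}(\mu_X)-f_{\alpha}(\mu_Y)\Vert_{L^1(D)}$ via the triangle inequality and Proposition~\ref{prop:simple2}, using the estimate $\vert \log x-\log y\vert \le \delta\vert x-y\vert$ on $[\delta^{-1},\infty)$ for the R\'enyi case. Your additional remarks on why the hypothesis $(Tf_{\alpha})\ge \delta^{-1}$ replaces the automatic normalization bound $(Tf_{\alpha})\ge 1$ of the discrete setting merely make explicit what the paper leaves implicit.
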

\begin{proof}
As in the discrete case, we can now use that $\vert \log(x)-\log(y)\vert \le \delta \vert x-y\vert$ for all $x,y\ge \delta^{-1} >0$, for the $\alpha$-R\'enyi entropy, which we apply to the integrals $Tf_{\alpha}(X),Tf_{\alpha}(Y)$ appearing in the definition of $R_{
alpha}$. The result then immediately follows from the triangle inequality and Proposition \ref{prop:simple2}
\end{proof}
\begin{rem}
Continuity estimates for $\alpha>1$ for classical $\alpha$-R\'enyi and Tsallis entropies can be obtained along the lines of the corresponding quantum mechanical result that we state as Proposition \ref{prop:alphag1}.
\end{rem}

\subsection{Quantum R\'enyi and Tsallis entropies}
\label{sec:QEntropies}
Let $\rho$ be a quantum state, \textit{i.e.} a positive trace-class operator on a separable Hilbert space with unit trace. The spectral theorem implies that for any Borel function $f: \RR \to \CC$ we can write, in terms of rank $1$-projections $\pi_k,$ just $f(\rho)  = \sum_k f(\lambda_k) \pi_k.$
Our first theorem of this section, Theorem~\ref{theo:approx_theo}, is a general continuity result for functions of density operators under moment constraints. The moment constraints stated in the theorem follow for the entropies already from energy constraints on the states themselves. We elaborate on this in Lemma \ref{lemm:moment_bounds} in the appendix.

The theorem crucially relies on two results obtained by Aleksandrov and Peller, \cite[Theo. $5.8$]{AP} and \cite[Theo. $7.1$]{AP1}.
The first important result that is crucial for our purposes is a continuity bound for Schatten norms and H\"older continuous functions:
Fix $\alpha \in (0,1)$ and $p \in (1,\infty)$. There exists a universal constant $c>0$ such that for any function $f \in C^{\alpha}(\RR)$ and $A,B$ self-adjoint operators with $A-B \in \mathcal S_p,$ the operator $f(A)-f(B) \in \mathcal S_{p/\alpha}$ and 
\begin{equation}
\label{eq:Schatten}
\Vert f(A)-f(B) \Vert_{p/\alpha} \le c \vert f \vert_{\Lambda_{\omega_\alpha}} \Vert A-B \Vert_{p}^{\alpha}.
\end{equation} 
It is important to observe here that the result does not allow us to take $p/\alpha=1$ immediately. In fact, a bound for $p/\alpha=1$ has been found, but requires a higher level of regularity \cite[Theo. $6.2$]{AP}. Thus, we cannot directly apply the above estimates in trace distance but need to work in weaker Schatten norms, first.

The second result is a continuity bound that is merely in operator norm but for arbitrary moduli of continuity:
Similarly, for every function $\omega^*$ associated to a modulus of continuity $\omega$ as defined in Section \ref{sec:QCM}, there exists some $c>0$ such that for self-adjoint $A$ and $B$ and any $f \in \Lambda_{\omega}$ we have 
\begin{equation}
\label{eq:Op}
 \Vert f(A) - f(B) \Vert \le c \vert f\vert_{\Lambda_{\omega}} \omega^*(\Vert A-B\Vert).
 \end{equation}
Now, we have all the prerequisites to state the approximation theorem:
\begin{theo}[Approximation theorem]
\label{theo:approx_theo}
Let $\rho,\sigma$ be states, $f$ a measurable function and $\ha$ a positive Hamiltonian with compact resolvent such that for some $\beta>0$ we have $\tr(\ha^{\beta}\vert f \vert(\rho)), \tr(\ha^{\beta}\vert f \vert(\sigma)) \le \mu<\infty.$ Let $\varepsilon>0$, $f \in C^{\alpha}$, and take the spectral projection $P  \coloneqq \indic_{[0,\mu/\varepsilon]}(\hat H^{\beta})$. Then, there is $c>0$ such that for all $w/q<1$ and $p$ conjugate to $q$ 
\eqal{
\Vert f(\rho)- f(\sigma) \Vert_1 & \le \sqrt{8\varepsilon}\Big(\sqrt{\Vert f(\rho) \Vert_1} \\
& \quad + \sqrt{\Vert f(\sigma) \Vert_1}\Big)  +c \vert f \vert_{C^{w/q}}\Vert P\Vert_p \Vert \rho-\sigma\Vert_w,
}
and for general $f \in \Lambda_{\omega}$, modulus of continuity $\omega$, and integrated modulus of continuity $\omega^*$
\eqal{
\Vert f(\rho)- f(\sigma) \Vert_1 & \le \sqrt{8\varepsilon}\Big(\sqrt{\Vert f(\rho) \Vert_1} \\
& \quad + \sqrt{\Vert f(\sigma) \Vert_1}\Big)  +c \vert f \vert_{\Lambda_{\omega}}\Vert P\Vert_1 \omega^*(\Vert \rho-\sigma\Vert).
}
\end{theo}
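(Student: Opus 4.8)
The plan is to approximate $f(\rho)$ and $f(\sigma)$ by the finite-rank operators $Pf(\rho)P$ and $Pf(\sigma)P$, to control the approximation error by a gentle-measurement argument driven by the moment constraint, and then to invoke the Aleksandrov--Peller estimates \eqref{eq:Schatten}, \eqref{eq:Op} on the compressed level. Concretely I would start from the triangle inequality
\[
\Vert f(\rho)-f(\sigma)\Vert_1 \le \Vert f(\rho)-Pf(\rho)P\Vert_1 + \Vert P(f(\rho)-f(\sigma))P\Vert_1 + \Vert f(\sigma)-Pf(\sigma)P\Vert_1,
\]
and treat the three summands separately. A preliminary observation is that, since $\ha$ has compact resolvent, $\ha^{\beta}$ has purely discrete spectrum accumulating only at $+\infty$, so $P=\indic_{[0,\mu/\varepsilon]}(\ha^{\beta})$ is a \emph{finite}-rank projection; this is what renders $\Vert P\Vert_p$ and $\Vert P\Vert_1=\rank(P)$ finite and is essential below.

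First I would establish the tail estimate. Writing $Q:=I-P=\indic_{(\mu/\varepsilon,\infty)}(\ha^{\beta})$, on the range of $Q$ one has $\ha^{\beta}\ge \mu/\varepsilon$, hence the operator inequality $Q\le \tfrac{\varepsilon}{\mu}\,Q\ha^{\beta}Q$. Conjugating this by $\vert f\vert(\rho)^{1/2}$ and taking traces gives $\tr(Q\vert f\vert(\rho))\le \tfrac{\varepsilon}{\mu}\tr(Q\ha^{\beta}Q\,\vert f\vert(\rho))$, and since $Q\ha^{\beta}Q\le\ha^{\beta}$ (they commute) while $\vert f\vert(\rho)\ge 0$, the fact that $\tr(AC)\le\tr(BC)$ for $A\le B$ and $C\ge 0$ yields $\tr(Q\vert f\vert(\rho))\le \tfrac{\varepsilon}{\mu}\tr(\ha^{\beta}\vert f\vert(\rho))\le\varepsilon$, and likewise for $\sigma$. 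The only point requiring care is that $P$ does not commute with $\vert f\vert(\rho)$, so I pass through $Q\vert f\vert(\rho)Q$ rather than manipulating $P$ directly.

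Next I would run the gentle-measurement step. With the tail weight bounded by $\varepsilon$, I apply the argument behind the gentle measurement lemma \cite[Lemma 9]{W99} to the positive operator $\vert f\vert(\rho)$. Decomposing $f(\rho)-Pf(\rho)P = Pf(\rho)Q+Qf(\rho)P+Qf(\rho)Q$ and writing $f(\rho)=\vert f\vert(\rho)^{1/2}\,\sgn(f)(\rho)\,\vert f\vert(\rho)^{1/2}$ to absorb the sign of $f$, the cross terms are bounded by Cauchy--Schwarz as $\Vert Pf(\rho)Q\Vert_1\le\sqrt{\tr(P\vert f\vert(\rho))}\,\sqrt{\tr(Q\vert f\vert(\rho))}\le\sqrt{\varepsilon\,\Vert f(\rho)\Vert_1}$, while the diagonal term is controlled by $\tr(Q\vert f\vert(\rho))\le\varepsilon$. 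Collecting these (the routine constant-chasing) gives $\Vert f(\rho)-Pf(\rho)P\Vert_1\le\sqrt{8\varepsilon}\,\sqrt{\Vert f(\rho)\Vert_1}$, and the same for $\sigma$; these are exactly the first summands in both claimed bounds.

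Finally, for the middle term I exploit that the regularity of $f$ survives at the operator level only in a weaker norm. In the H\"older case I use H\"older's inequality for Schatten norms, $\Vert P(f(\rho)-f(\sigma))P\Vert_1\le\Vert P\Vert_p\,\Vert f(\rho)-f(\sigma)\Vert_q$ with $p,q$ conjugate, and then apply \eqref{eq:Schatten} with exponent $\alpha=w/q$ and Schatten index $w$ (so that the left index is $q$), which controls $\Vert f(\rho)-f(\sigma)\Vert_q$ through $\vert f\vert_{C^{w/q}}$ and $\Vert\rho-\sigma\Vert_w$; here the hypothesis $w/q<1$ is precisely the admissibility condition of \eqref{eq:Schatten}, and $\rho-\sigma$ being trace class lies in every $\mathcal S_w$. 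In the general $\Lambda_{\omega}$ case I instead bound $\Vert P(f(\rho)-f(\sigma))P\Vert_1\le\rank(P)\,\Vert f(\rho)-f(\sigma)\Vert=\Vert P\Vert_1\,\Vert f(\rho)-f(\sigma)\Vert$ and close with the operator-norm estimate \eqref{eq:Op}. The decisive obstacle, and the reason the projection $P$ is introduced at all, is the endpoint restriction in \eqref{eq:Schatten}: the Aleksandrov--Peller bound is unavailable at $p/\alpha=1$, so $\Vert f(\rho)-f(\sigma)\Vert_1$ cannot be estimated directly. Compressing by $P$ trades a factor $\Vert P\Vert_p$ (finite thanks to the compact resolvent) for the ability to run \eqref{eq:Schatten} in the admissible range $q=w/\alpha>1$, at the cost of the controllable error $\sqrt{8\varepsilon}$; balancing these two contributions in $\varepsilon$ is what ultimately produces the entropy bounds of Corollary \ref{corr:quantcont}.
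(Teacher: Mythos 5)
Your proposal has the same skeleton as the paper's proof: the same three-term triangle inequality around $Pf(\rho)P$ and $Pf(\sigma)P$, the same tail estimate $\tr((1-P)\vert f\vert(\rho))\le\varepsilon$ extracted from the moment constraint (your version, passing through $Q\vert f\vert(\rho)Q$ with $Q=I-P$, is in fact a more careful treatment of the non-commutativity than the paper's), and the same endgame via the Aleksandrov--Peller bounds \eqref{eq:Schatten} (read, correctly, with H\"older exponent $w/q$ and Schatten index $w$) and \eqref{eq:Op} for the compressed middle term. The genuine divergence is in how you bound $\Vert f(\rho)-Pf(\rho)P\Vert_1$. The paper expands $f(\rho)=\sum_k f(\lambda_k)\pi_k$ spectrally, applies Cauchy--Schwarz over $k$, the Fuchs--van de Graaf inequality \eqref{eq:FvdG} to each rank-one projection, and $(1-x^2)\le 2(1-x)$, which produces exactly the constant $\sqrt{8}$. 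You instead split into the corners $Pf(\rho)Q+Qf(\rho)P+Qf(\rho)Q$ and estimate each by Cauchy--Schwarz in the Hilbert--Schmidt norm after writing $f(\rho)=\vert f\vert(\rho)^{1/2}\sgn(f)(\rho)\vert f\vert(\rho)^{1/2}$; this is a clean, purely operator-theoretic alternative that avoids the rank-one gymnastics. However, the ``routine constant-chasing'' is not quite routine: your three corners give $2\sqrt{\varepsilon\Vert f(\rho)\Vert_1}+\varepsilon$, which exceeds $\sqrt{8\varepsilon\Vert f(\rho)\Vert_1}$ whenever $\varepsilon>(12-8\sqrt{2})\Vert f(\rho)\Vert_1$, so the stated first summand does not follow verbatim from your decomposition. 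The repair is one line: since $f(\rho)-Pf(\rho)P=Qf(\rho)+Pf(\rho)Q$, two corners suffice, and the same Cauchy--Schwarz estimates give $\Vert f(\rho)-Pf(\rho)P\Vert_1\le 2\sqrt{\varepsilon\Vert f(\rho)\Vert_1}\le\sqrt{8\varepsilon\Vert f(\rho)\Vert_1}$, which even improves on the paper's constant. With that substitution (and keeping your observation that the compact resolvent makes $P$ finite rank, so that $\Vert P\Vert_p$ and $\Vert P\Vert_1$ are finite), your argument is complete and yields both inequalities of the theorem.
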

\begin{proof}
We then find as in the proof of the Gentle Measurement Lemma \cite[Lemma $9$]{W99} that for any projection $P$ 
\begin{equation}
\begin{split}
\label{eq:estm}
& \Vert f(\rho)- Pf(\rho) P \Vert_1^2 \\
& \stackrel{\operatorname{(i)}}{\le} \left( \sum_k \vert f(\lambda_k)\vert \Vert \pi_k - P \pi_k P\Vert_1 \right)^2 \\
& \stackrel{\operatorname{(ii)}}{\le} \sum_k\vert f(\lambda_k)\vert \Vert \pi_k - P \pi_k P\Vert_1^2  \Vert f(\rho) \Vert_1 \\
&  \stackrel{\operatorname{(iii)}}{\le}4 \sum_k\vert f(\lambda_k)\vert (1-\tr(\pi_kP\pi_kP))  \Vert f(\rho) \Vert_1 \\
&  \stackrel{\operatorname{(iv)}}{\le} 8 \sum_k \vert f(\lambda_k)\vert (1-\tr(\pi_kP))  \Vert f(\rho) \Vert_1 \\
& \le 8 (\Vert f(\rho) \Vert_1-\tr(\vert f \vert(\rho)P))  \Vert f(\rho) \Vert_1,
\end{split}
\end{equation}
using (i) the spectral decomposition, (ii) the Cauchy-Schwarz inequality, (iii) the Fuchs-van-de Graaf inequality, cf. \eqref{eq:FvdG}, and (iv) $(1-x^2)\le 2(1-x).$
For a general $\alpha$-Hölder continuous function $f$, assuming that $\tr(\ha^{\beta}\vert f\vert(\rho) )\le \mu < \infty$ and $P$ as in the statement, we obtain
\eqal{
\mu & \ge \tr(\ha^{\beta} \vert f \vert(\rho)) \\
& = \tr(\ha^{\beta}  P\vert f \vert(\rho)) + \tr(\ha^{\beta} (1-P)\vert f \vert(\rho)) \\
& \ge \mu/\varepsilon \tr((1- P)\vert f \vert(\rho)).
}
This implies that  $\tr((1- P)\vert f \vert(\rho)) \le \varepsilon \Rightarrow \Vert f(\rho) \Vert_1- \tr(P \vert f \vert(\rho)) \le \varepsilon.$
From this, we get in \eqref{eq:estm} that $\Vert f(\rho)- Pf(\rho) P \Vert_1 \le \sqrt{8\Vert f(\rho) \Vert_1\varepsilon}.$
Putting it all together, we find for any admissible $w/q<1$, using \cite[Theo. $5.8$]{AP}, and $p$ being the Hölder conjugate exponent to $q$
\begin{equation}
\begin{split}
& \Vert f(\rho) -f(\sigma) \Vert_1 \\
& \le \Vert f(\rho) - P f(\rho) P \Vert_1 + \Vert P (f(\rho) -f(\sigma)) P \Vert_1 \\
& \quad + \Vert P f(\sigma) P  -f(\sigma)) \Vert_1 \\
&\le \sqrt{8\varepsilon}\Big(\sqrt{\Vert f(\rho) \Vert_1} + \sqrt{\Vert f(\sigma) \Vert_1}\Big) \\
& \quad + \Vert P \Vert_p \Vert (f(\rho) -f(\sigma))  \Vert_q\\
&\le \sqrt{8\varepsilon}\Big(\sqrt{\Vert f(\rho) \Vert_1} + \sqrt{\Vert f(\sigma) \Vert_1}\Big) \\
& \quad + c\vert f \vert_{C^{w/q}} \Vert P \Vert_p \Vert \rho-\sigma \Vert_w^{w/q}, 
\end{split}
\end{equation}
where we used \eqref{eq:Schatten} in the last step.
For the general case, we thus find, using \eqref{eq:Op},
\eqal{
\Vert f(\rho)-f(\sigma) \Vert & \le c \sqrt{8\varepsilon}\Big(\sqrt{\Vert f(\rho) \Vert_1} + \sqrt{\Vert f(\sigma) \Vert_1}\Big) \\
& \quad + c \vert f \vert_{\Lambda_{\omega}}\Vert P\Vert_1 \omega^*(\Vert \rho-\sigma\Vert).
}
\end{proof}
Let $(Tf)(\rho)  \coloneqq  \tr(f(\rho)),$ then $S(\rho)=Tf_1(\rho)$ is the \emph{von Neumann entropy} of the density operator $\rho$, $R_{\alpha}(\rho)=\frac{\log(Tf_{\alpha}(\rho))}{1-\alpha}$ its $\alpha$-\emph{R\'enyi entropy}, and $T_{\alpha}(\rho)=\frac{Tf_{\alpha}(\rho)-1}{1-\alpha}$ its $\alpha$-\emph{Tsallis entropy}. Using that $Tf_{\alpha}(\rho)\ge 1,$ we thus find the following immediate corollary, since $\vert \log(x)-\log(y)\vert \le \vert x-y\vert$ for $x,y \ge 1.$ 
\begin{corr}
\label{corr:quantcont}
Let $\alpha \in (0,1)$ and $\rho,\sigma$ be states and $\ha$ a positive Hamiltonian with compact resolvent such that for some $\beta>0$ we have $\tr(\ha^{\beta}f_{\alpha}(\rho)), \tr(\ha^{\beta}f_{\alpha}(\sigma)) \le \mu<\infty.$ Let $\varepsilon>0$ and take the spectral projection $P  \coloneqq \indic_{[0,\mu/\varepsilon]}(\hat H^{\beta})$. Then for all $\alpha=w/q<1$ and $p$ conjugate to $q$,
\eqal{
& \vert T_{\alpha}(\rho) -  T_{\alpha}(\sigma)\vert \\
& \le \frac{\sqrt{8\varepsilon}\Big(\sqrt{ \Vert f_{\alpha}(\rho)\Vert_1} + \sqrt{\Vert f_{\alpha}(\sigma)\Vert_1}\Big)  +c \Vert P\Vert_p \Vert \rho-\sigma\Vert_{\alpha q}^{\alpha}}{1-\alpha}
}
and
\eqal{
& \vert R_{\alpha}(\rho) -  R_{\alpha}(\sigma)\vert \\
& \le \frac{\sqrt{8\varepsilon}\Big(\sqrt{ \Vert f_{\alpha}(\rho)\Vert_1} + \sqrt{\Vert f_{\alpha}(\sigma)\Vert_1}\Big)  +c \Vert P\Vert_p \Vert \rho-\sigma\Vert_{\alpha q}^{\alpha}}{1-\alpha}.
}
\end{corr}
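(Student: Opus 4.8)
The plan is to reduce both continuity bounds to a single trace-norm estimate on $\Vert f_\alpha(\rho)-f_\alpha(\sigma)\Vert_1$ supplied by Theorem~\ref{theo:approx_theo}, and then to strip off the entropy-specific factors using elementary scalar inequalities. The function $f_\alpha(x)=x^\alpha$ is exactly the object the approximation theorem was built for: it lies in $C^\alpha$ with $\vert f_\alpha\vert_{\Lambda_{\omega_\alpha}}=1$ (recorded in Section~\ref{sec:QCM}), it is non-negative so that $\vert f_\alpha\vert=f_\alpha$ on the spectrum of a state, and hence the moment hypotheses $\tr(\ha^\beta f_\alpha(\rho)),\tr(\ha^\beta f_\alpha(\sigma))\le\mu$ are literally the hypotheses of the theorem. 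First I would therefore invoke Theorem~\ref{theo:approx_theo} with $f=f_\alpha$ and the identification $\alpha=w/q$ (equivalently $w=\alpha q$), which gives
\[
\Vert f_\alpha(\rho)-f_\alpha(\sigma)\Vert_1 \le \sqrt{8\varepsilon}\Big(\sqrt{\Vert f_\alpha(\rho)\Vert_1}+\sqrt{\Vert f_\alpha(\sigma)\Vert_1}\Big)+c\,\Vert P\Vert_p\,\Vert\rho-\sigma\Vert_{\alpha q}^{\alpha},
\]
which is precisely the numerator appearing in both claimed bounds.

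For the Tsallis entropy I would use the definition $T_\alpha(\rho)=(Tf_\alpha(\rho)-1)/(1-\alpha)$ with $Tf_\alpha(\rho)=\tr(f_\alpha(\rho))$, so that the constant offset cancels in the difference and
\[
\vert T_\alpha(\rho)-T_\alpha(\sigma)\vert=\frac{\vert\tr(f_\alpha(\rho))-\tr(f_\alpha(\sigma))\vert}{1-\alpha}\le\frac{\Vert f_\alpha(\rho)-f_\alpha(\sigma)\Vert_1}{1-\alpha},
\]
the last inequality being $\vert\tr(A)\vert\le\Vert A\Vert_1$ applied to $A=f_\alpha(\rho)-f_\alpha(\sigma)$. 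Inserting the trace-norm estimate yields the Tsallis bound verbatim.

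For the Rényi entropy the only extra ingredient is the logarithm. Since $\alpha\in(0,1)$ and the eigenvalues of a state lie in $[0,1]$, we have $\lambda^\alpha\ge\lambda$ and hence $Tf_\alpha(\rho)=\tr(\rho^\alpha)\ge\tr(\rho)=1$, and likewise for $\sigma$. On $[1,\infty)$ the logarithm is $1$-Lipschitz, $\vert\log x-\log y\vert\le\vert x-y\vert$ for $x,y\ge 1$, so
\[
\vert R_\alpha(\rho)-R_\alpha(\sigma)\vert=\frac{\vert\log Tf_\alpha(\rho)-\log Tf_\alpha(\sigma)\vert}{1-\alpha}\le\frac{\vert Tf_\alpha(\rho)-Tf_\alpha(\sigma)\vert}{1-\alpha}\le\frac{\Vert f_\alpha(\rho)-f_\alpha(\sigma)\Vert_1}{1-\alpha},
\]
and the same trace-norm estimate closes the argument.

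Since the corollary is a direct consequence of Theorem~\ref{theo:approx_theo}, I do not expect a genuine obstacle in the argument itself; the only point requiring care is the bookkeeping of the Hölder exponent, namely matching $w/q=\alpha$ with $w$ the Schatten index in which $\rho-\sigma$ is measured and $p$ its conjugate entering through $\Vert P\Vert_p$. The more substantive question --- whether the moment hypothesis $\tr(\ha^\beta f_\alpha(\rho))<\infty$ can actually be met for natural Hamiltonians --- is not part of this proof but is precisely what is deferred to Lemma~\ref{lemm:moment_bounds} in the appendix.
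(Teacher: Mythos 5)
Your proposal is correct and matches the paper's own argument essentially verbatim: the paper likewise applies Theorem~\ref{theo:approx_theo} to $f=f_\alpha$ (with $w=\alpha q$), bounds the trace difference by $\Vert f_\alpha(\rho)-f_\alpha(\sigma)\Vert_1$, and for the R\'enyi case uses $Tf_\alpha(\rho)\ge 1$ together with $\vert\log x-\log y\vert\le\vert x-y\vert$ for $x,y\ge 1$. Your write-up is in fact slightly more explicit than the paper's one-line justification, but there is no substantive difference.
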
 
\noindent Estimates on $\Vert f_{\alpha}(\rho)\Vert_1,\Vert f_{\alpha}(\sigma)\Vert_1$ can be found in Lemma \ref{lemm:moment_bounds}.

\subsection{The case $\alpha>1$} 
The case $\alpha>1$ is fundamentally different from the case $\alpha \in (0,1)$ studied before. For completeness, we state the following Proposition on $\alpha$-R\'enyi and $\alpha$-Tsallis entropies for $\alpha>1.$ For $\alpha>1$, the $\alpha$-Tsallis entropy in fact becomes Lipschitz continuous, as has already been observed in \cite{tsallis}. This is different from the $\alpha$-R\'enyi entropy which is not uniformly continuous for any $\alpha>1.$
\begin{prop}
\label{prop:alphag1}
Let $\alpha>1$.
The $\alpha$-Tsallis entropy is always Lipschitz continuous with respect to the $\alpha$-Schatten distance:
$$\vert T_{\alpha}(\rho) -T_{\alpha}(\sigma)\vert = \frac{\vert \Vert \rho \Vert^{\alpha}_{\alpha} - \Vert \sigma \Vert^{\alpha}_{\alpha} \vert}{\alpha-1}\le \frac{\alpha}{\alpha-1} \Vert \rho-\sigma\Vert_{\alpha}.$$
The $\alpha$-R\'enyi entropy is not uniformly continuous on the set of states unless an energy constraint is imposed. Thus, let $\rho,\sigma$ be states such that at least one of the following two conditions holds,
\begin{enumerate}
    \item $\tr(\rho^{\alpha}),\tr(\sigma^{\alpha}) \ge \delta^{-1} >0$ for some $\delta>0$ or 
    \item There exists a positive definite Hamiltonian such that $\tr(\ha \rho),\tr(\ha\sigma) \le E<\infty$ and for some $\beta \in (0,1),$ we have $\tau \coloneqq \tr\Big(\ha^{-\frac{\alpha \beta}{(1-\beta)(\alpha-1)}}\Big) <\infty$ such that we can define $\delta \coloneqq E^{\frac{\alpha\beta}{1-\alpha}} \tau^{\alpha-1}>0,$
\end{enumerate}
then 
\[ \vert R_{\alpha}(\rho) - R_{\alpha}(\sigma) \vert  \le \frac{\alpha \delta}{1-\alpha} \Vert \rho - \sigma \Vert_{\alpha}.\]
\end{prop}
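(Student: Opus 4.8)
The plan is to treat the three assertions separately, since they rest on genuinely different mechanisms: the Tsallis bound is a mean-value plus trace-Hölder estimate, the failure of uniform continuity of $R_\alpha$ needs an explicit family, and the quantitative $R_\alpha$ bound is obtained by reducing the logarithm to the Tsallis quantity through a lower bound on $\tr(\rho^\alpha)$.

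For the Tsallis entropy I would first rewrite, for $\alpha>1$, $T_\alpha(\rho)=\frac{1-\tr(\rho^\alpha)}{\alpha-1}$, so that $|T_\alpha(\rho)-T_\alpha(\sigma)|=\frac{|\tr(\rho^\alpha)-\tr(\sigma^\alpha)|}{\alpha-1}=\frac{|\Vert\rho\Vert_\alpha^\alpha-\Vert\sigma\Vert_\alpha^\alpha|}{\alpha-1}$, which is already the first (in fact sharp) inequality. For the second it suffices to show $|\tr(\rho^\alpha)-\tr(\sigma^\alpha)|\le\alpha\Vert\rho-\sigma\Vert_\alpha$, which I would prove along the segment $A(s)=(1-s)\sigma+s\rho$, $s\in[0,1]$, each $A(s)$ being a state with spectrum in $[0,1]$. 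Using $\frac{d}{ds}\tr f(A(s))=\tr(f'(A(s))A'(s))$ with $f(t)=t^\alpha$ gives $\tr(\rho^\alpha)-\tr(\sigma^\alpha)=\alpha\int_0^1\tr(A(s)^{\alpha-1}(\rho-\sigma))\,ds$, and the trace Hölder inequality with conjugate exponents $\alpha/(\alpha-1)$ and $\alpha$ bounds the integrand in modulus by $\alpha\Vert A(s)^{\alpha-1}\Vert_{\alpha/(\alpha-1)}\Vert\rho-\sigma\Vert_\alpha=\alpha\Vert A(s)\Vert_\alpha^{\alpha-1}\Vert\rho-\sigma\Vert_\alpha\le\alpha\Vert\rho-\sigma\Vert_\alpha$, since $\Vert A(s)\Vert_\alpha\le\Vert A(s)\Vert_1=1$.

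For the non-uniform continuity of $R_\alpha$ I would exhibit a family. Let $\rho_d$ and $\sigma_d$ be the uniform (maximally mixed) states on the first $d$ and the first $2d$ energy levels, respectively; then $R_\alpha(\rho_d)=\log d$ and $R_\alpha(\sigma_d)=\log(2d)$, so the entropy gap equals $\log 2$ for every $d$, whereas the nonzero eigenvalues of $\rho_d-\sigma_d$ all have modulus $1/(2d)$ with multiplicity $2d$, giving $\Vert\rho_d-\sigma_d\Vert_\alpha=(2d)^{(1-\alpha)/\alpha}\to0$ as $d\to\infty$ because $\alpha>1$. A fixed gap together with vanishing $\alpha$-Schatten distance contradicts uniform continuity. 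For the quantitative bound, the common feature of hypotheses (1) and (2) is a lower bound $\tr(\rho^\alpha),\tr(\sigma^\alpha)\ge\delta^{-1}$: under (1) this is assumed, under (2) it must be derived. Granting it, I would write $|R_\alpha(\rho)-R_\alpha(\sigma)|=\frac{1}{\alpha-1}|\log\tr(\rho^\alpha)-\log\tr(\sigma^\alpha)|$ and use that $t\mapsto\log t$ has derivative at most $\delta$ on $[\delta^{-1},\infty)$, so $|\log\tr(\rho^\alpha)-\log\tr(\sigma^\alpha)|\le\delta|\tr(\rho^\alpha)-\tr(\sigma^\alpha)|$; combined with the Tsallis estimate this gives $|R_\alpha(\rho)-R_\alpha(\sigma)|\le\frac{\alpha\delta}{\alpha-1}\Vert\rho-\sigma\Vert_\alpha$, the asserted bound.

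The main obstacle is deriving the lower bound on $\tr(\rho^\alpha)$ from the energy constraint in (2). I would diagonalize $\ha=\sum_i E_i\ket{e_i}\bra{e_i}$ and set $p_i=\langle e_i|\rho|e_i\rangle$, so that $\sum_i p_i=1$ and $\sum_i E_i p_i=\tr(\ha\rho)\le E$; since $t\mapsto t^\alpha$ is convex and the diagonal of $\rho$ is majorized by its spectrum (Schur--Horn), one has $\sum_i p_i^\alpha\le\tr(\rho^\alpha)$, reducing the problem to a lower bound on $\sum_i p_i^\alpha$. Applying the three-term Hölder inequality to the decomposition $p_i=(p_i^\alpha)^{1/P}(E_i p_i)^{1/Q}(E_i^{-\kappa})^{1/R}$ with $\tfrac1P+\tfrac1Q+\tfrac1R=1$, where the exponents are fixed (using $\beta\in(0,1)$) by $1/Q=\beta$ and $\kappa=\frac{\alpha\beta}{(1-\beta)(\alpha-1)}$, yields $1=\sum_i p_i\le(\tr\rho^\alpha)^{1/P}E^{\beta}\tau^{1/R}$, and solving for $\tr(\rho^\alpha)$ produces the constant $\delta$. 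The delicate points are verifying $P,Q,R>1$ so that the forward Hölder inequality is legitimate, and a short computation confirming that the resulting exponents of $E$ and $\tau$ agree with the definition of $\delta$; everything else is routine.
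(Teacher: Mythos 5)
Your proof is correct, and it differs from the paper's argument at two substantive points. For the Tsallis bound, the paper avoids your operator-derivative computation entirely: it writes $\vert \Vert\rho\Vert_\alpha^\alpha-\Vert\sigma\Vert_\alpha^\alpha\vert = \vert x^\alpha-y^\alpha\vert$ with $x=\Vert\rho\Vert_\alpha,\,y=\Vert\sigma\Vert_\alpha\in[0,1]$, applies the scalar mean value theorem ($\vert x^\alpha-y^\alpha\vert\le\alpha\vert x-y\vert$ on $[0,1]$), and finishes with the reverse triangle inequality for Schatten norms. Your route via $\frac{d}{ds}\tr\bigl(A(s)^\alpha\bigr)=\alpha\tr\bigl(A(s)^{\alpha-1}A'(s)\bigr)$ plus trace-H\"older gives the same constant, but the differentiability of this trace functional along trace-class segments in infinite dimensions is a nontrivial lemma you invoke implicitly (it is true, but needs justification, especially for $1<\alpha<2$ where the scalar derivative is only H\"older continuous); the paper's scalar argument is more elementary and sidesteps this. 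For the lower bound on $\tr(\rho^\alpha)$ under hypothesis (2), the paper works with the eigenvalue sequences of $\rho$ and $\ha$, pairs them anti-monotonically, invokes its Courant--Fischer proposition to get $\sum_i\mu_i\lambda_i\le\tr(\ha\rho)$, and applies two-term H\"older twice; you instead pass to the diagonal $p_i=\langle e_i\vert\rho\vert e_i\rangle$ in the $\ha$-eigenbasis, for which $\sum_i E_ip_i=\tr(\ha\rho)$ holds exactly, and use majorization of the diagonal by the spectrum (or, even more simply, Jensen: $p_i^\alpha\le\langle e_i\vert\rho^\alpha\vert e_i\rangle$) together with a single three-factor H\"older. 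Your exponent bookkeeping checks out: $P=\alpha/(1-\beta)$, $Q=1/\beta$, $R=\alpha/\bigl((1-\beta)(\alpha-1)\bigr)$ are all greater than $1$, and you recover $\delta=E^{\alpha\beta/(1-\beta)}\tau^{\alpha-1}$, which agrees with the paper's own proof (the exponent $\frac{\alpha\beta}{1-\alpha}$ in the proposition statement, and the sign $1-\alpha$ in its final displayed bound, are typos). Your approach buys a cleaner reduction, since the diagonal trick makes the energy constraint an exact identity rather than an inequality requiring eigenvalue rearrangement, while the paper's buys self-containedness by reusing its Proposition on Courant--Fischer. Finally, you supply an explicit family ($\rho_d,\sigma_d$ maximally mixed on $d$, respectively $2d$, levels, with entropy gap $\log 2$ and $\Vert\rho_d-\sigma_d\Vert_\alpha=(2d)^{(1-\alpha)/\alpha}\to0$) witnessing the failure of uniform continuity of $R_\alpha$; the paper asserts this without proof, so that is a genuine addition on your part.
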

\begin{proof}
For the $\alpha$-R\'enyi entropy, we proceed as follows. Let $\Spec(\rho) = \{ \lambda_n; n \in \mathbb N\}$ and $\Spec(\ha)=\{ \mu_n; n \in \mathbb N\}$, then we find
\begin{equation}
    \begin{split}
    \label{eq:hoelderrenyi}
        1 &= \sum_{i=1}^{\infty} \lambda_i = \sum_{i=1}^{\infty} \mu_i^{\beta} \lambda_i^{\beta} \lambda_i^{1-\beta} \mu_i^{-\beta} \\
        &\stackrel{(1)} \le \left( \sum_{i=1}^{\infty} \mu_i \lambda_i \right)^{\beta} \left( \sum_{i=1}^{\infty} \frac{1}{\mu_i^{\frac{\beta}{1-\beta}}} \lambda_i \right)^{1-\beta} \\
        & \stackrel{(2)} \le (\tr(\ha \rho))^{\beta} (\tr(\rho^{\alpha}))^{\frac{1-\beta}{\alpha}} \left( \sum_{i=1}^{\infty} \mu_i^{-\frac{\alpha \beta}{(1-\beta)(\alpha-1)}} \right)^{\frac{(1-\beta)(\alpha-1)}{\alpha}},
    \end{split}
\end{equation}
where in (1) and (2) we use H\"older's inequality, and for the first term in (2) we use the Courant-Fischer theorem, Proposition~\ref{prop:CFT}. This implies by rearranging \eqref{eq:hoelderrenyi} that 
\begin{align}
     \Vert \rho \Vert^{\alpha}_{\alpha} &\ge \left( (\tr(\ha \rho))^{\frac{\alpha\beta}{1-\beta}} (\tr(\ha^{-\frac{\alpha \beta}{(1-\beta)(\alpha-1)}}))^{\alpha-1} \right)^{-1} \nonumber\\
&\ge \left(E^{\frac{\alpha\beta}{1-\beta}}\Bigl(\tr\Big(\ha^{-\frac{\alpha \beta}{(1-\beta)(\alpha-1)}}\Big)\Bigr)^{\alpha-1} \right)^{-1}=:\frac{1}{\delta}.
\end{align}

Thus, we have proven that $\tr(\rho^{\alpha}) \ge \delta^{-1}>0$ for some $\delta >0.$ 
Hence, for the $\alpha$-R\'enyi entropy we get, 
\begin{equation}
    \begin{split}
    \label{eq:renyiest}
    \vert R_{\alpha}(\rho) - R_{\alpha}(\sigma) \vert 
    &\stackrel{(1)}{\le} \frac{\delta}{\alpha-1} \vert \Vert \rho \Vert^{\alpha}_{\alpha} - \Vert \sigma \Vert^{\alpha}_{\alpha} \vert \\
    &  \stackrel{(2)}\le \alpha \frac{\delta}{\alpha-1} \vert \Vert \rho \Vert_{\alpha}-  \Vert \sigma \Vert_{\alpha}\vert \\
    & \le  \alpha \frac{\delta}{1-\alpha} \Vert \rho - \sigma \Vert_{\alpha},
\end{split}
\end{equation}
where in (1) we use that $\vert \log(x)-\log(y)\vert \le \delta \vert x-y\vert$ for all $x,y \ge \delta^{-1}$ and in (2) we used that $\vert x^{\alpha} - y^{\alpha}\vert \le \alpha \vert x-y \vert$ for $x,y \in [0,1].$ Both estimates are readily verified using the mean value theorem. 

For the $\alpha$-Tsallis entropy we end up directly estimating as in the last part of \eqref{eq:renyiest} $$\vert T_{\alpha}(\rho) -T_{\alpha}(\sigma)\vert = \frac{\vert \Vert \rho \Vert^{\alpha}_{\alpha} - \Vert \sigma \Vert^{\alpha}_{\alpha} \vert}{\alpha-1}\le \frac{\alpha}{\alpha-1} \Vert \rho-\sigma\Vert_{\alpha}.$$
\end{proof}

\subsection{The Finite-dimensional Approximation (FA) property}
\label{sec:FAproperty}
As has been discussed in the papers~\cite{W78,W15} and has also in this article, continuity bounds for states on infinite-dimensional Hilbert spaces often rely on constraints on the energy of states by some Hamiltonian to make the entropy functional continuous. It is now tempting to turn this question around and ask if there always exists a natural Hamiltonian, defining a Gibbs state, for any state of finite entropy. 

There are various functionals that are not continuous with respect to trace distance. For instance, in any arbitrarily small neighbourhood of a state in an infinite-dimensional Hilbert space, there exists a state of infinite entropy. The following condition that excludes many such pathological states, for various quantities of information theoretic interest, was recently proposed by Shirokov \cite{S21,S21b}:
\begin{defi}[FA-property]
A state $\rho = \sum_{k \in \mathbb N} \lambda_k \pi_k$ satisfies the \emph{FA}-property if there is a sequence $(g_k)$ with $g_k \ge 0$ such that 
\begin{align}
&\sum_{k \in \mathbb N} \lambda_k g_k< \infty  \label{eq:one}\\
\text{ and } \quad &\lim_{\beta \downarrow 0} \left( \sum_{k \in \mathbb N} e^{-\beta g_k }\right)^{\beta} =1.\label{eq:two}
\end{align}
\end{defi}
It was observed in \cite[Theo. 1]{S21} that if a state $\rho$ satisfies the FA property, then $\rho$ has finite von Neumann entropy.  In that same article, the question was raised whether any state $\rho$ that has finite von Neumann entropy, necessarily satisfies the FA property. 

In particular, for $\lambda_k \in \Spec(\rho)$ it has been shown that once  $\sum_{k \ge 1} \lambda_k \log(k)^q <\infty$ for some $q>2,$ then $\rho$ satisfies the FA-property. Yet, there clearly exist states that do not fall in this category.
Let $\rho_{\alpha}$ be a state whose eigenvalues $\lambda_{\alpha,k}$ are, up to a normalizing constant
\begin{equation}
    \nu_{\alpha} \coloneqq \sum_{k \ge 2} \frac{1}{k\log(k)^{\alpha}},
\end{equation}
given as $\lambda_{\alpha,k} =\frac{1}{\nu_{\alpha} (k \log(k)^{\alpha})}$ for $\alpha \in (2,3)$.  
Any such $\rho$ has finite entropy. 
This can be seen as follows:
\begin{align}
    S(\rho_{\alpha}) &= H(\{\lambda_{\alpha,k}\}) = - \sum_{k \ge 2} \lambda_{\alpha,k} \log \lambda_{\alpha,k} = \sum_{k \ge 2} a_k,
\end{align}
where 
$$a_k = \frac{\log \left(\nu_{\alpha} k \log (k)^{\alpha}\right)}{\nu_{\alpha} k \log (k)^{\alpha}}.$$
The terms in the sequence $(a_k)$ are all positive and decrease for $k$ large enough. Note that
 $$\sum_{k \ge 2 } 2^k a_{2^k} = a\sum_{k \ge 2} \frac{1}{k^\alpha} + b\sum_{k \ge 2} \frac{1}{k^{\alpha-1}} + c\sum_{k \ge 2}\frac{\log k}{k^{\alpha}} < \infty,$$
 for some constants $a,b,c$. By Cauchy's condensation test this implies that $\sum_{k } a_k$ converges, and hence $S(\rho) < \infty$.
We now want to argue that any such $\rho$ cannot satisfy the FA-property answering the question raised in \cite{S21} in a negative way:

\begin{theo}
\label{theo:FAprop}
Any state $\rho$ with spectrum $\Spec(\rho)=\left\{\lambda_n; n \in \mathbb N\right\}$ such that $\lambda_n \le \nu/(n \log(n)^{3})$ for almost all $n \ge 2$, with normalizing constant $\nu>0,$ does not satisfy the FA-property.
In particular, the set of states satisfying the FA-property is strictly smaller than the set of finite entropy states.
\end{theo}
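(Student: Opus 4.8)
The plan is to argue by contradiction. Suppose $\rho$ did satisfy the FA-property, witnessed by a sequence $(g_k)$ with $g_k\ge 0$ obeying \eqref{eq:one} and \eqref{eq:two}; I will show that \eqref{eq:two} must then fail. The first step is to normalise the witness. Sorting the spectrum as $\lambda_1\ge\lambda_2\ge\cdots$ and replacing $(g_k)$ by its non-decreasing rearrangement $g^{\uparrow}_1\le g^{\uparrow}_2\le\cdots$ leaves the multiset $\{g_k\}$ — and hence the partition function in \eqref{eq:two} — unchanged, while by the rearrangement inequality it does not increase the value of $\sum_k\lambda_k g_k$ in \eqref{eq:one} (pairing the non-increasing $(\lambda_k)$ with the non-decreasing rearrangement minimises the sum of products). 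Thus I may assume throughout that $g_1\le g_2\le\cdots$ and $\lambda_1\ge\lambda_2\ge\cdots$, keeping both \eqref{eq:one} and \eqref{eq:two}.

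The heart of the argument is that the spectral condition forces $(g_n)$ to grow slowly. Here the hypothesis enters as a lower bound on the sorted eigenvalues — this is the direction relevant to the FA obstruction, and it is exactly what the motivating family $\rho_\alpha$ with $\alpha\in(2,3)$ satisfies, since $\nu_\alpha/(n\log(n)^\alpha)\ge \nu/(n\log(n)^{3})$ for large $n$. Using $\lambda_n\ge \nu/(n\log(n)^{3})$ for almost all $n$, condition \eqref{eq:one} gives
\[
\infty>\sum_n\lambda_n g_n\ge \nu\sum_{n\ge 2}\frac{g_n}{n\log(n)^{3}}.
\]
Since $(g_n)$ is non-decreasing, the tail estimate $\sum_{n\ge N}\frac{1}{n\log(n)^{3}}\sim \frac{1}{2\log(N)^{2}}$ yields $\sum_{n\ge N}\frac{g_n}{n\log(n)^{3}}\ge g_N\cdot\frac{1}{2\log(N)^{2}}(1+o(1))$, and because the left-hand side is the tail of a convergent series it tends to $0$. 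This forces the pointwise bound $g_n=o(\log(n)^{2})$.

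Finally I feed this growth bound into \eqref{eq:two}. Fix $\eps>0$; then $g_n\le \eps\log(n)^{2}$ for $n\ge n_0(\eps)$, so $Z(\beta)\coloneqq\sum_k e^{-\beta g_k}\ge \sum_{n\ge n_0}e^{-\beta\eps\log(n)^{2}}$. Comparing with the integral and substituting $x=e^{u}$ produces the Gaussian integral $\int e^{u-\beta\eps u^{2}}\,du$, whose maximum occurs at $u^{*}=\tfrac{1}{2\beta\eps}$ with value $\tfrac{1}{4\beta\eps}$; its Laplace asymptotics give $Z(\beta)\ge \exp\!\big(\tfrac{1}{4\beta\eps}\big)\sqrt{\pi/(\beta\eps)}\,(1+o(1))$ as $\beta\downarrow 0$. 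Hence $\beta\log Z(\beta)\ge \tfrac{1}{4\eps}+o(1)$, and since $\eps>0$ is arbitrary, $\liminf_{\beta\downarrow 0}\beta\log Z(\beta)=+\infty$; in particular $\big(\sum_k e^{-\beta g_k}\big)^{\beta}\not\to 1$, contradicting \eqref{eq:two}. Therefore no admissible $(g_k)$ exists, $\rho$ fails the FA-property, and because $\rho_\alpha$ has finite entropy the set of FA-states is strictly smaller than the set of finite-entropy states.

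I expect the delicate step to be the passage from the averaged summability in \eqref{eq:one} to the pointwise decay $g_n=o(\log(n)^{2})$: this is where monotonicity of the rearranged $(g_n)$ and the sharp constant in $\int^{\infty}\!dx/(x\log(x)^{3})\sim 1/(2\log(N)^{2})$ are essential, and it is precisely what pins the exponent $3$ in the hypothesis to the critical threshold $\log(n)^{2}$ at which \eqref{eq:two} breaks down. The rearrangement reduction of the first step and the uniformity of the Laplace estimate in $\eps$ will also require a little care.
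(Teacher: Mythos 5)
Your proof is correct, and it takes a genuinely different route from the paper's. Two small repairs first. (i) The statement as printed has $\lambda_n \le \nu/(n\log(n)^{3})$, but, as you inferred, the operative hypothesis is the lower bound $\lambda_n \ge \nu/(n\log(n)^{3})$: this is the direction the paper's own proof uses when it deduces \eqref{eq:351} from \eqref{eq:one} by comparison, and it is what the motivating family $\rho_\alpha$, $\alpha\in(2,3)$, satisfies; so your reading matches the intended theorem. (ii) Your rearrangement step needs one patch: a non-decreasing rearrangement of $(g_k)$ indexed by $\mathbb N$ exists only if $g_k\to\infty$ (otherwise infinitely many terms lie below some level $M$ and no rearrangement of order type $\omega$ can list them first). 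But if $g_k\not\to\infty$, then $\sum_k e^{-\beta g_k}\ge \sum_{k:\,g_k\le M}e^{-\beta M}=\infty$ for every $\beta>0$, so \eqref{eq:two} fails outright; hence you may assume $g_k\to\infty$ without loss of generality, and the infinite-sum rearrangement inequality you invoke then holds for non-negative terms by truncation and monotone convergence.

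The comparison with the paper is instructive. The paper never normalizes $(g_k)$; it keeps the given pairing and instead splits $\mathbb N$ into $A_+=\{k:\,g_k\le\log(k)^2\}$ and its complement $A_-$, bounds $\sum_k e^{-\beta\log(k)^2}$ below by the integral $\int_1^\infty e^{-\beta\log(x)^2}\,dx$, which carries the factor $e^{1/(4\beta)}$, and then must show the subtracted contribution of $A_-$ is small near the Laplace peak $k\approx 1/(2\beta)$. This forces the counting machinery $K(k)=\vert\log(A_-)\cap[k,k+1)\vert$, the window cardinality $\gamma_\beta$, and an appeal to Lemma \ref{lemm:auxlemm}, which produces only a sequence $\beta_n\downarrow 0$ along which $\beta_n\log\big(\sum_k e^{-\beta_n g_k}\big)$ stays near $1/4$ — enough to violate \eqref{eq:Shirokov}, but only along a subsequence and with a fixed constant. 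Your monotone-rearrangement normalization converts the averaged summability into the pointwise bound $g_n=o(\log(n)^2)$ (using the sharp tail $\sum_{n\ge N} 1/(n\log(n)^3)\sim 1/(2\log(N)^2)$), after which a single Laplace estimate yields $\beta\log\big(\sum_k e^{-\beta g_k}\big)\ge \tfrac{1}{4\eps}+o(1)$ for every $\eps>0$, i.e.\ the full limit diverges to $+\infty$ as $\beta\downarrow 0$ — a quantitatively stronger failure of \eqref{eq:two}, with no exceptional-set bookkeeping. What buys this simplification is the observation that \eqref{eq:two} depends only on the multiset $\{g_k\}$ while re-pairing it with the sorted spectrum can only decrease \eqref{eq:one}; the paper's argument, by contrast, works with an arbitrary pairing at the cost of the density estimates over $A_-$. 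Both proofs pin the same critical threshold: exponent $3$ in the spectral decay corresponds exactly to the borderline growth $g_k\sim\log(k)^2$ at which the Gibbs condition \eqref{eq:two} breaks down.
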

\begin{proof}
The comparison test for the sequence implies that for a positive sequence $\lambda_n$ satisfying the condition of Theorem~\ref{theo:FAprop}, $\sum_{k \ge 2} \lambda_k g_k <\infty$ implies
\eqal{\label{eq:351}
    & \nu \sum_{k \ge 4} \frac{g_k}{k \log(k)^2(\log(k)-1)} \\
    & = \nu \sum_{k \ge 4} \frac{g_k}{k \log(k)^3} + \nu \sum_{k \ge 4} \frac{g_k}{k \log(k)^3(\log(k)-1)}< \infty,
}
where the last sum converges by the majorant criterion.
Continuity of the logarithm, implies that the Gibbs part of the condition of the FA-property can be rewritten as
\begin{equation}
\label{eq:Shirokov}
 \lim_{\beta \downarrow 0} \beta \log\left(\sum_{k \in \mathbb N} e^{-\beta g_k } \right) =0.
 \end{equation}
 Our aim is now to show that for any sequence $(g_k)$ such that the first part, \eqref{eq:one}, of the FA-property holds, \textit{i.e.} $\sum_k \lambda_k g_k <\infty$, the second one, \eqref{eq:two} in the form of \eqref{eq:Shirokov}, is necessarily violated.
Writing $A_+ \coloneqq \{k ; g_k \le \log(k)^2\}$ and $A_-$ for its complement, we find by monotonicity and $\mathbb N=A_+\dot\cup A_-$ in (1) and again monotonicity in (2) to estimate the series by an integral
\eqal{ \label{eq:A+}
\sum_{ k \in A_+ } e^{-\beta g_k}
& \stackrel{(1)}{\ge} \sum_{k \in \mathbb N} e^{-\beta \log(k)^2} - \sum_{k \in A_-} e^{-\beta \log(k)^2} \\
& \stackrel{(2)}{\ge}\int_1^{\infty} e^{-\beta \log(x)^2} \ dx - \sum_{k \in A_-} e^{-\beta \log(k)^2}.
} 
Quite explicitly, we observe that by substituting $x=e^u$, we have
\eqal{
\int_1^{\infty} e^{-\beta \log(x)^2} \ dx & =\int_0^{\infty} e^{u(1-\beta u)} \ du \\
& = e^{\frac{1}{4\beta}}\tfrac{\sqrt{\pi} \left (1+\operatorname{erf}\left(\tfrac{1}{2\sqrt{\beta}}\right) \right)}{2\sqrt{\beta}}.
}
Monotonicity of the logarithm yields that for all $k \in A_-$ such that $\log(k) \in [n,n+1)$ we have $e^{-\beta \log(k)^2} \le e^{-\beta n^2}$  
\begin{equation}
\label{eq:logbound}
\sum_{k \in A_-} e^{-\beta \log(k)^2} \le \sum_{n \in \mathbb N} \vert \log(A_-) \cap [n,n+1)\vert e^{-\beta n^2},
\end{equation} 
where $\log(S):=\{\log(s); s\in S\}$ for a suitable set $S.$
We then introduce renormalized coefficients $\alpha_k \coloneqq K(k)/e^k \le (e-1)$, with $K(k) \coloneqq \vert \log(A_-) \cap [k,k+1)\vert,$ where we used for the last inequality the simple worst case estimate by replacing $A_-$ in the definition of $K(k)$ by $\mathbb N$, in which case $K(k)$ may be replaced by $e^{k+1}-e^k$, as they obey the same asymptotic scaling. This way, we find that for any $\delta>0$
\eqal{
\sum_{k \in A_-} e^{-\beta \log(k)^2} & \stackrel{\eqref{eq:logbound}}{\le}  \sum_{k \in \mathbb N} \alpha_k e^{k(1-\beta k)} \\
& \stackrel{(1)}{\le} \delta \int_0^{\infty} e^{u(1-\beta u)} \ du \\
& \quad + \delta e^{\frac{1}{4\beta}}+  \sum_{k \in \mathbb N;\alpha_k \ge \delta} \alpha_k  e^{k(1-\beta k)},
}
where in (1) we split the sum into parts $\{\alpha_k \ge \delta\}$ and its complement and estimated the latter series by an integral and its maximum value.
We then estimate in terms of the cardinality $\gamma_{\beta} \coloneqq \left\vert k \in [\tfrac{1-\sqrt{\beta}}{2\beta},\tfrac{1+\sqrt{\beta}}{2\beta}] \cap \mathbb N; \alpha_k \ge \delta\right\vert,$ by just using monotonicity and $\vert \alpha_k \vert \le e-1$ 
\eqal{
 & \sum_{k \in \mathbb N;\alpha_k \ge \delta} \alpha_k  e^{k(1-\beta k)} \\
 & \le (e-1)\left(\left( \int_0^{\tfrac{1-\sqrt{\beta}}{2\beta}}+\int_{\tfrac{1+\sqrt{\beta}}{2\beta}}^{\infty}\right) e^{x(1-\beta x)} \ dx +\gamma_{\beta} e^{\frac{1}{4\beta}} \right) \\
 & = e^{\frac{1}{4\beta}} \frac{\sqrt{\pi}}{2\sqrt{\beta}} \big( (e-1)(1-2\operatorname{erf}(\tfrac{1}{2}) + \operatorname{erf}(\tfrac{1}{2\sqrt{\beta}})) \\
 & \quad \quad \quad \quad \quad \quad +\tfrac{2(e-1)}{\sqrt{\pi}}\sqrt{\beta} \gamma_{\beta} \big),
} 
where we used in addition that by differentiation we see that $e^{k(1-\beta k)} \le e^{1/(4\beta)}$ for $k \in [\tfrac{1-\sqrt{\beta}}{2\beta},\tfrac{1+\sqrt{\beta}}{2\beta}].$
Applying all the above inequalities to \eqref{eq:A+}, we find for $\delta>0$ sufficiently small, which we shall assume from now on,

\begin{align}
    \sum_{ k \in {\mathbb{N}} } e^{-\beta g_k}&\geq \sum_{ k \in A_+ } e^{-\beta g_k}\geq e^{\frac{1}{4\beta}}\left[A(\beta)(1-\delta) - \delta - B(\beta) \right],
\end{align}
where
\begin{equation}
A(\beta) \coloneqq \tfrac{\sqrt{\pi}(1+\operatorname{erf}((2\sqrt{\beta})^{-1}))}{2\sqrt{\beta}}    
\end{equation}
and
\eqal{
B(\beta) & \coloneqq \frac{\sqrt{\pi}}{2\sqrt{\beta}} \Big( (e-1)(1-2\operatorname{erf}(\tfrac{1}{2}) + \operatorname{erf}(\tfrac{1}{2\sqrt{\beta}})) \\
& \quad \quad \quad \quad \quad +\tfrac{2(e-1)}{\sqrt{\pi}} \beta^{1/2}\gamma_{\beta} \Big).
}
Recall that we want to show $\lim_{\beta \downarrow 0} \beta \log\left(\sum_{k \in \mathbb N} e^{-\beta g_k } \right)\neq 0$.
So upon taking the logarithm and multiplying by $\beta$ the exponential function $e^{\frac{1}{4\beta}}$ contributes a term $\beta \log(e^{\frac{1}{4\beta}})=\frac{1}{4}$ which is non-zero. To show that \eqref{eq:Shirokov} does not hold, it suffices therefore to show that, for fixed $\delta>0$ small,  $A(\beta)(1-\delta)-\delta - B(\beta)$ is bounded uniformly away from zero for choices of $\beta=\beta_n$ with a sequence $(\beta_n)$ that tends to zero. In fact, we have that
\begin{equation}
    \label{eq:goal}
    A(\beta)(1-\delta)-\delta - B(\beta) =\frac{1}{\sqrt{\beta}}( C(\beta) - (e-1) \beta^{1/2}\gamma_\beta),
\end{equation}
with
\eqal{
    C(\beta) & = \frac{\sqrt{\pi}}{2 }\Bigg[ (1+\operatorname{erf}(\tfrac{1}{2\sqrt{\beta}}))(1-\delta)-\frac{2\sqrt{\beta}}{\sqrt{\pi}} \delta \\
    & \quad - \left((e-1)(1-2\operatorname{erf}(\tfrac{1}{2}) + \operatorname{erf}(\tfrac{1}{2\sqrt{\beta}})) \right) \Bigg].
}
Since with our standing assumption that $\delta>0$ is sufficiently small, one verifies that $C(\beta) \ge \varepsilon >0$ for some sufficiently small $\varepsilon>0$ and all $\beta $ small enough, it suffices therefore, in order to show that \eqref{eq:goal} is uniformly positive, that $\beta^{1/2}\gamma_{\beta}$ is strictly smaller than $\frac{\varepsilon}{2(e-1)}$ for a sequence of $\beta_n$ tending to zero. In fact, this implies that $C(\beta_n)-(e-1)\beta_n^{1/2}\gamma_{\beta_n}\ge c >0$ for some $c>0$ and all $n \in \mathbb N$ which appears in the right hand side of \eqref{eq:goal}.  
Thus by applying Lemma \ref{lemm:auxlemm} to $f(t)=t^{1/2} \gamma_t$, it suffices to show the finiteness of the following integral, where we without loss of generality restrict ourselves to $k \ge 4$,

\begin{equation*}
\begin{split}
\int_0^{1} \gamma_{\beta} \frac{d\beta}{\beta^{1/2}} &\stackrel{(1)}{=} 2 \int_{1}^{\infty} \gamma_{\frac{1}{\alpha^2}} \frac{d\alpha}{\alpha^2} \\
& \stackrel{(2)}{=}2 \sum_{k; \alpha_k \ge \delta} \int_1^{\infty} \indic(k)_{\Big[\tfrac{\alpha(\alpha-1)}{2},\tfrac{\alpha(\alpha+1)}{2}\Big]} \ \frac{d\alpha}{\alpha^2}\\
&\stackrel{(3)}{=} 2  \sum_{k; \alpha_k \ge \delta}  \int_1^{\infty} \indic(\alpha)_{\Big[\tfrac{\sqrt{8k+1}-1}{2},\tfrac{\sqrt{8k+1}+1}{2}\Big]} \frac{d\alpha}{\alpha^2} \\
& \stackrel{(4)}{=}  \sum_{k; \alpha_k \ge \delta} \frac{1}{k} \\
&\stackrel{(5)}\le \sum_{k} \frac{\alpha_k}{\delta k} \\
& \stackrel{(6)}= \sum_{k} \frac{K(k)}{\delta e^kk} \\
& \stackrel{(7)}= \sum_{k} \sum_{\substack{r \in A_-:\\\atop{\log(r) \in \log(A_-) \cap [k,k+1)}}}\frac{e}{\delta e^{k+1}k}\\
&\stackrel{(8)}\le \sum_{k} \sum_{\substack{r \in A_-:\\\atop{\log(r) \in \log(A_-) \cap [k,k+1)}}}\frac{e}{\delta r(\log(r)-1)} \\
& \stackrel{(9)}= \sum_{r \in A_-} \frac{e}{\delta r(\log(r)-1)}\\
&\stackrel{(10)}\le \sum_{r \in A_-} \frac{eg_r}{\delta r\log(r)^2(\log(r)-1)} <\infty,
\end{split}
\end{equation*}
where we 
\begin{enumerate}
    \item substituted $\beta=\alpha^{-2}$,
    \item  used the definition of $\gamma_\beta$,
    \item used an equivalent representation of the indicator function,
    \item evaluated the integral,
    \item used that for the $k$ we are summing over $\alpha_k/\delta>1$,
    \item used the definition of $\alpha_k,$
    \item partitioned the summation according to the definition of $K(k)$,
    \item used that $r(\log(r)-1) \le  e^{k+1}k$ on the respective partitions $\{\log(r) \in \log(A_-) \cap [k,k+1)\}$,
    \item dropped the partitioning over $k$,
    \item used the definition of $A_-$ and that the final sum has to be finite by the first condition of the FA property, \eqref{eq:one} which in our case is \eqref{eq:351}.
\end{enumerate}
\end{proof}

\section{Conclusion and Open questions}

\noindent \textbf{Conclusion}

In this article, we provide for the first time a tight continuity estimate for the classical Shannon entropy for random variables on $\mathbb N_0$ and von Neumann entropy of quantum states on a separable, infinite-dimensional Hilbert space whose energy is constrained by the number operator. 

We also provide for the first time continuity estimates for $\alpha$-R\'enyi and  $\alpha$-Tsallis entropies for $\alpha \in (0,1)$ both in classical probability theory and in quantum mechanics for infinite-dimensional Hilbert spaces. By doing so, we provide a tool to derive general continuity bounds for Hölder continuous functions of density operators. 

Finally, we show that the finiteness of the von Neumann entropy of a state does not imply the FA-property.~\\

\noindent \textbf{Future directions}

As possible future directions, we would like to propose the following list of open problems:

\begin{enumerate}
    \item Provide a tight continuity estimate for the Shannon entropy of random variables on general countable alphabets.
    \item Generalize the tight continuity bounds for the von Neumann entropy to general Hamiltonians satisfying the Gibbs hypothesis, see Def. \ref{defi:Gibbs}. Our method of proof seems to apply to this more general framework as well, but requires the solution of an optimization problem.
    \item Derive continuity estimates for different moments. Due to its fundamental relevance in the uncertainty principle, it seems also reasonable to request a bound on the variance of the energy or other appropriate quantum observables instead of the energy.
    \item Provide a tight continuity estimate for the differential entropy of random variables with densities.
    \item We give sufficient criteria for the finiteness of $\alpha$-R\'enyi and $\alpha$-Tsallis entropies for states satisfying certain energy constraints. Do there exist also necessary criteria?
    \item Investigate the tightness of the continuity bounds for $\alpha$-R\'enyi and $\alpha$-Tsallis entropies.
    \item The set of states of finite von Neumann entropy is strictly larger than the set of states satisfying the FA property. Can one obtain analogous results to the ones obtained in \cite{S21,S21b} only assuming the finiteness of the entropy?
    \item In \cite{W15} similar continuity estimates as for the von Neumann entropy have also been obtained for the conditional von Neumann entropy, cf. Lemma 17. Can one provide a tight version of that Lemma too?
    \item As in the preceding open problem, similar questions about tightness can be asked for estimates on quantum conditional mutual information, the Holevo quantity and for capacities of quantum channels, as investigated in \cite{Shirokovchannel}.
\end{enumerate}

\section*{Acknowledgments}
The authors are grateful to Maksim Shirokov for his helpful comments and for pointing out that Theorems~\ref{theo:FanoG},~\ref{theo:ContinuityG}, and~\ref{theo:ContinuityQuantumG} require an additional constraint. ND would also like to thank Yury Polyanskiy for valuable feedback.
SB~gratefully acknowledges support by
the UK Engineering and Physical Sciences Research Council (EPSRC)
grant EP/L016516/1 for the University of Cambridge Centre for Doctoral
Training, the Cambridge Centre for Analysis.
MGJ~gratefully acknowledges support from the Carlsberg Foundation under Grant CF19-0313.

\appendix

\section{Moment bounds}
We now state some moment bounds on $f_{\alpha}$ and $f_1$ using energy constraints on the state. The proofs give rise to slightly sharper bounds, that are less concise to state, than the ones we outline in the statement of the Lemma. Therefore, the reader might want to consult the proof of the following Lemma for slightly improved estimates. For notational simplicity, we use the notation $(\tr(A))^k$ to denote $\tr(A)^k$ below.
\begin{lemm}
\label{lemm:moment_bounds}
Let $\ha$ be a positive Hamiltonian with compact resolvent and $\rho$ a state. Then, as soon as the right-hand side is finite, we have the following moment bounds for the function $f_1(x)=-x\log(x)$
\begin{equation*}
\tr(\ha^{1/2} f_1(\rho)) \le \tr(\ha\rho) + \tr\left(e^{-\sqrt{\ha}}(1+\ha)\right). 
\end{equation*}
Similarly, for $f_{\alpha}(x) =x^{\alpha}$, we have
\begin{equation}
    \tr\Big( \ha^{1/2}f_{\alpha}(\rho)\Big) \le \tr(\ha\rho)+ \tr\left(\ha^{\frac{2\alpha-1}{2(\alpha-1)}}\right)
\end{equation}
for $\alpha \in (1/2,1)$, and
\begin{equation}
    \tr(\ha^{\beta} f_{\alpha}(\rho)) \le \tr(\ha\rho)^{\alpha} \tr(\ha^{-\frac{r}{1-\alpha}})
\end{equation}
for $\beta=\alpha-r$ and $r \in (0,\alpha]$.
Furthermore, in terms of the number operator, we find $\tr(\ha^{-\frac{r}{1-\alpha}})<\infty$ for $\ha =\hat N+1$ and $\alpha>1/2$ as well as $\ha =  (\hat N+1)^{\kappa}$ for $\kappa<\frac{1-\alpha}{r}$, with $r=\alpha-\varepsilon$ for $1/2 \ge \alpha >\varepsilon>0.$ 
\end{lemm}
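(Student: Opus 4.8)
The plan is to reduce every trace of the form $\tr(\ha^s f(\rho))$ to a classical sum over the eigenvalues of $\ha$, and then to invoke, in each case, one elementary scalar inequality. Fix an orthonormal eigenbasis $\{|i\rangle\}_i$ of $\ha$ with $\ha|i\rangle=\mu_i|i\rangle$ (which exists since $\ha$ has compact resolvent), and set $p_i:=\langle i|\rho|i\rangle\ge 0$, so that $\sum_i p_i=\tr\rho=1$ and $\sum_i\mu_i p_i=\tr(\ha\rho)$. Since $f_1(x)=-x\log x$ and $f_\alpha(x)=x^\alpha$, $\alpha\in(0,1)$, are concave on $[0,\infty)$, Jensen's inequality applied to the spectral measure of $\rho$ in the vector state $|i\rangle$ gives the pinching bound $\langle i|f(\rho)|i\rangle\le f(p_i)$. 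As $\ha^s$ is diagonal in $\{|i\rangle\}$, this yields $\tr(\ha^s f(\rho))=\sum_i\mu_i^s\langle i|f(\rho)|i\rangle\le\sum_i\mu_i^s f(p_i)$, where the spectral-truncation definition of the left-hand side from the preliminaries guarantees that all rearrangements below are legitimate by monotone convergence.

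For the Hölder-type bound $\tr(\ha^\beta f_\alpha(\rho))\le\tr(\ha\rho)^\alpha\,\tr(\ha^{-r/(1-\alpha)})$ I would start from $\tr(\ha^\beta\rho^\alpha)\le\sum_i\mu_i^\beta p_i^\alpha$ and write $\mu_i^\beta p_i^\alpha=(\mu_i p_i)^\alpha\mu_i^{-r}$ using $\beta=\alpha-r$. Hölder's inequality with exponents $1/\alpha$ and $1/(1-\alpha)$ then gives $\sum_i(\mu_i p_i)^\alpha\mu_i^{-r}\le(\sum_i\mu_i p_i)^\alpha(\sum_i\mu_i^{-r/(1-\alpha)})^{1-\alpha}$, i.e.\ the sharper bound $\tr(\ha\rho)^\alpha\tr(\ha^{-r/(1-\alpha)})^{1-\alpha}$. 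The stated (less concise) form follows by discarding the exponent $1-\alpha$, which is licit whenever $\tr(\ha^{-r/(1-\alpha)})\ge 1$, in particular for the number operator, whose smallest eigenvalue is $1$. The bound $\tr(\ha^{1/2}f_\alpha(\rho))\le\tr(\ha\rho)+\tr(\ha^{(2\alpha-1)/(2(\alpha-1))})$ is then the special case $\beta=1/2$ (so $r=\alpha-1/2$ and $-r/(1-\alpha)=(2\alpha-1)/(2(\alpha-1))$), after converting the product into a sum by Young's inequality $A^\alpha B^{1-\alpha}\le\alpha A+(1-\alpha)B\le A+B$; note that this route avoids the $\ge 1$ caveat altogether.

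For the $f_1$ bound it remains, after pinching, to estimate $\sum_i\mu_i^{1/2}(-p_i\log p_i)$. The key elementary inequality is $-p\log p\le pE+e^{-E}$, valid for every $p\in(0,1]$ and every $E\ge 0$; it is obtained by minimizing the right-hand side over $E$, the optimum $E=-\log p\ge 0$ giving the value $-p\log p+p$. Choosing $E=\sqrt{\mu_i}$ produces $\mu_i^{1/2}(-p_i\log p_i)\le\mu_i p_i+\sqrt{\mu_i}\,e^{-\sqrt{\mu_i}}\le\mu_i p_i+(1+\mu_i)e^{-\sqrt{\mu_i}}$, where the last step uses $\sqrt{\mu}\le 1+\mu$. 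Summation over $i$ reproduces $\tr(\ha\rho)+\tr(e^{-\sqrt{\ha}}(1+\ha))$.

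Finally, the concrete statements reduce to convergence of zeta-type series. For $\ha=\hat N+1$ one has $\tr(\ha^{-s})=\sum_{n\ge 0}(n+1)^{-s}=\zeta(s)$, finite iff $s>1$; taking $s=r/(1-\alpha)$, finiteness holds as soon as $r>1-\alpha$, and since $r$ ranges over $(0,\alpha]$ such a choice exists precisely when $\alpha>1/2$. For $\alpha\le 1/2$ the operator $\hat N+1$ no longer suffices, so one passes to $\ha=(\hat N+1)^\kappa$, for which $\tr(\ha^{-r/(1-\alpha)})=\zeta(\kappa r/(1-\alpha))$ converges exactly when the zeta argument $\kappa r/(1-\alpha)$ exceeds $1$; with $r=\alpha-\varepsilon$ this pins down the admissible powers $\kappa$. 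I expect the only genuine subtlety to be the non-commutativity of $\ha$ and $\rho$: the whole argument hinges on justifying the passage to the classical sum over $\{\mu_i,p_i\}$ through the pinching/Jensen step together with the spectral-truncation definition of $\tr(\ha^s f(\rho))$, after which every remaining step is a scalar Hölder, Young, or mean-value estimate.
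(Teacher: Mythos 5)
Your proposal is correct, but it follows a genuinely different route from the paper's own proof. The paper works in the eigenbasis of $\rho$: writing $\rho=\sum_k\lambda_k\pi_k$, it uses Jensen's inequality in the form $\tr(\ha^{\beta}\pi_k)\le\tr(\ha\pi_k)^{\beta}$, performs a case analysis on the index set (splitting according to whether $\lambda_k$ exceeds $e^{-\tr(\ha^{1/2}\pi_k)}$ and whether $\tr(\ha^{1/2}\pi_k)\ge 1$), and only at the very end passes to the eigenbasis of $\ha$ via the Courant--Fischer theorem (Proposition \ref{prop:CFT}). You instead move to the eigenbasis of $\ha$ at the outset through the pinching/Jensen step $\langle i|f(\rho)|i\rangle\le f(\langle i|\rho|i\rangle)$ (valid since $f_1,f_{\alpha}$ are concave and vanish at $0$), after which Courant--Fischer is never needed and everything is a scalar estimate in $(\mu_i,p_i)$. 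Your route is also more unified: the $\alpha\in(1/2,1)$ bound becomes a corollary of the general H\"older bound via Young's inequality $A^{\alpha}B^{1-\alpha}\le A+B$, whereas the paper proves it by a separate case-splitting, and the $f_1$ bound follows from the single inequality $-p\log p\le pE+e^{-E}$ with $E=\sqrt{\mu_i}$ instead of the paper's four-set decomposition. What the paper's approach buys is slightly sharper intermediate bounds expressed in terms of $\tr(\ha\pi_k)$ (the ``slightly sharper bounds'' alluded to before the lemma); what yours buys is brevity and a single reusable mechanism. Both derivations in fact give the sharper estimate $\tr(\ha\rho)^{\alpha}\tr(\ha^{-r/(1-\alpha)})^{1-\alpha}$, and you are right that dropping the exponent $1-\alpha$ requires $\tr(\ha^{-r/(1-\alpha)})\ge1$, which holds automatically for $\hat N+1$. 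Finally, your convergence criterion for $\ha=(\hat N+1)^{\kappa}$, namely $\kappa r/(1-\alpha)>1$, i.e.\ $\kappa>(1-\alpha)/r$, is the correct one; note that the lemma as stated writes $\kappa<(1-\alpha)/r$, which appears to be a typo, since it would make the zeta argument smaller than $1$ and the trace divergent.
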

\begin{proof}
For the following computations, we recall the H\"older inequality $\tr(\ha^{\beta}\pi_k) \le \tr(\ha\pi_k)^{\beta}$ for $\beta \in (0,1)$. Then writing $I \coloneqq \{k;1 \le \tr(\ha^{1/2}\pi_k)\}$, $I^c \coloneqq \mathbb{N} \setminus I$, $J \coloneqq \{k ;\log( \lambda_k) > -\tr(\ha^{1/2}\pi_k)\}$ and $J^c \coloneqq \mathbb{N} \setminus J$, we have

\begin{equation}
\begin{split}
 &\tr(\ha^{1/2} f_1(\rho))= -\sum_{k \in \mathbb N} \lambda_k \log(\lambda_k) \tr(\pi_k \ha^{1/2} ) \\
 & \stackrel{(1)}{\le} \sum_{k \in J} \lambda_k\tr(\pi_k \ha^{1/2} )^2 \\
 & +  \sum_{k \in J^c} e^{-\tr(\ha^{1/2}\pi_k) } \Big( \tr(\pi_k \ha^{1/2} )^2 \indic_I(k) \\
 & \quad \quad \quad \quad \quad \quad \quad \quad \quad \quad + \tr(\pi_k \ha^{1/2})    \indic_{I^c} \Big) \\
  & \stackrel{(2)}{\le} \tr(\ha\rho) + \sum_{k \in J^c} e^{-\tr(\ha^{1/2}\pi_k) } \Big( \tr(\pi_k \ha^{1/2} )^2 \indic_I(k) \\
  & \quad \quad \quad \quad \quad \quad \quad \quad \quad \quad + \tr(\pi_k \ha^{1/2}) \indic_{I^c}(k) \Big),
 \end{split}
\end{equation}

where (1) follows by separately estimating different eigenvalues and using that $x^2 \ge x$ for $x \ge 1$ and (2) follows from H\"older's inequality. 
We find using a similar splitting for $f_{\alpha}(x)=x^{\alpha},$ 
\begin{equation}
\begin{split}
 & \tr(\ha^{1/2} f_{\alpha}(\rho)) \\
 &\le \sum_{k\in \mathbb N} \lambda_k \lambda_k^{\alpha-1} \tr(\ha\pi_k)^{1/2} \\
 &\stackrel{(1)}\le  \sum_{\substack{k\\ \lambda_k^{\alpha-1}<\tr(\ha\pi_k)^{1/2}}} \lambda_k \tr(\ha\pi_k)  \\
 & \quad +  \sum_{\substack{k \\\lambda_k^{\alpha-1}>\tr(\ha\pi_k)^{1/2}}} \tr(\ha\pi_k)^{\frac{1}{2}(1-\frac{\alpha}{1-\alpha})}\\
&\le \tr(\ha\rho) +\sum_{\substack{k \\\lambda_k^{\alpha-1}>\tr(\ha\pi_k)^{1/2}}} \frac{1}{ \tr(\ha\pi_k)^{\frac{2\alpha-1}{2(1-\alpha)}}},
 \end{split}
\end{equation}
where in (1) we use that $\lambda^{\alpha-1}\le \tr(\hat{H}\pi_k)$ in the first term and $\lambda^{\alpha} \le \tr(\hat \pi_k)^{-\frac{\alpha}{2(1-\alpha)}}$ in the second term.
Aside from special cases, this bound is only saleable for $\alpha>1/2.$ To satisfactorily treat also the cases $\alpha \in (0,1)$, by choosing $p=1/\alpha$ and its H\"older conjugate $q=1/(1-\alpha)$, we have by H\"older's inequality that for $\beta =\alpha-r$ and any $r \in (0,\alpha)$
\begin{equation}
\begin{split}
& \tr(\ha^{\beta} f_{\alpha}(\rho)) \\
& = \sum_{k \in \mathbb N} \lambda_k^{\alpha } \tr(\ha^{\beta}\pi_k)\tr(\ha\pi_k)^{r} \tr(\ha\pi_k)^{-r} \\
 &\le \left(\sum_{k \in \mathbb N} \lambda_k \tr(\ha\pi_k)^{\frac{\beta+r}{\alpha}}  \right)^{\alpha} \left(\sum_{k \in \mathbb N} \tr(\ha\pi_k)^{-\frac{r}{1-\alpha}} \right)^{1-\alpha} \\
 &= \tr(\ha\rho)^{\alpha} \left(\sum_{k \in \mathbb N} \tr(\ha\pi_k)^{-\frac{r}{1-\alpha}} \right)^{1-\alpha}.
 \end{split}
\end{equation}
Finally, we have using in (1) that $e^{-x}\operatorname{max}\{x^2,x\}\le e^{-x}(x^2+1)$ and in (2) that $e^{-x}(x^2+1)$ is monotonically decreasing together with the Courant-Fischer theorem stated in Proposition~\ref{prop:CFT}, that for the eigenbasis of the Hamiltonian with rank-1 projections $p_k$ corresponding to the ordered eigenvalues $\lambda_1 \le \lambda_2\le...$ of the Hamiltonian
\begin{equation}
\begin{split}
&\sum_{k \in J^c} e^{-\tr(\ha^{1/2}\pi_k) } \Big( \tr(\pi_k \ha^{1/2} )^2  \indic_{I}(k) \\
& \quad \quad \quad \quad \quad \quad \quad \quad \quad \quad + \tr(\pi_k \ha^{1/2})  \indic_{I^c}(k) \Big) \\
&\stackrel{(1)}\le \sum_{k \in J^c} e^{-\tr(\ha^{1/2}\pi_k) } \left(1+ \tr(\pi_k \ha^{1/2} )^2  \right) \\
&\stackrel{(2)}\le  \sum_{k \in J^c} e^{-\tr(\ha^{1/2}p_k) } \left(1+ \tr( \ha^{1/2}p_k )^2  \right) \\
& \stackrel{(3)}\le\Tr(e^{-\sqrt{\ha}}(\hat{H}+1)),
\end{split}
\end{equation}
where we dropped the constraint on $k$ in (3).

Analogously, we have again used monotonicity and the Courant Fischer theorem
\begin{equation}
\begin{split}
&\sum_{k} \frac{1}{ \tr(\ha\pi_k)^{\frac{2\alpha-1}{2(1-\alpha)}} }  \le \tr\left(\ha^{-\frac{2\alpha-1}{2(1-\alpha)}}\right), \\
& \sum_{k \in \mathbb N} \tr(\ha\pi_k)^{-\frac{r}{1-\alpha}} \le \tr\left(\ha^{-\frac{r}{1-\alpha}}\right).
\end{split}
\end{equation}
\end{proof}

\bibliographystyle{IEEEtran}
\bibliography{BiblioInfoTheory}

\begin{IEEEbiographynophoto}{Simon Becker}
	is a mathematical physicist who has completed his undergraduate studies in mathematics and physics at the Free University of Berlin and Ludwig Maximilian University in Munich. After obtaining his master's degree, he obtained a PhD in applied mathematics from the University of Cambridge in 2021.

Simon spent a year at the Courant Institute at New York University, following the completion of his PhD, where he continued his research in mathematical aspects of condensed matter physics. Currently, he is a postdoctoral researcher in the mathematics department at ETH Zurich.
\end{IEEEbiographynophoto}

\begin{IEEEbiographynophoto}{Nilanjana Datta}
	obtained a PhD in mathematical physics  from ETH Zurich, Switzerland, in 1996.
From 1997 to 2000, she was a Postdoctoral Researcher at the Dublin
Institute of Advanced Studies, C.N.R.S. Marseille, and EPFL in
Lausanne. In 2001 she joined the University of Cambridge, U.K., as a
Lecturer in Mathematics of Pembroke College, and a member of the
Statistical Laboratory in the Centre for Mathematical Sciences. She is
currently a Professor of Quantum Information Theory in the Department of Applied Mathematics
and Theoretical Physics of the  University of Cambridge, and a Fellow of Pembroke College. Her
scientific interests include quantum information theory and mathematical physics.
\end{IEEEbiographynophoto}

\begin{IEEEbiographynophoto}{Michael G. Jabbour}
	received the B.S. and M.S. degrees in physics engineering from \'Ecole polytechnique de Bruxelles, Universit\'e libre de Bruxelles, Brussels, in 2013 and the Ph.D. degree in engineering sciences from \'Ecole polytechnique de Bruxelles, Universit\'e libre de Bruxelles, Brussels, in 2018.

	From 2019 to 2020, he was a Postdoctoral Researcher in the Department of Applied Mathematics and Theoretical Physics, in the Centre for Mathematical Sciences of the University of Cambridge. Since 2021, he has been a Postdoctoral Researcher in the Department of Physics of the Technical University of Denmark. His research interests include quantum optics, quantum information theory and mathematical physics.
\end{IEEEbiographynophoto}

\end{document}